\pgfplotsset{compat=1.18}
\newcommand{\rank}{\mathbf{rank}}
\newcommand{\indf}{\mathbf{1}}
\newcommand{\lidx}{\mathcal{I}}
\newcommand{\suppq}{\mathcal{Q}}
\newcommand{\cL}{\mathcal{L}}
\newcommand{\cH}{\mathcal{H}}
\newcommand{\cF}{\mathcal{F}}
\newcommand{\cP}{\mathcal{P}}
\newcommand{\tT}{\widetilde{T}}
\DeclarePairedDelimiter\abs{\lvert}{\rvert}
\DeclarePairedDelimiter\norm{\lVert}{\rVert}
\newcommand{\prob}[1]{\mathbb{P}\left(#1\right)}
\newcommand{\expecs}[2]{\mathbb{E}_{#1}\left[#2\right]}
\newcommand{\expecbs}[2]{\mathbb{E}_{#1}\big[#2\big]}
\def\Inf{\operatornamewithlimits{inf\vphantom{p}}}
\DeclareMathOperator{\spn}{span}
\newcommand{\defeq}{\overset{\mathrm{def}}{=\joinrel=}}
\newcommand{\eps}{\varepsilon}
\newcommand{\pr}{\mathbb{P}}
\renewcommand{\phi}{\varphi}
\let\oldsqrt\sqrt
\def\DHLhksqrt#1#2{%
\setbox0=\hbox{$#1\oldsqrt{#2\,}$}\dimen0=\ht0
\advance\dimen0-0.2\ht0
\setbox2=\hbox{\vrule height\ht0 depth -\dimen0}%
{\box0\lower0.4pt\box2}}
\newcommand{\review}[1]{#1}
\newcommand\newtag[2]{#1\def\@currentlabel{#1}\label{#2}}
\title{Lower Bounds for the Algorithmic Complexity of Learned Indexes}
\author{Luis Alberto {Croquevielle}}{Imperial College London, London, United Kingdom}{a.croquevielle22@imperial.ac.uk}{https://orcid.org/0009-0002-0101-4431}{}
\author{Roman Sokolovskii}{Imperial College London, London, United Kingdom}{r.sokolovskii@imperial.ac.uk}{}{}
\author{Thomas Heinis}{Imperial College London, London, United Kingdom}{t.heinis@imperial.ac.uk}{}{}
\authorrunning{L.A. Croquevielle, R. Sokolovskii, T. Heinis}
\keywords{
    Learned Indexes,
    Stochastic Processes,
    Approximation Theory
}
\begin{document}

\maketitle

\begin{abstract}
    Learned index structures aim to accelerate queries by training machine learning models to approximate the rank function associated with a database attribute. While effective in practice, their theoretical limitations are not fully understood. We present a general framework for proving lower bounds on query time for learned indexes, expressed in terms of their space overhead and parameterized by the model class used for approximation. Our formulation captures a broad family of learned indexes, including most existing designs, as piecewise model-based predictors.

    We solve the problem of lower bounding query time in two steps: first, we use probabilistic tools to control the effect of sampling when the database attribute is drawn from a probability distribution. Then, we analyze the approximation-theoretic problem of how to optimally represent a cumulative distribution function with approximators from a given model class.
    Within this framework, we derive lower bounds under a range of modeling and distributional assumptions, paying particular attention to the case of piecewise linear and piecewise constant model classes, which are common in practical implementations.

    Our analysis shows how tools from approximation theory, such as quantization and Kolmogorov widths, can be leveraged to formalize the space-time tradeoffs inherent to learned index structures. The resulting bounds illuminate core limitations of these methods.
\end{abstract}

\section{Introduction}
\label{sec:1}


Efficient query processing is a fundamental challenge in database systems. For relational databases, key areas of study include join optimization techniques \cite{yannakakis1981algorithms, koutris2025quest}, cardinality estimation methods \cite{atserias2013size, suciu2023applications, leis2015good}, and spatial indexing for multidimensional data \cite{finkel1974quad, guttman1984rtrees}. In general, improvements in query latency involve trade-offs with other resources, such as space overhead or \review{the number of} I/O operations.

For one-dimensional indexing,
tree-based structures such as B+trees \cite{graefe2011modern}
are the predominant choice in practice. These methods aim to answer point and range queries over a single, ordered attribute. Typically, the attribute values (or \textit{keys}) are stored in a sorted structure $A$, and queries are expressed as search operations over $A$. In particular, a point query for a value $q$ requires locating the position of $q$ in $A$, while a range query over an interval $[q, q']$ requires identifying the first key $\geq q$ and scanning forward until a key $> q'$ is reached.

In recent years, machine learning has emerged as a promising tool for query processing. A prominent example for one-dimensional indexing is the Recursive Model Index \cite{kraska2018case}, which maintains a hierarchical structure similar to B+trees but embeds predictive models in internal nodes to estimate key positions within subsets of sorted data.  This work has inspired a broad line of research into ``learned indexes'' \cite{galakatos2019fiting, kipf2020radixspline, hadian2021shift, ding2020alex}, which extend this idea to support dynamic workloads \cite{ding2020alex, wu2021updatable}, streaming data \cite{yang2023flirt, liang2024swix}, and multi-dimensional queries \cite{nathan2020learning, ding2020tsunami, li2020lisa}.

While empirical studies have highlighted their potential advantages over traditional methods \cite{marcus2020benchmarking, sun2023learned}, learned indexes generally lack formal guarantees. Their performance has also been observed to vary across different datasets and query workloads, in contrast to the
consistent behavior of well-established methods such as B+trees. Proving algorithmic guarantees for learned indexes
and characterizing the data-dependent behavior of these methods are all important theoretical questions that have been explored only partially.

In this work, we introduce a formal framework that captures a broad class of learned index structures.
Within this framework, we establish lower bounds on
query time, explicitly characterizing the trade-off between time and space complexity. These bounds hold under general assumptions,
covering a wide range of data-generating processes. \review{As an interesting application, these lower bounds allow us to identify a parameter regime where learned indexes offer no asymptotic advantage over traditional methods}.

The paper is organized as follows: in Section~\ref{sec:2}, we present a more detailed account of related work and our own contributions. Section~\ref{sec:3} introduces the mathematical setting and notation. Section~\ref{sec:4-new} develops our general theoretical approach, while Sections~\ref{sec:6} and~\ref{sec:7} instantiate this framework by applying it to specific function approximation scenarios. Finally, Section~\ref{sec:8} contains concluding remarks and a discussion of future work.



\section{Related Work and Contributions}
\label{sec:2}
The next section introduces our full notation, but we provide some preliminary definitions here. We model an attribute $A$ in a database relation as formed by a sequence $X_1, \ldots, X_n$ of numeric values.
The central object of study in our analysis is the $\rank$ function, defined as $\rank(q) = \sum_{i=1}^n \indf_{(-\infty, X_i]}(q)$, which counts how many keys are less than or equal to a given query value $q$. In the context of learned indexes, query time is typically understood as the time required to evaluate $\rank(q)$.

Classical methods such as B+trees are agnostic to the process by which data is generated. In contrast, learned indexes derive performance guarantees by making assumptions about it, e.g., that the $\{X_i\}$ are i.i.d. with respect to some distribution. This is to be expected in light of related areas of research, in particular statistical learning theory, where generalization guarantees usually rely on i.i.d. assumptions for both the training and testing data.

As a baseline, B+trees support search, insertion, and deletion in $O(\log n)$ time using $O(n)$ space, with worst-case guarantees that hold independently of the data-generating process. Initial theoretical analysis of learned indexes, beginning with the PGM-Index \cite{ferragina2020pgm}, showed performance comparable to B+trees in the static setting, but with $O(\log^2 n)$ search complexity for dynamic workloads. Under (restrictive) assumptions on the data-generating process, a constant improvement in space complexity relative to B+trees was shown \cite{ferragina2020learned}.

More recent work has established that specific learned indexes can asymptotically outperform B+trees in expectation, for the case of static databases. Zeighami and Shahabi \cite{zeighami2023distribution} proved that $O(\log (\log n))$ expected search time is achievable with linear space. Croquevielle et al.~\cite{croquevielle_et_al:LIPIcs.ICDT.2025.19} further improved this to $O(1)$ expected time.
For dynamic databases, Zeighami and Shahabi \cite{zeighami2024theoretical} considered a model where the data distribution can vary over time, with a parameter $\delta$ capturing the amount of distribution shift. They introduce a learned index that supports insertions and queries in $O(\log(\log n))$ expected time when $\delta = 0$ (i.e., no distribution shift), an asymptotic improvement over B+trees. For $\delta > 0$, an extra $O(\log(\delta n))$ cost is incurred, so that asymptotic complexity remains the same as for B+trees.

A complementary line of work was recently introduced by Zeighami and Shahabi \cite{zeighami2024towards}, who study lower bounds for learned database operations from an information-theoretic perspective. In their framework, a learned index is modeled as a parameterized function $f_\theta$ where the parameters $\theta$ are encoded using a total budget of $\sigma$ bits.
The authors derive lower bounds on the model size $\sigma$, showing that if it is too small, there exists at least one input sequence $X_1, \ldots, X_n$ for which a desired level of approximation accuracy cannot be satisfied.

Our work takes a fundamentally different approach. Instead of analyzing a general function $f_\theta$, we focus on a structured class $\lidx$ of indexing algorithms that captures common design elements found in most learned index constructions. While this narrows the scope of our work compared to the general setting of \cite{zeighami2024towards}, it enables sharper lower bounds that are more directly relevant to learned indexes as commonly defined in the literature. Within this class, we establish lower bounds on query time as a function of the space overhead used by the indexing structure.


More specifically, we model learned indexes as directed acyclic graphs (DAGs), where internal nodes direct the search, and sink nodes encode predictive models and store the keys. We develop a general framework for proving query time lower bounds as a function of the model class complexity, the number of sink nodes $K$, the size of the dataset $n$, and the choice of local search algorithm (e.g., linear or binary search).


\review{
    For a broad class of learned indexes, capturing most known designs, we prove that query time is lower bounded by $\Omega(\log(n/K))$ in the worst case. A key implication is that, in order to achieve asymptotic improvement over traditional structures like B+trees (which guarantee $O(\log n)$ query time with linear space), a learned index must use nearly linear space. Specifically, if the number of sink nodes satisfies $K = O(n^\alpha)$ for any $0 < \alpha < 1$, the worst-case query time remains $\Omega(\log n)$---matching that of B+trees. Thus, learned indexes can only hope for worst-case asymptotic gains when space usage is close to linear. When the model class is restricted to piecewise constant functions, this same conclusion further applies to \textit{average-case query time}, where the expectation is taken over the query distribution.
}

\section{Preliminaries}
\label{sec:3}

\subsection{Data as Stochastic Process}

We consider a probabilistic model for a database attribute. Formally, we model the attribute as a stochastic process $X = \{X_i\}_{i \in \mathbb{N}}$, where each $X_i$ is a real-valued random variable. At time $n \in \mathbb{N}$, the attribute consists of the first $n$ values in the sequence, sorted in increasing order. We denote the sorted values by $X_{(1)} \leq \ldots \leq X_{(n)}$, so that the attribute is represented as an array $A_n$ with $A_n[i] = X_{(i)}$ for all $i \in \{1, \ldots, n\}$.
This framework accommodates both worst-case and average-case analyses: we may consider a single sequence $X_1, \ldots, X_n$ to study worst-case behavior, or incorporate the probability measure $\mathbb{P}$ governing the stochastic process to analyze expected performance.

No complexity lower bounds will apply in full generality. For instance, if $X_i = i$, then the rank function $\rank$ can be computed in $O(1)$ time and space, and existing structures such as the PGM-Index easily attain that performance. In this work, we derive lower bounds under the assumption that the $X_i$ are independent random variables drawn according to a common cumulative distribution function (CDF) $F$. We further assume that the query value $q$ is generated independently from the $X_i$ according to a probability measure $\mu$ which
has support $\suppq\subseteq \mathbb{R}$.\footnote{Formally, one must specify measurable spaces for the data and query distributions. For simplicity, we assume that all relevant spaces are equipped with the Borel $\sigma$-algebra and that all functions we integrate are measurable; these assumptions are standard and introduce no meaningful restrictions.
}
We emphasize that independence assumptions are only required for average-case analyses; our worst-case bounds hold unconditionally, though independence is used as a technical tool in their derivation.

Throughout the paper, we use $\expecs{X}{\cdot}$ to denote expectation with respect to the measure $\mathbb{P}$ governing the key sequence $\{X_i\}_{i=1}^n$. The notation $\expecs{q}{\cdot}$ denotes expectation with respect to the query distribution $\mu$, treating the key values $\{X_i\}_{i=1}^n$ as fixed (i.e., conditioning on a realization of the data). When expectations are taken jointly, we write $\expecs{X,q}{\cdot}$, corresponding to integration with respect to the product measure $\mathbb{P} \otimes \mu$.

\subsection{Computational Model and Learning Problem}
\label{sec:3-learning-problem}

As introduced in Section~\ref{sec:2}, we consider
the $\rank$ function over the attribute $A_n$, given by
\begin{equation*}
    \rank_n(q) = \sum_{i=1}^n \indf_{(-\infty, X_i]}(q)\,,
\end{equation*}
where $\indf_U$ denotes the indicator function of a set $U$. Hence, $\rank_n(q)$ maps a query value $q \in \mathbb{R}$ to the number of keys less than or equal to it. Since it is usually clear from context, we omit the sub-index $n$ and write $\rank$ instead. In learned indexes, the goal is to accurately approximate $\rank(q)$ while minimizing the time and space overhead needed to compute it.

We adopt the standard RAM model of computation under the uniform cost criterion \cite{aho1974design}, where each basic instruction requires one time unit and each register, storing an arbitrary integer, uses one space unit. \textit{Time complexity} is defined as the number of basic instructions required to evaluate $\rank(q)$ using an index. \textit{Space complexity} measures the redundant memory used by the index, excluding the space needed to store the attribute $A_n$ itself.

Let $I$ be an indexing procedure that builds a data structure $I(A_n)$ from the attribute $A_n$. For a query value $q$, let $T(I(A_n), q)$ denote the time required to compute $\rank(q)$ using the data structure $I(A_n)$, and let $S(I(A_n))$ denote the space overhead used by $I(A_n)$. Since it is always clear from context, we usually write $T(q)$ in place of $T(I(A_n), q)$.

\review{
    \subsection{Learned Index Structures}
    \label{sec:3-dag-model}
    
    We model a learned index as a directed acyclic graph (DAG), as shown in Figure \ref{fig:dag-structure}. Internal nodes implement routing functions, and the sink nodes contain predictive models and store the keys. Each query follows a path from the root to a sink node, where a prediction of its rank is refined via local search. This structure captures a wide range of learned indexes, most of which include hierarchical routing---such as RMI~\cite{kraska2018case}, ALEX~\cite{ding2020alex}, and PGM-Index~\cite{ferragina2020pgm}.

    For the purpose of lower bounds, we focus on the \textit{local correction step}. Regardless of how queries are routed, the exact rank of $q$ is ultimately recovered by searching around the predicted value $h(q)$ in the sorted array. Since we aim for lower bounds that hold regardless of internal structure---which can vary substantially between different designs---the cost of this local search becomes the main driver of query complexity in our analysis.

    We therefore abstract the index as a piecewise function defined by its sink-node models, and let $K$ denote the number of sinks. This corresponds to a partitioning of the query domain into $K$ regions $\{I_v\}_{v=1}^K$, each associated with a model $h_v$. Although we do not analyze the routing structure, we include it in the model to emphasize that our lower bounds apply regardless of graph topology or the complexity of the routing functions. Understanding how this structure impacts query time remains an open question and may lead to sharper bounds.

    
    We denote by $\lidx$ the class of learned indexes defined in this way, and by $\lidx_{K, \cL}$ those whose sink-node models belong to a function class $\cL$ (e.g., $\cL$ could be the set of polynomial functions) and use at most $K$ sinks. For the rest of the paper, we use $K$ as a lower bound (and proxy) for space usage, and we focus on how the approximation power of $\cL$ interacts with $K$ to characterize performance.
}

\begin{figure}[ht]
\centering
\begin{tikzpicture}[
    every node/.style={font=\small},
    internal/.style={circle, draw, fill=blue!20, minimum size=1.2em},
    model/.style={rectangle, draw, fill=green!20, minimum width=1.8em, minimum height=1.2em},
    pathnode/.style={thick, fill=blue!40},
    pathmodel/.style={thick, fill=green!40},
    rootnode/.style={circle, draw=black, thick, fill=blue!40, minimum size=2em},
    scale=1.1, >=Stealth
  ]

\node[rootnode] (r) at (0, 6) {$r$};
\node[internal, pathnode] (a) at (-2, 4.5) {$\rho_1$};
\node[internal] (b) at (0, 4.5) {$\rho_2$};
\node[internal] (c) at (2, 4.5) {$\rho_3$};

\node[model] (f1) at (-3, 3) {$h_1$};
\node[model, pathmodel] (f2) at (-1.5, 3) {$h_2$};
\node[model] (f3) at (0, 3) {$h_3$};
\node[model] (f4) at (1.5, 3) {$h_4$};
\node[model] (f5) at (3, 3) {$h_5$};

\node at (0, 6.8) (input) {\texttt{rank(q = 0.15)}};
\draw[->, thick] (input.south) -- (r.north);

\draw[->, thick] (r) -- (a); 
\draw[->] (r) -- (b);
\draw[->] (r) -- (c);

\draw[->] (a) -- (f1);
\draw[->, thick] (a) -- (f2); 
\draw[->] (b) -- (f2);
\draw[->] (b) -- (f3);
\draw[->] (c) -- (f4);
\draw[->] (c) -- (f5);

\draw[thick] (-3.5, 0) -- (3.5, 0);
\foreach \x in {-2.5,-0.5,0.5,2.5} {
  \draw[thick, green!60!black, dashed] (\x,-0.1) -- (\x,2);
}

\node at (-3.5, 0.3) {$0$};
\node at (3.5, 0.3) {$1$};
\draw[very thick, green!60!black] (-3.5,0.15) -- (-3.5,-0.15);
\draw[very thick, green!60!black] (3.5,0.15) -- (3.5,-0.15);

\def\offset{0}
\def\totalkeys{0}
\def\data{2/-3.5/-2.5,7/-2.5/-0.5,2/-0.5/0.5,5/0.5/2.5,2/2.5/3.5}

\foreach \n/\xl/\xr in \data {
  \pgfmathsetmacro{\next}{\totalkeys + \n}
  \xdef\totalkeys{\next}
}
\pgfmathsetmacro{\totalkeys}{\totalkeys / 2}
\xdef\totalkeys{\totalkeys}

\pgfmathsetseed{84}  

\foreach \n/\xleft/\xright in \data {
    \pgfmathsetmacro{\stepwidth}{(\xright - \xleft)/\n}
        
    \foreach \j in {0,...,\numexpr\n-1} {
        \pgfmathsetmacro{\xstart}{\xleft + \stepwidth*\j}
        \pgfmathsetmacro{\xend}{\xleft + \stepwidth*(\j+1)}
        \pgfmathsetmacro{\yval}{(\offset + \j)/\totalkeys}
        \pgfmathsetmacro{\yvalend}{(\offset + \j + 1)/\totalkeys} 
        \pgfmathsetmacro{\keypos}{(\xstart + \xend)/2}
        \pgfmathsetmacro{\keyindex}{\offset + \j + 1}
        \pgfmathtruncatemacro{\keyindex}{\keyindex}
        
        \draw[thick, blue!70!black] ({\xstart}, {\yval}) -- ({\xend}, {\yval});
        \draw[thick, blue!50!black] ({\xend}, {\yval}) -- ({\xend}, {\yvalend});

        \ifnum\keyindex=4
            \draw[fill=blue!30] ({\xstart}, -0.6) rectangle ({\xend}, -0.2);
            \node (X_true) at ({\keypos}, -0.4) {$\scriptscriptstyle X_{\keyindex}$};
        \else\ifnum\keyindex=7
            \draw[fill=red!30] ({\xstart}, -0.6) rectangle ({\xend}, -0.2);
            \node (X_pred) at ({\keypos}, -0.4) {$\scriptscriptstyle X_{\keyindex}$};
        \else
            \draw[fill=gray!15] ({\xstart}, -0.6) rectangle ({\xend}, -0.2);
            \node at ({\keypos}, -0.4) {$\scriptscriptstyle X_{\keyindex}$};
        \fi\fi
    }
    \typeout{rand = rand}
    \pgfmathsetmacro{\vertstart}{rand * (\n / 2) + 0.5}
    \pgfmathsetmacro{\vertend}{rand * (\n / 2) + 0.5}
    \draw[thick, red!60]
    (\xleft,{(\offset - \vertstart)/\totalkeys}) 
        -- (\xright,{(\offset + \n - \vertend) / \totalkeys});
    \fill[red!60] (\xleft,{(\offset - \vertstart)/\totalkeys}) circle[radius=1pt];
    \fill[red!60] (\xright,{(\offset + \n - \vertend) / \totalkeys}) circle[radius=1pt];
    
    \pgfmathsetmacro{\newoffset}{\offset + \n}
    \xdef\offset{\newoffset}
}

\node (true) at ([yshift=-6pt] X_true.south) {\scriptsize \texttt{true}};
\node (pred) at ([yshift=-6pt] X_pred.south) {\scriptsize \texttt{pred}};
\draw[->, blue!70!black] (pred) -- (true)
    node[midway, below=1pt] {\tiny \texttt{local search}};

\draw[->, dashed] (f1.south) -- (-3, 2.3);
\draw[->, dashed] (f2.south) -- (-1.5, 2.3);
\draw[->, dashed] (f3.south) -- (0, 2.3);
\draw[->, dashed] (f4.south) -- (1.5, 2.3);
\draw[->, dashed] (f5.south) -- (3, 2.3);

\node at (-3, 2.1) {$I_1$};
\node at (-1.5, 2.1) {$I_2$};
\node at (0, 2.1) {$I_3$};
\node at (1.5, 2.1) {$I_4$};
\node at (3, 2.1) {$I_5$};

\def\textoffset{6.5cm}
\def\textboxwidth{5.25cm}

\tikzset{
  annotstyle/.style={
    align=left,
    text width=\textboxwidth,
    font=\footnotesize
  }
}

\node[annotstyle] (annot1) at ([xshift=\textoffset] r) {
    \justifying
    \textbf{Root node:} All queries enter here and are routed through the DAG. An example route for $q=0.15$ is highlighted with thicker traversal lines.
};

\node[annotstyle, below=1.2em of annot1] (annot2) {
    \justifying
    \textbf{Internal nodes:} Each internal node $v$ applies a routing function $\rho_v(q)$ to direct the query to one of its children.
};

\node[annotstyle, below=1.2em of annot2] (annot3) {
    \justifying
    \textbf{Sink nodes:} Each sink node $v$ contains a model $h_v(q)$ that estimates $\rank(q)$ within the subinterval $I_v$.
};

\node[annotstyle, below=2em of annot3] (annot4) {
    \justifying
    In this example, the true rank function is shown as a blue step function, and the predictive models $h_v$ are represented as red linear segments.
};

\node[annotstyle, below=1.4em of annot4] (annot5) {
    \justifying
    \textbf{Local search:} The estimate $h_v(q)$ is corrected through a local search over a subarray stored at the sink node.
};

\end{tikzpicture}

\caption{Illustration of a learned index modeled as a directed acyclic graph (DAG).
}
\vspace{-8pt}
\label{fig:dag-structure}
\end{figure}

\section{General Framework for Lower Bounds}
\label{sec:4-new}
We now present a general framework for lower bounding the query time of learned indexes. We consider the query time $T(q)$ as the number of operations needed to compute $\rank(q)$ using the index built on the array $A_n$. The bounds are expressed in terms of the space overhead $S(R(A_n))$, via the number of sink nodes $K$. The analysis is based on three ideas:
\begin{enumerate}[(a)]
    \item\label{overview-a} Relating the query time $T(q)$ to the prediction error $\eps(q) = \abs{h_v(q) - \rank(q)}$. Our analysis focuses on the final step of query execution: the local search after the model prediction. Since this step is always required to confirm or correct the predicted rank, its cost provides a valid lower bound on $T(q)$, \review{which} grows with the magnitude of $\eps(q)$.
    \item\label{overview-b} Reducing the analysis of the prediction error $\eps(q)$ to a function approximation problem for the CDF $F$, by using the approximation $\rank(q) \approx n F(q)$.
    \item\label{overview-c} A probabilistic analysis that formalizes the approximation $\rank(q) \approx n F(q)$ and allows us to derive lower bounds on $T(q)$ under various probabilistic assumptions.
\end{enumerate}
We begin by describing components \eqref{overview-b} and \eqref{overview-c}, and defer discussion of \eqref{overview-a} to the end of this section, where we provide general lower bounds for $T(q)$. For clarity, most proofs are omitted during this exposition, and all omitted technical details are included in the appendix.

\subsection{From Learned Indexes to Approximation Error}
\label{sec:4-learned-indexes-to-cdf}

A learned index as defined in Section~\ref{sec:3-dag-model} can be seen as a structure that approximates $\rank(q)$ through a piecewise-defined prediction function
\begin{equation}
    h(q) = \sum_{v=1}^K h_v(q) \indf_{I_v}(q)\,,
\end{equation}
where each $h_v$ is a predictive model and $I_v \subseteq \suppq$ its associated subdomain. For notational convenience, we identify an index $I$ with its induced predictive model $h$; thus, we take $h\in\lidx_{K,\cL}$ to mean that $h$ is defined piecewise over $K$ disjoint intervals using functions from the class $\cL$.

Given a predictive function $h$, the prediction error is $\eps(q) = |\rank(q) - h(q)|$. To analyze $\eps(q)$ from the perspective of function approximation, we use the \textit{empirical cumulative distribution function} (ECDF). Recall that our standing assumption is that the keys $X_1, \ldots, X_n$ are drawn i.i.d. from a distribution with CDF $F$. The ECDF is correspondingly defined as
\begin{equation*}
    F_n(x) = \frac{1}{n} \sum_{i=1}^n \indf_{(-\infty, X_i]}(x)\,.
\end{equation*}
The rank function can then be expressed as $\rank(q) = n F_n(q)$. Now, for both average and worst-case analysis, it is relevant to consider the expected error over all query values $q$, conditioned on $X_1, \ldots, X_n$. Since $q$ is generated independently from $\{ X_i \}$, we have
\begin{equation}
\label{eq:expected-eps-equals-Fn-approx}
    \expecbs{q}{\eps(q)}
    =
    \int_\suppq \eps(q) d\mu
    =
    \int_\suppq |\rank(q) - h(q)| d\mu
    =
    \int_\suppq |nF_n(q) - h(q)| d\mu
    =
    \norm{nF_n - h}_\mu\,,
\end{equation}
where $\norm{\cdot}_\mu$ denotes the absolute norm in $L^1(\mu)$, the space of $\mu$-integrable functions over $\suppq$. In other words, the best $h$ function on average would be the function that minimizes the absolute distance to $\rank=nF_n$. This motivates the introduction of the \textit{minimum approximation error}:
\begin{equation*}
    R_{K,\cL}(G) = \inf_{h \in \lidx_{K,\cL}} \norm{G - h}_\mu\,,
\end{equation*}
for any $G \in L^1(\mu)$. When $G = \rank = n F_n$, this yields
\begin{equation}
\label{eq:rank-equal-n-times-F}
    R_{K,\cL}(\rank) = R_{K,\cL}(nF_n) = n R_{K,\cL}(F_n)\,,
\end{equation}
by the linear scaling property of the approximation error (see Lemma~\ref{lem:scaling} in Appendix \ref{appendix:A}). This allows us to reduce the analysis of $\expecs{q}{\eps(q)}$ to an approximation problem. The key idea is that $\expecs{q}{\eps(q)}$ can be lower bounded by the optimal $L^1(\mu)$-approximation of $F$, up to a small
deviation term arising from sampling randomness. This deviation is captured by the function $\delta_n(x) = F(x) - F_n(x)$, where $F_n$ is the ECDF.
The following result makes this relationship precise. The proof is straightforward and can be found in Appendix \ref{appendix:A}.

\begin{proposition}
\label{prop:basic-error-lower-bound}
    Let $X_1, \ldots, X_n$ be i.i.d.~samples from a distribution with CDF $F$, and let $q \sim \mu$ independently from the $\{X_i\}$. Then for any predictive model $h \in \lidx_{K,\cL}$ it holds that
    \begin{equation*}
        \expecbs{q}{\eps(q)} 
        =
        \norm{nF_n - h}_\mu 
        \geq
        n \left[R_{K,\cL}(F) - \norm{\delta_n}_\mu\right]\,,
    \end{equation*}
    provided $\cL$ is invariant under scalar multiplication.
\end{proposition}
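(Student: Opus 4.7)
The first equality, $\expecs{q}{\eps(q)} = \norm{nF_n - h}_\mu$, has already been derived by the chain of identities in equation \eqref{eq:expected-eps-equals-Fn-approx}, so only the inequality portion requires proof. The plan is a short two-step argument: a triangle inequality to introduce the true CDF $F$ together with the sampling deviation $\delta_n$, followed by an application of the scaling identity for $R_{K,\cL}$ to factor out the $n$.

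More concretely, I would first rewrite $nF_n = nF - n\delta_n$, which is just the definition $\delta_n = F - F_n$ rearranged and multiplied by $n$. Substituting this inside the norm gives $\norm{nF_n - h}_\mu = \norm{(nF - h) - n\delta_n}_\mu$, and the triangle inequality in $L^1(\mu)$ (in the form $\norm{A-B}_\mu \geq \norm{A}_\mu - \norm{B}_\mu$) then yields
$$
\norm{nF_n - h}_\mu \;\geq\; \norm{nF - h}_\mu \;-\; n\norm{\delta_n}_\mu.
$$

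For the second step, since $h \in \lidx_{K,\cL}$, the definition of the minimum approximation error as an infimum immediately gives $\norm{nF - h}_\mu \geq R_{K,\cL}(nF)$. Because $\cL$ is invariant under scalar multiplication, the class $\lidx_{K,\cL}$ is stable under multiplication by $n$, so the scaling lemma already invoked in equation \eqref{eq:rank-equal-n-times-F} yields $R_{K,\cL}(nF) = n R_{K,\cL}(F)$. Combining the two estimates produces the claimed bound
$$
\norm{nF_n - h}_\mu \;\geq\; n R_{K,\cL}(F) - n\norm{\delta_n}_\mu \;=\; n\bigl[R_{K,\cL}(F) - \norm{\delta_n}_\mu\bigr].
$$

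No serious obstacle is expected here: the statement is essentially a deterministic norm inequality, and the probabilistic content of $X_1,\ldots,X_n$ enters only implicitly, through the definition of $\delta_n$. The only subtlety is the role of the scalar-multiplication hypothesis on $\cL$: without it, $R_{K,\cL}(nF)$ need not decompose as $n R_{K,\cL}(F)$, and one could only conclude a qualitatively weaker bound expressed in terms of $R_{K,\cL}(nF)$. Verifying that the scaling lemma is stated in exactly the form we need is therefore the one detail to check carefully, but given the current framing of $R_{K,\cL}$ this is routine.
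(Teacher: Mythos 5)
Your proof is correct and uses essentially the same ingredients as the paper — the triangle inequality in $L^1(\mu)$, the infimum definition of $R_{K,\cL}$, and the scaling lemma — only in a slightly different order. The paper first derives the bound $R_{K,\cL}(F_n) \geq R_{K,\cL}(F) - \norm{\delta_n}_\mu$ by taking the infimum in the unscaled inequality and then scales by $n$; you apply the (reverse) triangle inequality directly to the scaled quantities and invoke scaling via $R_{K,\cL}(nF) = n R_{K,\cL}(F)$, which is a modest streamlining of the same argument.
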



This inequality shows that the expected prediction error is lower bounded by a deterministic approximation term $n R_{K,\cL}(F)$, minus a data-dependent fluctuation term $n \norm{\delta_n}_\mu$. Controlling this fluctuation is the main probabilistic step in our analysis and will be addressed in the next subsection. In addition to $\expecbs{q}{\eps(q)}$, we also need bounds on $\expecbs{q}{\log_2 \eps(q)}$, which arises in the analysis of exponential search. To this end, we define a proxy for query time:
    $\tT(q) = \log_2 \left(\max\{ 2, \eps(q) \}\right)$,
and establish a lower bound for its expectation. This requires a careful analysis and specific assumptions on $F$, $\mu$, and $\cL$. The following result illustrates the type of bound that can be derived (the proof can be found in Appendix \ref{appendix:prop-proof}).

\begin{proposition}
\label{prop:log-error-lower-bound}
    Let $q \sim \mu$ independently from the keys $\{X_i\}_{i=1}^n$. Assume that the CDF $F$ and $\mu$ have respective densities $f$ and $g$, such that $0<c_F\leq f(q),g(q)\leq C_F<\infty$ for all $q\in\suppq$. Then, there exists a realization of $X_1, \ldots, X_n$ such that:
    \begin{equation*}
        \expecbs{q}{\tT(q)}
        \geq
        C_1 \left[
            \log_2 \left(n R_{K,\cL}(F)\right) - C_2
        \right]\,,
    \end{equation*}
    where $C_1,C_2 > 0$ are independent of $n$ and $K$, and $\cL=P_0$---the class of constant functions.
\end{proposition}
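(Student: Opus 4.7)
The plan is to first reduce the statement to a lower bound of the form $\expecs{q}{\tT(q)} \geq C_1'[\log_2(n/K) - C_2']$ by invoking the approximation-theoretic fact that $R_{K, P_0}(F) \leq C/K$ for any $F$ with density bounded above by $C_F$, which is obtained by exhibiting a uniform partition of $\suppq$. Once this reduction is in hand, the main task is to show that no piecewise constant approximation $h = \sum_{v=1}^K c_v \indf_{I_v}$ can have expected log-error smaller than $\Omega(\log_2(n/K))$. The key structural observation is a pigeonhole on each interval $I_v$: since $nF_n$ is a step function taking the $N_v + 1$ consecutive integer values $a_v, a_v + 1, \ldots, a_v + N_v$ on $I_v$ (where $N_v$ is the number of keys in $I_v$), at most $N_v/2 + 1$ of these values can lie within $N_v/4$ of any given constant $c_v$. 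Hence the remaining \emph{good} sub-intervals---forming the two extremes of $I_v$---satisfy $|nF_n(q) - c_v| \geq N_v/4$, and thus contribute at least $\log_2(N_v/4)$ to the integrand.

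To complete the bound, I would argue that for a suitable realization of $X_1, \ldots, X_n$, the $\mu$-measure of the good sub-intervals in each $I_v$ is at least a constant fraction of $\mu(I_v)$, and that $N_v$ is within a constant factor of $L_v := n\int_{I_v} f$. The first follows from concentration of order statistics after the probability integral transform (a DKW-style bound controlling the mass between consecutive order statistics), and the second is a Chernoff bound for $N_v$ as a sum of Bernoullis with success probability $\int_{I_v} f$. On such a realization one obtains
$$\int_{I_v} \log_2 \max(2, |nF_n(q) - c_v|) g(q) \, dq \geq c \cdot (L_v/n) \log_2(L_v/C''),$$
and summing over $v$ under the constraint $\sum_v L_v = n$, Jensen's inequality applied to the convex function $L \mapsto L \log_2 L$ yields $\sum_v L_v \log_2 L_v \geq n \log_2(n/K)$. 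Combining with the reduction to $\log_2(nR_{K,P_0}(F))$ gives the claimed form with constants depending only on $c_F, C_F$.

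The main obstacle I anticipate is making the concentration argument hold uniformly over the partition $\{I_v\}$, which is chosen adversarially by the learned index after inspecting the sample. Since it suffices to consider partitions whose endpoints coincide with keys, there are at most $\binom{n}{K-1}$ candidates, and a union bound against the exponentially small single-partition failure probabilities of Chernoff and DKW should suffice provided $K$ is not too large. Secondary technicalities---intervals with $N_v$ too small for meaningful concentration, and boundary effects in the Jensen step---contribute only $O(K)$ terms of bounded magnitude and can be absorbed into the constant $C_2$.
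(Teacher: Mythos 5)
Your approach is a genuinely different route from the paper's, and it contains a real gap at the step where concentration must hold uniformly over adversarial partitions. Let me first explain the structural difference, then the gap.

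The paper's proof never reasons about the empirical step function $nF_n$ directly. Its Step~2 selects a \emph{single} realization of $X_1,\ldots,X_n$ (via Lemma~\ref{lem:existential-bound}\eqref{lem:existential-bound-b}, i.e., $\norm{\delta_n}_\mu \le \tfrac{1}{\sqrt{6}n}$), uses Markov's inequality to extract a set $A$ of $\mu$-measure $\ge 1/2$ on which $|\delta_n| \le 1/n$, and then entirely replaces $\eps(q)$ by the \emph{deterministic} quantity $n|F(q)-c_k|$ up to an additive constant. From that point on the argument is about approximating the population CDF $F$ by constants, which is a fixed object: the key bounds are the bi-Lipschitz estimate $c_F|q-q_k| \le |F(q)-c_k| \le C_F|q-q_k|$ plus a Paley--Zygmund argument (Lemma~\ref{lem:paley-zygmund-general}) and an entropy bound (Lemma~\ref{lem:partial-entropy-bound}) for the aggregation. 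Because the chosen realization is picked \emph{before} and \emph{independently} of the partition $\{I_v\}$ and the constants $\{c_v\}$, the resulting bound holds automatically and uniformly for every $h\in\lidx_{K,P_0}$. No union bound over partitions is ever needed.

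Your approach keeps the empirical object $nF_n$ in play throughout, applying a pigeonhole to its integer steps and then trying to control the $\mu$-mass of the ``good'' sub-intervals and the relation $N_v \approx L_v$. The pigeonhole itself is fine: of the $N_v+1$ consecutive integers, at most $N_v/2 + 1$ lie within $N_v/4$ of $c_v$. The problem is that a count of integer steps says nothing about $\mu$-mass; the ``good'' sub-intervals are determined by specific order-statistic gaps, and the adversary chooses the partition and the $c_v$ \emph{after} seeing the sample. Your proposed remedy---a union bound over the $\binom{n}{K-1}$ candidate partitions with keys as endpoints---does not scale: for a fixed interval the Chernoff failure probability is of order $e^{-cL_v}$, which must beat $\binom{n}{K-1}^{-1}$; but $\sum_v L_v = n$ forces the typical $L_v$ to be about $n/K$, and for $K$ as large as, say, $n/\log^2 n$ (where the target $\log_2(n/K)$ is still non-trivially growing) the product $K\,e^{-cn/K}$ is nowhere near small enough. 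The ``absorb small intervals into $C_2$'' step also does not obviously work, because the small intervals can be a positive fraction of the count $K$ and the DKW bound on $\|F_n-F\|_\infty$ gives only \emph{additive} control of size $\sim\sqrt{n}$, which is useless for intervals with $L_v \ll \sqrt{n}$. In short, the interaction between adversarial partitioning and sample fluctuations is exactly what the paper's reduction to the deterministic $F$ is designed to avoid, and your route does not yet avoid it.

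Two smaller notes. Your opening reduction $R_{K,P_0}(F) \le C/K$ is fine, as is the appeal to Jensen for $\sum_v L_v \log_2 L_v \ge n\log_2(n/K)$; the paper reaches the same aggregate quantity via its entropy lemma (Lemma~\ref{lem:partial-entropy-bound}), so the two are morally equivalent in that step. If you want to salvage the pigeonhole idea, the cleanest fix is probably to imitate the paper's Step~2: pass from $nF_n$ to $nF$ on a large-measure set using a single $\norm{\delta_n}_\mu$ bound, and then run your counting argument on the deterministic step structure of $F$ restricted to that set, which removes any dependence on the adversarially chosen partition.
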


Proposition \ref{prop:log-error-lower-bound} provides a logarithmic analogue to Proposition~\ref{prop:basic-error-lower-bound}, which is useful for analyzing exponential search. The use of $\max\{2, \cdot\}$ in $\tT(q)$ ensures that the proxy for query time is at least $1$.
The constants $C_1$ and $C_2$ depend on $c_F$ and $C_F$, and might be disadvantageous. Still, they are independent of $n$ and $K$, and do not change the asymptotic analysis. Moreover, the condition $0<c_F\leq f,g\leq C_F<\infty$ need not be satisfied on all of $\suppq$; it suffices that it holds over a subinterval $\suppq'\subseteq\suppq$. The main limitation of this variant, in contrast to Proposition \ref{prop:basic-error-lower-bound}, is the constraint $\cL=P_0$, and the fact that this bound only holds for a specific realization $X_1, \ldots, X_n$: it does not necessarily hold for all $\{X_i\}_{i=1}^n \sim \mathbb{P}$.

\subsection{Controlling the Randomness: Concentration of  \texorpdfstring{$F_n$}{Fₙ}}
\label{sec:4-controlling-delta}

The lower bounds in Propositions \ref{prop:basic-error-lower-bound} and \ref{prop:log-error-lower-bound} depend on $R_{K,\cL}(F)$ and the behavior of $\delta_n$. On the one hand, $R_{K,\cL}(F)$ is a deterministic quantity that depends only on $F$, the model class $\cL$, and the measure $\mu$, and can be studied using approximation theory. On the other hand, we can use probabilistic tools to control the term $\norm{\delta_n}_\mu$\,, ensuring it is small enough that the dominant source of error stems from approximating $F$ rather than from sampling variability.

Generally, we wish to say that $\delta_n$ behaves like $1/r(n)$ for some function $r(n)$ that quantifies the rate of convergence of $F_n$ to $F$. This allows us to isolate all randomness in a single term, $1/r(n)$, and express lower bounds on $\expecs{q}{\eps(q)}$ in terms of the deterministic approximation error for $F$. To make this idea precise, we rely on two types of results (all proofs are included in Appendix \ref{appendix:B}): expected and worst-case concentration.


\subparagraph{\textbf{Expected concentration}}
The main result of this type is given by Lemma \ref{lem:expected-norms}, which we will use in conjunction with Proposition \ref{prop:basic-error-lower-bound}.

\begin{lemma}[Expected bounds]
\label{lem:expected-norms}
    Assume $X_1, \ldots, X_n$ are i.i.d. and let $\mu$ be an arbitrary probability measure over $\suppq$, independent from the $\{X_i\}$. Then, it holds that:
    \begin{enumerate}[(a)]
        \item\label{lem:expected-norms-a} $\expecs{X}{\|\delta_n\|_\infty} \leq \sqrt{\frac{\pi}{2n}}$\,.
        \item\label{lem:expected-norms-b} As a consequence, $\expecs{X}{\|\delta_n\|_\mu} \leq \sqrt{\frac{\pi}{2n}}$\,.
    \end{enumerate}
\end{lemma}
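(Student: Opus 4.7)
The plan is to prove part \eqref{lem:expected-norms-a} via the Dvoretzky--Kiefer--Wolfowitz (DKW) tail inequality with Massart's sharp constant, and then obtain part \eqref{lem:expected-norms-b} as an immediate consequence of the fact that $\mu$ is a probability measure.

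For part \eqref{lem:expected-norms-a}, I would recognize that $\|\delta_n\|_\infty = \sup_{x \in \mathbb{R}} |F(x) - F_n(x)|$ is the one-sample Kolmogorov--Smirnov statistic, so the DKW inequality gives
\begin{equation*}
    \pr{\|\delta_n\|_\infty > t} \leq 2 e^{-2 n t^2} \quad \text{for all } t > 0.
\end{equation*}
Since $\|\delta_n\|_\infty \geq 0$, I would apply the tail-integral formula for the expectation of a nonnegative random variable, obtaining
\begin{equation*}
    \expecs{X}{\|\delta_n\|_\infty} = \int_0^\infty \pr{\|\delta_n\|_\infty > t}\, dt \leq \int_0^\infty 2 e^{-2 n t^2}\, dt = \sqrt{\frac{\pi}{2n}},
\end{equation*}
where the last equality uses the standard Gaussian integral $\int_0^\infty e^{-a t^2}\, dt = \tfrac{1}{2}\sqrt{\pi/a}$ with $a = 2n$.

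For part \eqref{lem:expected-norms-b}, the observation is that since $\mu$ is a probability measure and $|\delta_n(q)| \leq \|\delta_n\|_\infty$ pointwise, we have for every realization of the keys
\begin{equation*}
    \|\delta_n\|_\mu = \int_\suppq |\delta_n(q)|\, d\mu(q) \leq \int_\suppq \|\delta_n\|_\infty\, d\mu(q) = \|\delta_n\|_\infty.
\end{equation*}
Taking expectations over $X$ and invoking part \eqref{lem:expected-norms-a} yields the claimed bound.

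There is no real obstacle here: the argument is essentially an invocation of DKW followed by a monotonicity step. The only modest judgment call is the choice of concentration inequality---one could alternatively bound $\expecs{X}{|\delta_n(x)|}$ pointwise via $\sqrt{\mathrm{Var}(F_n(x))} = \sqrt{F(x)(1-F(x))/n} \leq 1/(2\sqrt{n})$ and integrate against $\mu$, which actually gives a slightly smaller constant for part \eqref{lem:expected-norms-b}; however, this route does not yield the uniform bound in part \eqref{lem:expected-norms-a}, so DKW is the natural unified tool.
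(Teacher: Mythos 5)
Your proof is correct and follows essentially the same route as the paper: invoke the DKW inequality with Massart's constant, apply the tail-integral formula together with the Gaussian integral to get part (a), and then use the pointwise bound $|\delta_n(q)| \leq \|\delta_n\|_\infty$ together with $\mu$ being a probability measure for part (b). Your side remark about the variance-based alternative for (b) alone is accurate, but as you observe, it does not yield the uniform bound in (a), which is exactly why the paper also goes through DKW.
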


These bounds allow us to reason about the concentration of $\delta_n$ in expectation over the random samples $\{X_i\}$, providing an average-case view of the approximation error---again, without making any assumptions on the query distribution $\mu$.

\subparagraph{\textbf{Worst-case concentration}} The main result of this type is given by Lemma \ref{lem:existential-bound}, which we use to analyze the worst-case algorithmic performance of learned indexes.
\begin{lemma}[Existence of small-deviations]
\label{lem:existential-bound}
Assume $X_1, \ldots, X_n$ are i.i.d. and let $\mu$ be an arbitrary probability measure over $\suppq$, independent from the $\{X_i\}$. Then, it holds that:
\begin{enumerate}[(a)]
    \item\label{lem:existential-bound-a} There exists a realization of $\{X_i\}_{i=1}^n$ such that $\norm{\delta_n}_\mu \leq \sqrt{\frac{\pi}{2n}}$\,.
    \item\label{lem:existential-bound-b} If $\mu$ is equal to the data-generating distribution, i.e., $d\mu = dF$, then there exists a realization of $\{X_i\}_{i=1}^n$ such that $\norm{\delta_n}_\mu \leq \frac{1}{\sqrt{6} \, n}$\,.
\end{enumerate}
\end{lemma}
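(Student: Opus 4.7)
The plan is to prove both parts by a first-moment/probabilistic argument: since the claims only assert existence of a realization, it suffices either to bound an expectation of $\norm{\delta_n}_\mu$ by the target quantity or to exhibit an explicit realization that meets it.

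Part~(a) follows directly from Lemma~\ref{lem:expected-norms}(b). Because $\expecs{X}{\norm{\delta_n}_\mu} \leq \sqrt{\pi/(2n)}$, at least one realization of $\{X_i\}_{i=1}^n$ must satisfy $\norm{\delta_n}_\mu \leq \sqrt{\pi/(2n)}$, as a nonnegative random variable cannot strictly exceed its mean on every outcome.

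For part~(b), the stated bound is of order $1/n$, which is sharper than any averaging argument over i.i.d.\ samples can yield (already in the uniform case $\expecs{X}{\norm{\delta_n}_\mu} = \Theta(1/\sqrt{n})$). I would therefore exhibit an explicit quantile realization: set $X_i = F^{-1}((2i-1)/(2n))$ for $i=1,\ldots,n$, which is well-defined whenever $F$ is strictly increasing. Since $d\mu = dF$, the change of variables $u = F(x)$ turns $\norm{\delta_n}_\mu$ into $\int_0^1 |u - G_n(u)|\,du$, where $G_n$ is the empirical CDF of the transformed points $U_i = (2i-1)/(2n)$. On each interior subinterval $[U_i, U_{i+1}]$ of length $1/n$, $G_n$ equals the midpoint value $i/n$, so the local contribution is $1/(4n^2)$; the two boundary pieces $[0, 1/(2n)]$ and $[(2n-1)/(2n), 1]$ each contribute $1/(8n^2)$. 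Summing gives $\norm{\delta_n}_\mu = (n-1)/(4n^2) + 1/(4n^2) = 1/(4n)$, which is strictly smaller than $1/(\sqrt{6}\,n)$ since $1/4 < 1/\sqrt{6}$.

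The main obstacle is conceptual rather than computational: one must recognize that the $1/n$ rate in the target rules out any purely averaging argument over i.i.d.\ samples and instead requires a deterministic placement of the sample points at the theoretical quantiles. Once this is identified, the remaining work is a short piecewise integration, and the constant $1/\sqrt{6}$ in the statement is somewhat loose, leaving comfortable room for the quantile construction above.
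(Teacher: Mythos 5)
Your proof of part (a) is identical to the paper's. Part (b), however, takes a genuinely different route. The paper proves the bound probabilistically: it introduces the Cram\'er--von Mises statistic $\omega_n^2 = n\int (F - F_n)^2\,dF$, invokes an exact lower-tail formula of Cs\"org\H{o} and Faraway showing $\prob{\omega_n^2 \leq 1/(6n)} > 0$, and combines this with Jensen's inequality ($\omega_n^2 \geq n\norm{\delta_n}_\mu^2$) to conclude that the event $\{\norm{\delta_n}_\mu \leq 1/(\sqrt{6}\,n)\}$ has positive probability and hence is nonempty. You instead exhibit the quantile realization $X_i = F^{-1}((2i-1)/(2n))$ and compute $\norm{\delta_n}_\mu = 1/(4n)$ directly via the change of variables $u = F(x)$; your piecewise integration is correct. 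Your approach is more elementary (no external reference, no passage through a second-moment statistic) and yields a strictly better constant, $1/4 < 1/\sqrt{6}$; indeed, even the essential infimum of $\omega_n^2$, which is $1/(12n)$, would only give $1/(2\sqrt{3}\,n) \approx 0.289/n$ through Jensen, so the direct computation is inherently tighter than the paper's route can be. What the paper's argument buys in exchange is the stronger conclusion that the set of favorable realizations has positive probability under the sampling measure, rather than just being nonempty; both approaches implicitly require $F$ to be continuous (the paper for the Cs\"org\H{o}--Faraway formula, you for $F \circ F^{-1} = \mathrm{id}$), and your ``strictly increasing'' caveat can be dropped by using the generalized inverse $F^{-1}(p) = \inf\{x : F(x) \geq p\}$. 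Your framing remark---that the $1/n$ rate rules out a first-moment argument since $\expecs{X}{\norm{\delta_n}_\mu} = \Theta(1/\sqrt{n})$---is correct and well observed, though the paper's actual escape is a lower-tail probability rather than an explicit point.
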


This highlights how stronger assumptions on $\mu$ can yield sharper bounds---specifically, a $1/n$ rate instead of the general $1/\sqrt{n}$. In contrast, bounds that hold for arbitrary $\mu$ must remain more conservative to account for unfavorable scenarios: they may be loose in most cases but cannot be improved without additional assumptions. While a full characterization of how different choices of $\mu$ affect the convergence rate is outside the scope of this paper, we include the $d\mu=dF$ case as it reflects a common setting in the learned indexes literature.

\subsection{From Prediction Error to Query Time}
\label{sec:4-query-time-reduction}

We now complete our framework by translating prediction error into actual query time. As discussed in Section~\ref{sec:3-dag-model}, the final stage of query processing involves a local search to correct the predicted rank $h(q)$ into the true value $\rank(q)$. The cost of this step depends on the prediction error $\eps(q) = |\rank(q) - h(q)|$, and varies depending on the algorithm used. We analyze the three most common search strategies: linear, exponential, and binary search.

\begin{proposition}[Lower bounds for query time via prediction error]
\label{prop:search-error-translation}
Let $q \in \suppq$ be a query with predicted rank $h(q)$ and actual rank $\rank(q)$, and let $\eps(q) = |\rank(q) - h(q)|$ be the prediction error. Then, under the respective search strategies:
\review{
    \begin{enumerate}[(a)]
        \item \textbf{Linear search:}
        \hspace{43pt} $T(q) \geq \eps(q)$\,.
        
        \item \textbf{Exponential search:}
        \hspace{17pt}
        $T(q) \geq \log_2 \left(\max\{ 2, \eps(q) \}\right)$\,.
        
        \item \textbf{Binary search:}
        \hspace{20pt}
        ${\textstyle \sup_q T(q) \geq \log_2 (\sup_q \eps(q))}$\,.
    \end{enumerate}
}
\end{proposition}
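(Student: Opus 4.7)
My plan is to treat the local search phase as an algorithm that probes positions of the sorted array $A_n$, starting from the prediction $h(q)$ and terminating only once $\rank(q)$ has been identified. Each probe is a unit-cost operation in the RAM model, so lower bounds on the number of probes translate directly into lower bounds on $T(q)$. The three parts then reduce to counting arguments under the different probing patterns prescribed by each strategy.

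For part (a), linear search shifts the current pointer by one array index per operation, so a simple induction on the step count shows that fewer than $|\rank(q) - h(q)| = \eps(q)$ probes leave the pointer short of the target. This yields $T(q) \geq \eps(q)$.

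For part (b), exponential search probes positions at distances $1, 2, 4, \ldots, 2^{k-1}$ from $h(q)$. After $k$ probes, the furthest position examined sits at distance $2^{k-1}$ from $h(q)$, so the doubling phase alone requires $k \geq \log_2 \eps(q)$ probes to bracket $\rank(q)$. The $\max\{2, \cdot\}$ in the statement handles the degenerate case $\eps(q) < 2$, where at least one probe is still needed to confirm the prediction, so that $T(q) \geq \log_2 \max\{2, \eps(q)\} \geq 1$ holds uniformly.

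For part (c), I will invoke the classical decision-tree lower bound for comparison-based search. Fix any query $q$; the binary search must begin from an interval containing both $h(q)$ and $\rank(q)$, hence of width at least $\eps(q)$, and halving this interval to isolate $\rank(q)$ requires at least $\log_2 \eps(q)$ comparisons. Taking the supremum over $q$ on both sides then gives $\sup_q T(q) \geq \log_2 \sup_q \eps(q)$ by monotonicity of $\log_2$. The main obstacle across all three parts is not technical depth but the careful formalization of what each search strategy is allowed to do: adjacent probes for linear, geometric probes for exponential, and interval halving for binary. Once these probing constraints are imposed, every bound collapses to the counting arguments sketched above.
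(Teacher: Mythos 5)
Your arguments for parts (a) and (b) are essentially identical to the paper's: linear search costs one probe per position offset, giving $T(q) \geq \eps(q)$, and the doubling phase of exponential search needs about $\log_2 \eps(q)$ steps to bracket the target, with $\max\{2,\cdot\}$ guarding the small-error case.

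Part (c) has a subtle but meaningful gap. You assert a \emph{per-query} lower bound $T(q) \geq \log_2 \eps(q)$, and then take the supremum on both sides. But that per-query inequality does not hold for binary search: the window $W$ is fixed \emph{before} $q$ is known, sized so that it covers the worst error $\sup_q \eps(q)$, and for a particular $q$ the algorithm may locate $\rank(q)$ in far fewer than $\log_2 \eps(q)$ comparisons (e.g., if $\rank(q)$ happens to be an early midpoint). This is precisely why the paper only states a worst-case bound $\sup_q T(q) \geq \log_2(\sup_q \eps(q))$ for binary search, rather than the per-query or expected-value bounds it proves for linear and exponential search; if your per-query claim were true, the paper's Table~\ref{table-lower-bounds-specific} would be able to list average-case (type (A)/(B)) bounds for binary search as well, but it deliberately does not. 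The correct argument is the one the paper gives: the window size $M$ must satisfy $M \geq \sup_q \eps(q)$ for correctness, and binary search over a window of size $M$ requires $\geq \log_2 M$ comparisons \emph{in the worst case over targets in the window}; hence there is some $q$ (one whose rank lies at maximal depth in the comparison tree) with $T(q) \geq \log_2 M \geq \log_2(\sup_q \eps(q))$. Your final inequality is correct, but the route to it needs to go through the worst-case window argument, not a per-query bound.
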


\begin{proof}
We analyze each strategy in turn.

\textit{(a)} \textbf{Linear search.} This method scans sequentially from the predicted position $h(q)$ until it reaches the true position $\rank(q)$, as illustrated in Figure \ref{fig:linear-search-diagram}. In practice, the search may advance by 2, 3, or any constant number of positions per step, but this only affects the total number by a constant factor and does not change asymptotic bounds. We therefore assume 1 step per comparison. The total number is thus at least $\eps(q) = |\rank(q) - h(q)|$, so we obtain the lower bound $T(q) \geq \eps(q)$.

\begin{figure}[ht]
\centering
\vspace{-5pt}

\begin{tikzpicture}[every node/.style={scale=0.85},scale=0.8]
\foreach \i in {1,...,12} {
  \draw[fill=gray!15] (\i,0) rectangle ++(1,1);
  \node at (\i+0.5,0.5) {$X_{\i}$};
}
\draw[fill=red!30] (3,0) rectangle ++(1,1);
\node at (3.5,0.5) {$X_{3}$};
\draw[fill=blue!30] (9,0) rectangle ++(1,1);
\node at (9.5,0.5) {$X_{9}$};
\node at (3.5,-0.2) {\small $h(q)$};
\node at (9.5,-0.2) {\small $\rank(q)$};
\foreach \i in {3,...,8} {
  \draw[->, rounded corners=5pt] (\i+0.5,1.05) to[out=60,in=120] (\i+1.5,1.05);
}
\node at (6.5,1.6) {\footnotesize $\eps(q) = 6$ steps};
\end{tikzpicture}

\vspace{-5pt}

\caption{Linear search performs a step-by-step scan from the predicted position $h(q)$ to the true rank $\rank(q)$. The total cost is proportional to the prediction error $\eps(q)$.}

\label{fig:linear-search-diagram}
\end{figure}
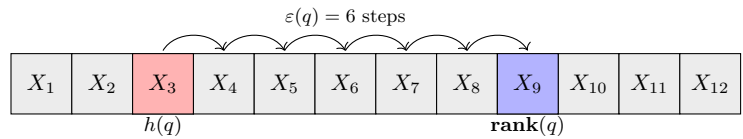

\textit{(b)} \textbf{Exponential search.} Starting at $h(q)$, the algorithm expands geometrically (typically doubling the offset) until it brackets the true position $\rank(q)$ within a window, as visualized in Figure \ref{fig:exp-search-diagram}. It then performs binary search within that window. The number of expansion steps required to reach a position $\geq \rank(q)$ is $\lceil \log_2 \eps(q) \rceil + 1 \geq \log_2 \eps(q)$. The cost of the final binary search is also logarithmic in the window size. Since query time cannot involve less than $1$ operation, we write:
\begin{equation*}
    T(q) \geq \tT(q) = \log_2 \left(\max\{ 2, \eps(q) \}\right),
\end{equation*}
where $\tT$ corresponds to the proxy function introduced earlier.

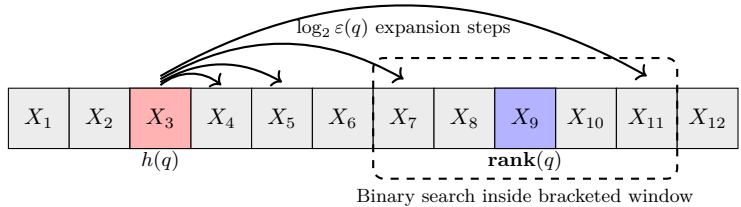
\begin{figure}[ht]
\centering
\vspace{-5pt}

\begin{tikzpicture}[every node/.style={scale=0.85},scale=0.8]
\foreach \i in {1,...,12} {
  \draw[fill=gray!15] (\i,0) rectangle ++(1,1);
  \node at (\i+0.5,0.5) {$X_{\i}$};
}
\draw[fill=red!30] (3,0) rectangle ++(1,1);
\node at (3.5,0.5) {$X_{3}$};
\draw[fill=blue!30] (9,0) rectangle ++(1,1);
\node at (9.5,0.5) {$X_{9}$};
\node at (3.5,-0.2) {\small $h(q)$};
\node at (9.5,-0.2) {\small $\rank(q)$};

\draw[->, thick, rounded corners=4pt] (3.5,1.05) to[out=40,in=140] (4.5,1.05);
\draw[->, thick, rounded corners=4pt] (3.5,1.1) to[out=30,in=150] (5.5,1.1);
\draw[->, thick, rounded corners=4pt] (3.5,1.15) to[out=30,in=150] (7.5,1.15);
\draw[->, thick, rounded corners=4pt] (3.5,1.2) to[out=30,in=150] (11.5,1.2);

\node at (7.5,2) {\footnotesize $\log_2 \eps(q)$ expansion steps};
\draw[thick, dashed, rounded corners] (7, -0.5) rectangle (12, 1.5);
\node at (9.5,-0.8) {\footnotesize Binary search inside bracketed window};
\end{tikzpicture}

\vspace{-5pt}

\caption{Exponential search expands a geometric window around the predicted position $h(q)$ until the true rank $\rank(q)$ is within range. The cost is logarithmic in the prediction error $\eps(q)$.}

\label{fig:exp-search-diagram}
\end{figure}

\textit{(c)} \textbf{Binary search.} In this case, the algorithm relies on a fixed window size that is guaranteed (by construction) to contain the true rank. The size of this window must be chosen to cover the \emph{maximum possible prediction error} over all $q$, i.e., $\sup_q \eps(q)$. Since binary search over a window of size $M$ takes at least $\log_2 M$ steps in the worst case, we obtain:
\begin{equation*}
    {\textstyle \sup_q T(q) \geq \log_2 (\sup_q \eps(q))}\,.
\end{equation*}
This is illustrated in Figure \ref{fig:binary-search-diagram}: although for this specific value of $q$ the prediction error is smaller, the search range must still be wide enough to accommodate the worst-case error.

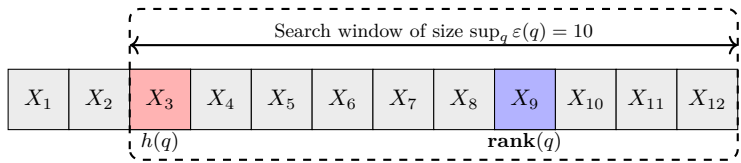
\begin{figure}[ht]
\centering

\begin{tikzpicture}[every node/.style={scale=0.85},scale=0.8]
\foreach \i in {1,...,12} {
  \draw[fill=gray!15] (\i,0) rectangle ++(1,1);
  \node at (\i+0.5,0.5) {$X_{\i}$};
}
\draw[fill=red!30] (3,0) rectangle ++(1,1);
\node at (3.5,0.5) {$X_{3}$};
\draw[fill=blue!30] (9,0) rectangle ++(1,1);
\node at (9.5,0.5) {$X_{9}$};
\node at (3.5,-0.2) {\small $h(q)$};
\node at (9.5,-0.2) {\small $\rank(q)$};

\draw[<->, thick] (3,1.4) -- (13,1.4);
\node at (8,1.6) {\footnotesize Search window of size $\sup_q \eps(q) = 10$};
\draw[thick, dashed, rounded corners] (3, -0.5) rectangle (13, 2);
\end{tikzpicture}

\caption{Binary search requires a search interval guaranteed to contain the true position $\rank(q)$. Therefore, the cost depends on the worst-case prediction error and is logarithmic in the interval size.}

\label{fig:binary-search-diagram}
\vspace{-10pt}
\end{figure}

\end{proof}

\vspace{-10pt}

\subsection{Lower Bounds on Query Time}
\label{sec:4-lower-bounds-query-time}

We are now ready to combine the results from the previous subsections to derive concrete lower bounds on the query time $T(q)$ under the three search strategies described in Section~\ref{sec:4-query-time-reduction}. We consider three natural types of lower bounds, which differ in the strength of their guarantees and the assumptions they require. In increasing order of generality, we consider:
\begin{enumerate}[(A)]
    \item\label{bound-type-A} \textbf{Full expectation bounds (average-case):}
    \begin{equation*}
        \mathbb{E}_{X, q}[T(q)]
        \geq C(n, K, \cL, F, \mu)\,.
    \end{equation*}
    This is the strongest type of result, holding in expectation over both the random dataset $\{X_i\}_{i=1}^n$ and the query value $q \sim \mu$\,. It typically requires strong assumptions (e.g., i.i.d.~sampling) but yields average-case bounds across the entire system.

    \item\label{bound-type-B} \textbf{Conditional expectation bounds (mixed-case):}
    \begin{equation*}
        \sup_{\{X_i\}_{i=1}^n} \mathbb{E}_{q}[T(q)]
        \geq C(n, K, \cL, F, \mu)\,.
    \end{equation*}
    This corresponds to worst-case bound over data and average-case bound over queries.

    \item\label{bound-type-C} \textbf{Pointwise bounds (worst-case):}
    \begin{equation*}
        \sup_{\{X_i\}_{i=1}^n, q \in \suppq} T(q)
        \geq C(n, K, \cL, F, \mu)\,.
    \end{equation*}
    This is a purely worst-case guarantee, asserting that there exists some dataset and query for which the query time is large. It requires the fewest assumptions.
\end{enumerate}

These regimes form a natural hierarchy: a type~\eqref{bound-type-A} bound implies one of type~\eqref{bound-type-B}, which in turn implies one of type~\eqref{bound-type-C}. This implication structure follows directly from the probabilistic method and reflects a trade-off between generality and the strength of assumptions.
Importantly, the bounds hold uniformly over the model class $\lidx_{K,\cL}$\,, i.e., they apply to every learned index structure in $\lidx_{K,\cL}$\,.

Table~\ref{table-lower-bounds-specific} summarizes the resulting lower bounds on query time $T(q)$ under different search strategies, assumptions on the query distribution $\mu$, and the three types of lower bounds \eqref{bound-type-A}--\eqref{bound-type-C} discussed above. These results show how the strongest bounds are attainable for linear search, followed by exponential search, and finally binary search, where we can only derive worst-case guarantees. Formal statements and proofs are deferred to Appendix~\ref{appendix:C}; most follow directly from the general framework developed earlier in this section.

Notice that the type \eqref{bound-type-B} bound for exponential search is written asymptotically, without the constant $C_1$ from Proposition \ref{prop:log-error-lower-bound}, in order to highlight its structural similarity to the rest.

\begin{table}[ht]
\caption{Lower bounds on query time $T(q)$ for various search strategies, bound types, and assumptions on $\mu$. Type \eqref{bound-type-A} bounds further assume that the $\{X_i\}$ are i.i.d. and that $q$ is independent from the $\{X_i\}$. Type \eqref{bound-type-B} bounds assume $\cL=P_0$, that $q$ is independent from the $\{X_i\}$, and the condition $0<c_F\leq f,g\leq C_F<\infty$.}
\label{table-lower-bounds-specific}
\centering
\vspace{-2pt}
\begin{tabular}{c|l|l|l|r}
    \review{\textbf{Bound}} & \textbf{Search} & \textbf{Bound Type} & \textbf{Assumption} & \textbf{Lower Bound} \\ \hline
    \rule{0pt}{15pt}
    \review{\textbf{L1}}
    & Linear      & \eqref{bound-type-A} (avg/avg)  & Any $\mu$
    & $n \big[R_{K,\cL}(F) - \sqrt{\frac{\pi}{2n}}\big]$ \\
    \rule{0pt}{15pt}
    \review{\textbf{L2}}
    & Linear      & \eqref{bound-type-B} (worst/avg)  & $d\mu = dF$ 
    & $n \big[R_{K,\cL}(F) - \frac{1}{\sqrt{6} \, n}\big]$ \\
    \rule{0pt}{15pt}
    \review{\textbf{E1}}
    & Exponential & \eqref{bound-type-B} (worst/avg) & Any $\mu$
    & $\log_2\big(n \,\, R_{K,\cL}(F)\big) \hspace{2pt} - \hspace{2pt} C_2\hspace{4pt}$ \\
    \rule{0pt}{15pt}
    \review{\textbf{E2}}
    & Exponential & \eqref{bound-type-C} (worst/worst) & $d\mu = dF$ 
    & $\log_2\big(n \big[R_{K,\cL}(F) - \frac{1}{\sqrt{6} \, n}\big]\big)$ \\
    \rule{0pt}{15pt}
    \review{\textbf{B1}}
    & Binary      & \eqref{bound-type-C} (worst/worst) & Any $\mu$
    & $\log_2\big(n \big[R_{K,\cL}(F) - \sqrt{\frac{\pi}{2n}}\big]\big)$ \\
    \rule{0pt}{15pt}
    \review{\textbf{B2}}
    & Binary      & \eqref{bound-type-C} (worst/worst) & $d\mu = dF$ 
    & $\log_2\big(n \big[R_{K,\cL}(F) - \frac{1}{\sqrt{6} \, n}\big]\big)$
\end{tabular}
\vspace{-1pt}
\end{table}

These results highlight the trade-offs between model complexity, dataset size, search strategy, and query performance in learned indexes. Note that Table~\ref{table-lower-bounds-specific} considers two representative regimes for the query distribution: (\textit{i}\,) arbitrary $\mu$, and (\textit{ii}\,) the matched case $d\mu = dF$, where queries are drawn from the same distribution as the dataset. The latter is especially relevant in practice, as it is widely used in evaluations of learned indexes~\cite{ferragina2020pgm, marcus2020benchmarking, liang2024swix}.

Altogether, this framework provides a general and flexible method for deriving query time lower bounds.
In the following sections, we analyze the error term $R_{K,\cL}(F)$ for concrete model classes commonly used in learned index structures. Section~\ref{sec:6} focuses on $\cL = P_0$, corresponding to piecewise constant models~\cite{zeighami2023distribution, croquevielle_et_al:LIPIcs.ICDT.2025.19}, while Section~\ref{sec:7} considers $\cL = P_1$, corresponding to piecewise linear approximators, the most widely adopted choice in practice.

\section{Lower Bounds for Piecewise Constant Models}
\label{sec:6}

The general lower bounds developed in Section~\ref{sec:4-new} gain full significance once we derive explicit formulas for the approximation error $R_{K,\cL}(F)$. This requires lower bounding this term for specific choices of the model class $\cL$. 

Virtually all learned index structures rely on polynomial models, as they have few parameters to store, and are fast to train and evaluate. Accordingly, we focus on model classes of the form $\cL = P_m$, where $P_m$ denotes the class of polynomials of degree at most $m$. In this section, we analyze the case $\cL = P_0$, which corresponds to piecewise constant models. Next, in Section~\ref{sec:7}, we address the most widely used case in practice: $\cL = P_1$, corresponding to piecewise linear models.

The key idea in our analysis is that, as we show below, for piecewise constant models the problem of approximating the CDF $F$ reduces to finding the optimal quantization of a random variable distributed over $[0,1]$. This allows us to borrow fundamental results from quantization theory to characterize the approximation error $R_{K,P_0}(F)$.

Our results in this section are derived for a broad class of distributions $F$ and $\mu$---specifically, the case where both are characterized by density functions with respect to the Lebesgue measure. We denote these densities $f$ and $g$, respectively, and consider both the case $f=g$ (Section~\ref{sse:constant_same_distribution}) and the general case (relegated to Appendix~\ref{sse:constant_different_distribution}). In both cases, we show that the approximation error decreases linearly with $K$; the difference is that for $f=g$ this can be shown for any $K$, whereas for $f\neq g$ the result holds asymptotically as $K\to\infty$.


More specifically, throughout Sections \ref{sec:6} and \ref{sec:7}, we focus on the case where $F$ is Lipschitz continuous. This regularity assumption is relatively mild, and it encompasses virtually all common continuous distributions (e.g., Gaussian, exponential, uniform). At the same time, it guarantees the existence of a density $f$, and it ensures that lower bounds reflect the limitations of the model class (as governed by $K$ and $\cL$), rather than artifacts of highly pathological target functions.

\subsection{Same dataset and query distributions \texorpdfstring{$f=g$}{f=g}}
\label{sse:constant_same_distribution}

For the relatively simple case of piecewise constant approximating functions $h\in \lidx_{K, P_0}$ and $f = g$, the minimum approximation error $R_{K,P_0}(F)$ can be calculated exactly for any $K$. We achieve this by first showing that this problem setup reduces to optimal quantization of a uniformly distributed random variable, and then relying on a well-known result in quantization theory to calculate the error of the associated optimal quantizer.

Let $h\in \lidx_{K, P_0}$. Since $F$ is a CDF with range $[0,1]$, without loss of generality we can assume $0\leq h(q)\leq 1$ for all $q$. Therefore, $h$ can be written as
\begin{equation*}
    h(q)
    =
    {\textstyle \sum_{k=1}^K c_k \mathbf{1}_{I_k}(q)}\,,
\end{equation*}
where the $K$ intervals $\{ I_k \}$ partition $\suppq$, and $c_k \in [0,1]$ are the corresponding approximating constants.
This means that when $f=g$, the absolute error $\lVert F - h \rVert_\mu$\ can be written as
\begin{equation}
\label{eq:norm_constant_uniform}
    \lVert F - h \rVert_\mu
    = 
    \int_\suppq
    \lvert F(q) - h(q) \rvert \, f(q) dq
    =
    \sum_{k=1}^K
    \int_{I_k} \lvert F(q) - c_k \rvert \, f(q) dq
    =
    \sum_{k=1}^K
    \int_{J_k} \lvert u - c_k \rvert \, du \,,
\end{equation}
where $J_k = F(I_k) = \{ F(q) : q \in I_k\}$ (which is a continuous interval), and the last equality comes from substituting $u=F(q)$ and $du=f(q)dq$ inside the integrals. Since $F$ is a CDF we know that $\{J_k\}$ is a partition of $[0, 1]$. If we now define
    $Q_K(u)
    =
    \sum_{k=1}^K c_k \mathbf{1}_{J_k}(u)\,,$
then (\ref{eq:norm_constant_uniform}) can be expressed as
\begin{equation}
\label{eq:norm_constant_uniform_expect}
    \lVert F - h \rVert_\mu
    =
    \mathbb{E}\big[ \lvert U - Q_K(U) \rvert \big]\,,
\end{equation}
where $Q_K$ is a piecewise constant function over $[0, 1]$ and the expectation is taken over a $\text{Uniform}([0,1])$ random variable $U$. Note that (\ref{eq:norm_constant_uniform})--(\ref{eq:norm_constant_uniform_expect}) reflect the fact that $F(X)$ is uniformly distributed when $X \sim F$.

In other words, minimizing $\lVert F - h \rVert_\mu$ is equivalent to choosing a set of intervals $\{ J_k \}$ and the associated points $\{ c_k \}$ so as to minimize the expected distance between a realization of $U$ and its corresponding value $c_k$. We recognize this as the problem of optimal quantization of $U$ under the absolute error.

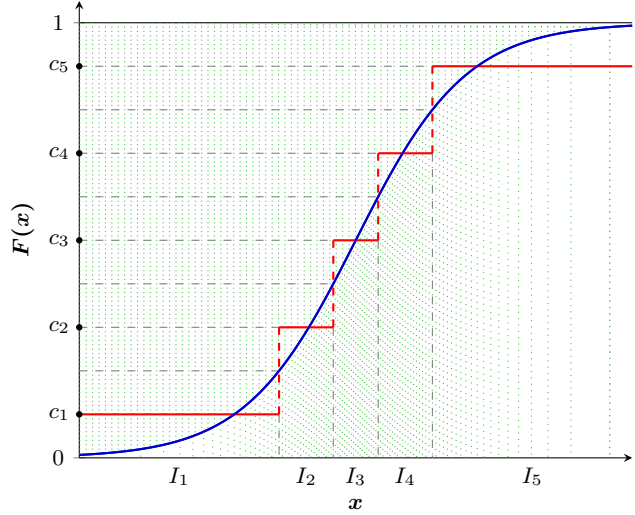
\begin{figure}[t!]
\centering
\newcommand{\maxX}{10}

\begin{tikzpicture}[font=\footnotesize,>=stealth]
\begin{axis}[
    width=252pt,
    xlabel={$\bm{x}$},
    xlabel style={yshift=-10pt},
    ylabel={$\bm{F(x)}$},
    ymin=0, ymax=1.05,
    xmin=0, xmax=\maxX,
    axis lines=left,
    grid=none,
    samples=200,
    ytick={0,1},yticklabels={$0$,$1$},
    xtick=\empty,
    domain=0:\maxX,
    clip=false,
]

\draw (0,1)--(\maxX,1);

\addplot[name path=cdf, thick, blue!80!black] {0.5 + 0.5 * tanh(0.5*(x - 5))};

\pgfmathsetmacro{\M}{100}
\foreach \i in {1,...,\M} {
    \pgfmathsetmacro{\qval}{\i / \M)}

    \edef\temp{\noexpand\path[name path=horizontalline] 
        ({axis cs:0,\qval}) -- ({axis cs:\maxX,\qval});}
    \temp

    \path[name intersections={of=cdf and horizontalline, by={intersection}}];

    \edef\temp{\noexpand\draw[dotted, thin, green!50!gray] ({axis cs:0,\qval}) -- (intersection);}\temp
    \edef\temp{\noexpand\draw[dotted, thin, green!50!gray] (intersection) -- (intersection|-{(0, 0)});}\temp
}

\pgfmathsetmacro{\N}{5}

\foreach \i in {1,...,\N} {
    \pgfmathsetmacro{\qval}{(2*\i - 1) / (2*\N)}
    \pgfmathsetmacro{\qvalmid}{\qval + ( ((2*(\i+1) - 1) / (2*\N)) - ((2*\i - 1) / (2*\N)))/2}

    \pgfmathsetmacro{\qvalprev}{(2*(\i-1) - 1) / (2*\N)}
    \pgfmathsetmacro{\qvalmidprev}{\qvalprev + ( ((2*(\i+1) - 1) / (2*\N)) - ((2*\i - 1) / (2*\N)))/2}


    \edef\temp{\noexpand\path[name path=horizontalline] 
        ({axis cs:0,\qval}) -- ({axis cs:\maxX,\qval});}
    \temp

    \edef\temp{\noexpand\path[name path=horizontallinemid] 
        ({axis cs:0,\qvalmid}) -- ({axis cs:\maxX,\qvalmid});}
    \temp

    \edef\temp{\noexpand\path[name path=horizontallinemidprev] 
        ({axis cs:0,\qvalmidprev}) -- ({axis cs:\maxX,\qvalmidprev});}
    \temp

    \path[name intersections={of=cdf and horizontalline, by={intersection}}];
    \path[name intersections={of=cdf and horizontallinemid, by={intersectionmid}}];
    \path[name intersections={of=cdf and horizontallinemidprev, by={intersectionmidprev}}];

    \edef\temp{\noexpand\draw[dashed, gray] ({axis cs:0,\qval}) -- (intersection);}
    \temp

    \ifnum \i=\N
        \edef\temp{\noexpand\draw[thick,red] (intersectionmidprev|-{(0,\qval)}) -- ({axis cs: \maxX,\qval});}\temp
        \edef\temp{\noexpand\draw[thick,dashed,red] (intersectionmidprev|-{(0,\qvalprev)}) -- (intersectionmidprev|-{(0,\qval)});}\temp
        \coordinate (intersectionmidprevX) at (intersectionmidprev|-{(0,0)});
        \coordinate (intersectionmidX) at (\maxX,0);
    \else
        \edef\temp{\noexpand\draw[dashed, gray] ({axis cs:0,\qvalmid}) -- (intersectionmid);}\temp
        \edef\temp{\noexpand\draw[dashed, gray] (intersectionmid|-{(0,\qval)}) -- (intersectionmid|-{(0, 0)});}\temp
        \ifnum \i=1
            \edef\temp{\noexpand\draw[thick,red] ({axis cs:0,\qval}) -- (intersectionmid|-{(0,\qval)});}\temp
            \coordinate (intersectionmidprevX) at (0,0);
        \else
            \edef\temp{\noexpand\draw[thick,red] (intersectionmidprev|-{(0,\qval)}) -- (intersectionmid|-{(0,\qval)});}\temp
            \edef\temp{\noexpand\draw[thick,dashed,red] (intersectionmidprev|-{(0,\qvalprev)}) -- (intersectionmidprev|-{(0,\qval)});}\temp
            \coordinate (intersectionmidprevX) at (intersectionmidprev|-{(0,0)});
    \fi
        \coordinate (intersectionmidX) at (intersectionmid|-{(0,0)});
    \fi
    \edef\temp{\noexpand\node[below] at ($(intersectionmidprevX)!0.5!(intersectionmidX)$) {$I_{\i}$};}\temp

 

    \edef\temp{\noexpand\filldraw[black]  ({axis cs:0,\qval}) circle (1pt);}
    \temp
    \edef\temp{\noexpand\node[left] at ({axis cs:0,\qval}) {$c_{\i}$};}
    \temp
}

\addplot[name path=cdf, thick, blue!80!black] {0.5 + 0.5 * tanh(0.5*(x - 5))};

\end{axis}
\end{tikzpicture}
\vspace{-2ex}
\caption{Optimal approximation of the CDF $F$ (blue) via a piecewise constant function $h$ (red) with $K=5$ segments when $f=g$---i.e., when the dataset and queries are drawn from the same distribution. In this case, the optimal approximating values $\{ c_k \}$ are evenly spaced (marked on the $y$-axis), and approximating the CDF is equivalent to quantizing $\text{Uniform}([0,1])$ along the $y$-axis. The quantization boundaries (unmarked black horizontal dashed lines) are reflected onto the $x$-axis via $F^{-1}$ to split $\suppq$ into $K=5$ equiprobable regions $\{ I_k \}$ along the quintiles of $F$ (black vertical dashed lines). The green dotted lines illustrate how the distribution of $X\sim F$ becomes uniform when transformed via $F(X)$.}
\vspace{-3ex}
\label{fig:cdf_figure}
\end{figure}

It is a well-established result in quantization theory that uniform quantization is optimal for $\text{Uniform}([0,1])$. Specifically, the quantization points should be equally spaced as $\{ c_k = \frac{2k - 1}{2K} : k = 1,\ldots,K \}$, and the quantization intervals $\{ J_k \}$ should split the domain $[0,1]$ evenly. The associated minimum quantization error is equal to~\cite[Example 5.5]{graf2000foundations}
\begin{equation*}
    T_{K,r}(U) = 
    \inf_{Q_K} \left\{\mathbb{E} \left[ \left|U-Q_K(U)\right|^r \right]  : Q_K\text{~has $K$ quantization points}\right\} =
    \frac{1}{K^r(1+r)2^r} \,;
\end{equation*}
this means that, in our notation,
\begin{equation}
\label{eq:const_abs_error}
    R_{K,P_0}(F) = T_{K,1}(U) = \frac{1}{4 K} \,.
\end{equation}
It is interesting to note that the minimum approximation error~(\ref{eq:const_abs_error}) and the optimal $\{ c_k \}$ and $\{ J_k \}$ are the same for any $F$; the only dependence on $F$ is via the regions $\{ I_k \}$ that should split $\suppq$ at the $K$-quantiles of $F$ to yield equal-sized $\{ J_k \}$.

We illustrate this setting graphically in Figure~\ref{fig:cdf_figure}, where we show an example CDF curve $F(x)$ (blue), optimally approximated by a piecewise constant function $h$ (red) with $K=5$ segments. The approximation points $\{ c_k \}$ are evenly spaced along the $y$-axis, and the mid-points between them, when reflected onto the $x$-axis via $F^{-1}$, mark the quintiles of $F$, which split $\suppq$ into $K=5$ equiprobable regions $\{ I_k \}$. The ``uniformization'' of a distribution by its own CDF that we used in~(\ref{eq:norm_constant_uniform})--(\ref{eq:norm_constant_uniform_expect}) is illustrated by the green dotted lines---their density below the blue CDF curve is proportional to $f$, whereas above the CDF curve, when reflected off it onto the $y$-axis, it becomes uniform.

\section{Lower Bounds for Piecewise Linear Models}
\label{sec:7}
As shown in Section~\ref{sec:6}, learned indexes based on piecewise constant approximations admit a clean lower bound analysis via quantization theory. There, the approximation error $R_{K,\cL}(F)$ was analyzed for individual (Lipschitz continuous) CDFs $F$. In this section, we extend our analysis to more expressive model classes, such as piecewise linear functions. Our central object of study will be the worst-case approximation error over a function class~$\cF$:
\begin{equation*}
    {\textstyle \sup_{F\in\cF} R_{K,\cL}(F)}\,.
\end{equation*}
By lower bounding this quantity, we can show that there exist functions within $\cF$ which are not well approximated by any $h \in \lidx_{K,\cL}$. In line with the rest of the paper, we focus on the class of Lipschitz continuous CDFs, and show that for $\cL = P_1$, the lower bound $\Omega(1/K)$ still holds in this worst-case sense, matching classical upper bounds (see Appendix~\ref{appendix:D-upper-bounds}).

We establish this result through two complementary approaches. In Section~\ref{sec:7-adversarial}, we describe an explicit family of Lipschitz continuous CDFs $\{F_{M}\}$ for which $R_{K,P_1}(F) = \Omega(1/K)$ holds. This shows that the class $\lidx_{K,P_1}$ cannot guarantee $o(1/K)$ approximation error uniformly across all Lipschitz continuous functions, and the construction works for a broad class of measures $\mu$.

In Section~\ref{sec:7-n-widths}, we approach the problem through \textit{Kolmogorov widths}, an important tool from approximation theory. This approach typically assumes $\mu$ as the Lebesgue measure, and only allows for existential (rather than constructive) proofs. Nonetheless, it is a very flexible framework: it allows us to analyze broader model classes beyond $P_1$ and to derive sharper bounds for more regular function spaces. Together, these two methods highlight complementary aspects of the approximation limits faced by learned index structures.

\subsection{An Adversarial CDF Construction}
\label{sec:7-adversarial}

To illustrate the limitations of piecewise linear approximators, we construct a family of CDFs $\{F_M\}_{M\in\mathbb{N}}$ with the following property: for any $M > (1+\alpha)K$ with $\alpha>0$, the approximation error satisfies $R_{K,P_1}(F_M) = \Omega(1/K)$. The construction, detailed in Appendix~\ref{appendix:D-adversarial-cdf}, is based on stitching together rescaled copies of the quadratic function $x^2$ over $M$ disjoint intervals. While this is presented under the assumption that $\mu$ is the uniform distribution, the argument naturally extends to a broader class: specifically, any measure $\mu$ with a density $g$ satisfying $g(q) \geq \lambda > 0$ over an interval. In such cases, the lower bound continues to hold up to constants depending on $\lambda$.

Thus, this construction provides a worst-case lower bound for $R_{K,P_1}(F)$ within the class of Lipschitz continuous CDFs. Importantly, it also highlights a fundamental aspect of learned index structures: the approximation error depends on both the location and the number of difficult-to-approximate regions. Hence, the choice of partition and local model must be adapted to the data. Fixing the model class and budget $K$ in advance cannot guarantee uniformly good approximation.

\subsection{Kolmogorov Widths and Approximation Lower Bounds}
\label{sec:7-n-widths}

To further understand the limitations of learned index models beyond explicit constructions, we turn to a more general framework from approximation theory: Kolmogorov widths. This concept provides a way to quantify how well an entire function class $\cF$ can be approximated by models of bounded complexity. Formally, the Kolmogorov width $d_\kappa(\cF)$ captures the best possible worst-case approximation error over $\cF$:
\begin{equation*}
    d_\kappa(\cF)
    =
    \Inf_{H\in\cH_\kappa} \sup_{F\in\cF} \Inf_{h \in H} \norm{F - h}_\mu \, ,
\end{equation*}
where the outermost infimum is taken over the set $\cH_\kappa$ of all linear subspaces of dimension $\kappa$.

We relate $d_\kappa(\cF)$ to our context in the following way. Throughout, we have considered approximation by functions in $\lidx_{K,\cL}$. This class is not necessarily a linear subspace of $L^1(\mu)$, since the sum of two $K$-piecewise functions is not necessarily $K$-piecewise. Nonetheless, if we fix a partition $\mathcal{P} = \{I_k\}_{k=1}^K$ of $\suppq$, the subclass
\begin{equation*}
    \lidx_{\cP,\cL}
    =
    \Big\{
        h : h(q) = {\textstyle \sum_k} h_k(q) \indf_{I_k}(q),\ h_k \in \cL
    \Big\}
\end{equation*}
does form a linear subspace, provided that $\cL$ is itself a linear subspace of $L^1(\mu)$. In such cases---e.g., when $\cL$ consists of polynomials of bounded degree---any function class $\lidx_{\cP,\cL}$ has dimension at most $\kappa=K \cdot \dim(\cL)$. This characterization leads to the following inequality.

\begin{lemma}
\label{lemma:width-inequality}
For any $\cF, \cL \subseteq L^1(\mu)$ where $\cL$ is a linear subspace of finite dimension, it holds
    \begin{equation*}
        \hspace{65pt}
        \sup_{F\in\cF} R_{K, \cL}(F)
        \geq
        d_\kappa(\cF),
        \hspace{15pt}
        \text{where $\kappa=K\cdot\dim(\cL)$}\,.
    \end{equation*}
\end{lemma}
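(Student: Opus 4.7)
The plan is to leverage the partition-wise linear subspace structure of $\lidx_{K,\cL}$ and invoke the Kolmogorov width definition directly. The key observation is that although $\lidx_{K,\cL}$ itself is only a union of subspaces (indexed by the admissible partitions into at most $K$ intervals), each subclass $\lidx_{\cP,\cL}$ attached to a fixed partition $\cP=\{I_k\}_{k=1}^K$ is a genuine finite-dimensional linear subspace of $L^1(\mu)$.

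I would proceed in three steps. First, verify that $\lidx_{\cP,\cL}$ is closed under addition and scalar multiplication: two elements sharing the partition $\cP$ combine piecewise within each $I_k$, and $\cL$ is itself a linear subspace. Fixing a basis $\{\phi_1,\ldots,\phi_m\}$ of $\cL$ with $m=\dim(\cL)$, the family $\{\phi_j\,\indf_{I_k}:1\le j\le m,\,1\le k\le K\}$ spans $\lidx_{\cP,\cL}$ with cardinality $Km=\kappa$, so $\dim\lidx_{\cP,\cL}\le\kappa$ and $\lidx_{\cP,\cL}\in\cH_{\kappa'}$ for some $\kappa'\le\kappa$. Second, by the Kolmogorov width definition together with its monotonicity in the dimension,
\[
d_\kappa(\cF)\le d_{\kappa'}(\cF)\le\sup_{F\in\cF}\inf_{h\in\lidx_{\cP,\cL}}\|F-h\|_\mu=\sup_{F\in\cF}R_{\cP,\cL}(F)
\]
for every admissible $\cP$. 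Third, since $\lidx_{K,\cL}=\bigcup_\cP\lidx_{\cP,\cL}$, we have $R_{K,\cL}(F)=\inf_\cP R_{\cP,\cL}(F)$; the plan is to pick, for a (near-)extremal $F^{\ast}\in\cF$ attaining $\sup_F R_{K,\cL}(F)$, its optimizing partition $\cP^{\ast}=\cP^{\ast}(F^{\ast})$ so that $\lidx_{\cP^{\ast},\cL}$ serves as a valid candidate subspace in the Kolmogorov-width infimum, allowing one to transfer the partition-wise lower bound to $\sup_F R_{K,\cL}(F)$.

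The main obstacle is precisely this last step: the Kolmogorov width requires a single $\kappa$-dimensional subspace that approximates \emph{all} $F\in\cF$ uniformly well, whereas $R_{K,\cL}(F)$ allows the underlying subspace (through $\cP$) to be tailored to each $F$. Bridging this gap calls either for a minimax-style interchange of $\inf_\cP$ and $\sup_F$---which in general goes in the wrong direction---or for additional structure on $\cF$ under which an adversarial-$F$ argument can be carried out. This is where the technical effort will concentrate, and its resolution determines whether the conclusion holds in full generality or only for sufficiently rich classes $\cF$ (such as the Lipschitz CDFs considered in the rest of the section).
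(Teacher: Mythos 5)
Your first two steps mirror the paper's approach exactly: fix a partition $\cP$, observe that $\lidx_{\cP,\cL}$ is a linear subspace of dimension at most $\kappa = K\cdot\dim(\cL)$, and conclude that $d_\kappa(\cF) \leq \sup_{F\in\cF}\inf_{h\in\lidx_{\cP,\cL}}\norm{F-h}_\mu$ for that fixed $\cP$. That much is correct and matches the structure of the written proof.

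The obstacle you flag in your last paragraph is genuine, and it is worth stating plainly that the paper's own proof does not cleanly escape it. Passing to $R_{K,\cL}(F) = \inf_\cP\inf_h\norm{F-h}_\mu$ introduces an $F$-dependent optimizing partition $\cP_t(F)$; taking $\sup_F$ therefore yields $\sup_F\inf_{h\in\lidx_{\cP_t(F),\cL}}\norm{F-h}_\mu$, in which the approximating subspace changes with $F$. To bound this below by $d_\kappa(\cF) = \Inf_{H\in\cH_\kappa}\sup_F\inf_{h\in H}\norm{F-h}_\mu$ one needs a single subspace inside the $\sup_F$, and the natural interchange $\sup_F\inf_\cP \le \inf_\cP\sup_F$ points the wrong way. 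The paper's proof produces the near-optimal $\cP_t$ via ``by definition of infimum'' (which is necessarily $F$-dependent: for a fixed $\cP_t$ one would have $R_{K,\cL}(F) \leq \inf_{h\in\lidx_{\cP_t,\cL}}\norm{F-h}_\mu$, the opposite inequality), takes $\sup_F$, and then asserts the result ``can be lower bounded by taking infimum over all such spaces''---but that deduction is only licensed when the subspace inside the $\sup_F$ is fixed. So the quantifier exchange you identified as the crux is exactly where the written argument is thinnest. The inequality is nonetheless sound for the class $\cF = W^{1,\infty}$ used downstream (adaptivity over knots does not improve the worst-case rate for the full Lipschitz ball in $L^1$), and the $\Omega(1/K)$ conclusion is independently certified by the explicit construction in Section~\ref{sec:7-adversarial}. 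But as a general statement about arbitrary $\cF$, closing the gap requires an additional argument---either a minimax-type equality valid for convex symmetric $\cF$, or a direct comparison between the nonlinear and linear widths of the specific class---that neither your proposal nor the printed proof supplies.
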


The proof can be found in Appendix~\ref{appendix:D-proof-fact}. We now focus on the case where $\cF$ is the class of Lipschitz continuous functions, \review{denoted by} $W^{1,\infty}$. This notation comes from Sobolev space theory, which provides a convenient formalism for characterizing function smoothness, and within which most Kolmogorov width results are typically stated. When $\mu$ is the uniform distribution over a compact interval $[a,b]$, it \review{holds that} (see~\cite[Chapter 14, Theorem 3.8]{lorentz1996constructive}):
\begin{equation}
\label{eq:kolmogorov-width-bound}
    d_\kappa(W^{1,\infty}) = \Omega(1/\kappa)\,.
\end{equation}
Notice that for $\cL = P_m$, the class of degree-$m$ polynomials, we have $\kappa = (m+1)K$, since each piece has $m+1$ parameters. Piecewise linear models correspond to $m=1$ and thus $\kappa = 2K$. Hence, for fixed $m$, Lemma \ref{lemma:width-inequality} and equation \eqref{eq:kolmogorov-width-bound} imply $\sup_{F\in W^{1,\infty}} R_{K, P_m}(F) = \Omega(1/K)$.

This has a direct implication for learned index structures: even the best possible model using $K$ adaptive segments and bounded-degree polynomial predictors cannot achieve approximation error $o(1/K)$ for general Lipschitz continuous CDFs.\footnote{This conclusion requires some care. While the Kolmogorov width bound is stated for the worst-case over the class $W^{1,\infty}$, it is not immediate that the worst-case functions realizing the $\Omega(1/\kappa)$ bound include any CDFs. In principle, one might worry that these worst-case functions \review{do not include any} valid CDFs. In Appendix~\ref{appendix:D-lipschitz-cdf}, we provide a rigorous argument showing that any hard-to-approximate function $F \in W^{1,\infty}$ can be transformed into a CDF that remains equally hard to approximate.}

These results match the $\Omega(1/K)$ result obtained in Section~\ref{sec:7-adversarial}. While the explicit construction from Section~\ref{sec:7-adversarial} can accommodate a wide class of distributions $\mu$, including non-uniform densities, the Kolmogorov width approach offers a wider perspective in regards to the approximating functions. In particular, it applies to any piecewise polynomial class $\lidx_{K, P_m}$ and, more generally, to any linear span of $\kappa$ functions, such as truncated Fourier series.

\review{The} Kolmogorov width framework makes it explicit that the $\Omega(1/K)$ bound is inherently a worst-case result. For smoother subclasses $\cF \subseteq W^{1,\infty}$, the theory yields sharper rates: for example, for $W^{2,\infty}$ (i.e., functions with Lipschitz continuous derivatives), one obtains $d_\kappa(W^{2,\infty}) = \Theta(1/\kappa^2)$. This highlights how stronger regularity assumptions can enable faster approximation. In contrast, the lower bound for piecewise constant models (Section~\ref{sec:6}) applies uniformly to all Lipschitz continuous CDFs and does not improve with additional regularity.

\subparagraph{Concluding Remarks.}
A key takeaway from this section is that for the class $\cF = W^{1,\infty}$ of Lipschitz continuous functions, the worst-case approximation error under learned index models in the class $\lidx_{K,P_m}$ satisfies $\Omega(1/K)$ for any fixed $m$. This resembles the $\Omega(1/K)$ rate derived in Section~\ref{sec:6} for piecewise constant approximation, but here it holds only in a worst-case sense over $\cF$, not uniformly for all $F \in \cF$. In particular, the result does not extend to smoother subclasses such as $W^{2,\infty}$, where approximation can be better, but it still constitutes a meaningful lower bound for the general performance of learned indexes.

\section{Discussion and Conclusions}
\label{sec:8}
This paper establishes an analytical framework for lower bounding query time using learned index structures (Sections~\ref{sec:3}--\ref{sec:4-new}). We consider two classes of approximating functions---piecewise constant and piecewise linear approximators---that are most commonly used in practice, and relate query time to the error of approximating the CDF of the data-generating distribution. For piecewise constant functions, we show that the approximation error decreases linearly with the number of segments $K$ for a broad class of CDFs (Section~\ref{sec:6}). For piecewise linear approximators, the situation is more subtle: we show that the approximation error depends on the regularity of the CDF, and that in general using a richer approximating model does not guarantee an improved lower bound on the approximation error (Section~\ref{sec:7}). However, under additional assumptions on the smoothness of the data-generating CDF, the bound on the approximation error can be lowered further.

\review{
    More specifically, Sections~\ref{sec:6} and~\ref{sec:7} establish a lower bound of $\Omega(1/K)$ on approximation error that, in general, cannot be improved for the class of Lipschitz continuous CDFs. For piecewise constant models, this holds uniformly (Section~\ref{sec:6}); for piecewise linear models, the bound applies in a worst-case sense over the choice of CDF (Section~\ref{sec:7}). In either case, the $\Omega(1/K)$ behavior governs the worst-case regime. Combining these results with the analysis summarized in Table~\ref{table-lower-bounds-specific} (e.g., see rows E2 and B2) we obtain an $\Omega(\log(n/K))$ lower bound on worst-case query time. As discussed in Section~\ref{sec:2}, this implies that the number of segments $K$ (and therefore, space usage) must be nearly linear in $n$ to yield an asymptotic advantage over traditional methods, a conclusion that holds for a broad class of learned indexes.
}

It is instructive to compare our bounds with the information-theoretic results in~\cite{zeighami2024towards}. Their worst-case bounds require a finite query domain $\suppq$, a restriction not present in our framework. Moreover, our analysis yields sharper and more applicable insights for typical learned index designs, particularly those based on piecewise constant or linear models. In the average-case setting, their results imply that achieving expected error $\eta$ requires $\log(1/\eta)$ bits of model capacity. In contrast, our bounds---under independence assumptions---show a $1/\eta$ scaling, which is larger when $\eta \to 0$. Additionally, when the query and data distributions are the same---i.e., $d\mu=dF$---we show that the number of parameters can scale as $\Omega(n)$, compared to $\Omega(\sqrt{n})$ in their framework.


Aside from their theoretical importance, our results offer some practical insights for the design of learned indexes: first, they show that the number of segments $K$ should not be chosen without regard to the underlying data, as there exist ``adversarial'' distributions for which the approximation error can be poor for a given $K$ (see Section~\ref{sec:7-adversarial}). Second, the results of Section~\ref{sec:7-n-widths} show that approximation capacity depends not only on the expressiveness of the model class but also on the smoothness of the CDF. This suggests that selecting the model class in a data-dependent manner---adapted to the regularity of the CDF---could lead to more efficient index designs, an idea that warrants further exploration.

Several open directions remain: extending average-case analysis to binary search; relaxing independence assumptions on the data; considering a wider class of index structures; and analyzing learned index performance under updates or dynamic workloads. We hope this work offers some of the foundational tools needed to explore these ideas further.






\bibliography{main}

@String{Computing = "Computing" }

@String{Computer = "{IEEE} Computer" }

@String{Springer = "Springer-Verlag" }

@inproceedings{yannakakis1981algorithms,
author = {Yannakakis, Mihalis},
title = {Algorithms for acyclic database schemes},
year = {1981},
publisher = {VLDB Endowment},
abstract = {Many real-world situations can be captured by a set of functional dependencies and a single join dependency of a particular form called acyclic [B..]. The join dependency corresponds to a natural decomposition into meaningfull objects (an acyclic database scheme). Our purpose in this paper is to describe efficient algorithms in this setting for various problems, such as computing projections, minimizing joins, inferring dependencies, and testing for dependency satisfaction.},
booktitle = {Proceedings of the Seventh International Conference on Very Large Data Bases - Volume 7},
pages = {82–94},
numpages = {13},
location = {Cannes, France},
series = {VLDB '81}
}

@InProceedings{koutris2025quest,
  author =	{Koutris, Paraschos and Deep, Shaleen and Fan, Austen and Zhao, Hangdong},
  title =	{{The Quest for Faster Join Algorithms}},
  booktitle =	{28th International Conference on Database Theory (ICDT 2025)},
  pages =	{1:1--1:12},
  series =	{Leibniz International Proceedings in Informatics (LIPIcs)},
  ISBN =	{978-3-95977-364-5},
  ISSN =	{1868-8969},
  year =	{2025},
  volume =	{328},
  editor =	{Roy, Sudeepa and Kara, Ahmet},
  publisher =	{Schloss Dagstuhl -- Leibniz-Zentrum f{\"u}r Informatik},
  address =	{Dagstuhl, Germany},
  URL =		{https://drops.dagstuhl.de/entities/document/10.4230/LIPIcs.ICDT.2025.1},
  URN =		{urn:nbn:de:0030-drops-229428},
  doi =		{10.4230/LIPIcs.ICDT.2025.1},
  annote =	{Keywords: Conjunctive Queries, Joins, Tree Decompositions, Enumeration, Semirings}
}

@article{leis2015good,
author = {Leis, Viktor and Gubichev, Andrey and Mirchev, Atanas and Boncz, Peter and Kemper, Alfons and Neumann, Thomas},
title = {How good are query optimizers, really?},
year = {2015},
issue_date = {November 2015},
publisher = {VLDB Endowment},
volume = {9},
number = {3},
issn = {2150-8097},
url = {https://doi.org/10.14778/2850583.2850594},
doi = {10.14778/2850583.2850594},
abstract = {Finding a good join order is crucial for query performance. In this paper, we introduce the Join Order Benchmark (JOB) and experimentally revisit the main components in the classic query optimizer architecture using a complex, real-world data set and realistic multi-join queries. We investigate the quality of industrial-strength cardinality estimators and find that all estimators routinely produce large errors. We further show that while estimates are essential for finding a good join order, query performance is unsatisfactory if the query engine relies too heavily on these estimates. Using another set of experiments that measure the impact of the cost model, we find that it has much less influence on query performance than the cardinality estimates. Finally, we investigate plan enumeration techniques comparing exhaustive dynamic programming with heuristic algorithms and find that exhaustive enumeration improves performance despite the sub-optimal cardinality estimates.},
journal = {Proc. VLDB Endow.},
month = nov,
pages = {204–215},
numpages = {12}
}

@article{atserias2013size,
	abstract = {},
	author = {Atserias, Albert and Grohe, Martin and Marx, D\'{a}niel},
	doi = {10.1137/110859440},
	eprint = {https://doi.org/10.1137/110859440},
	journal = {SIAM Journal on Computing},
	number = {4},
	pages = {1737-1767},
	title = {Size Bounds and Query Plans for Relational Joins},
	url = {https://doi.org/10.1137/110859440},
	volume = {42},
	year = {2013},
	bdsk-url-1 = {https://doi.org/10.1137/110859440}}

@inproceedings{suciu2023applications,
  author       = {Dan Suciu},
  title        = {Applications of Information Inequalities to Database Theory Problems},
  booktitle    = {38th Annual {ACM/IEEE} Symposium on Logic in Computer Science, {LICS}
                  2023, Boston, MA, USA, June 26-29, 2023},
  pages        = {1--30},
  publisher    = {{IEEE}},
  year         = {2023},
  url          = {https://doi.org/10.1109/LICS56636.2023.10175769},
  doi          = {10.1109/LICS56636.2023.10175769},
  timestamp    = {Wed, 29 May 2024 16:05:22 +0200},
  biburl       = {https://dblp.org/rec/conf/lics/Suciu23.bib},
  bibsource    = {dblp computer science bibliography, https://dblp.org}
}

@inproceedings{guttman1984rtrees,
author = {Guttman, Antonin},
title = {R-trees: a dynamic index structure for spatial searching},
year = {1984},
isbn = {0897911288},
publisher = {Association for Computing Machinery},
address = {New York, NY, USA},
url = {https://doi.org/10.1145/602259.602266},
doi = {10.1145/602259.602266},
abstract = {In order to handle spatial data efficiently, as required in computer aided design and geo-data applications, a database system needs an index mechanism that will help it retrieve data items quickly according to their spatial locations However, traditional indexing methods are not well suited to data objects of non-zero size located m multi-dimensional spaces In this paper we describe a dynamic index structure called an R-tree which meets this need, and give algorithms for searching and updating it. We present the results of a series of tests which indicate that the structure performs well, and conclude that it is useful for current database systems in spatial applications},
booktitle = {Proceedings of the 1984 ACM SIGMOD International Conference on Management of Data},
pages = {47–57},
numpages = {11},
location = {Boston, Massachusetts},
series = {SIGMOD '84}
}

@article{finkel1974quad,
author = {Finkel, R. A. and Bentley, J. L.},
title = {Quad trees {A} data structure for retrieval on composite keys},
year = {1974},
issue_date = {March     1974},
publisher = {Springer-Verlag},
address = {Berlin, Heidelberg},
volume = {4},
number = {1},
issn = {0001-5903},
url = {https://doi.org/10.1007/BF00288933},
doi = {10.1007/BF00288933},
abstract = {The quad tree is a data structure appropriate for storing information to be retrieved on composite keys. We discuss the specific case of two-dimensional retrieval, although the structure is easily generalised to arbitrary dimensions. Algorithms are given both for staightforward insertion and for a type of balanced insertion into quad trees. Empirical analyses show that the average time for insertion is logarithmic with the tree size. An algorithm for retrieval within regions is presented along with data from empirical studies which imply that searching is reasonably efficient. We define an optimized tree and present an algorithm to accomplish optimization in n log n time. Searching is guaranteed to be fast in optimized trees. Remaining problems include those of deletion from quad trees and merging of quad trees, which seem to be inherently difficult operations.},
journal = {Acta Inf.},
month = mar,
pages = {1–9},
numpages = {9}
}

@inproceedings{kraska2018case,
author = {Kraska, Tim and Beutel, Alex and Chi, Ed H. and Dean, Jeffrey and Polyzotis, Neoklis},
title = {The Case for Learned Index Structures},
year = {2018},
isbn = {9781450347037},
publisher = {Association for Computing Machinery},
address = {New York, NY, USA},
url = {https://doi.org/10.1145/3183713.3196909},
doi = {10.1145/3183713.3196909},
abstract = {},
booktitle = {Proceedings of the 2018 International Conference on Management of Data},
pages = {489–504},
numpages = {16},
keywords = {b-tree, bloom-filter, cdf, hash-map, index structures, learned data structures, learned index, learned index structure, linear regression, mixture of experts, neural net},
location = {Houston, TX, USA},
series = {SIGMOD '18}
}

@article{graefe2011modern,
  title = {Modern {B-tree} Techniques},
  author = {Graefe, Goetz and others},
  journal = {Foundations and Trends in Databases},
  volume = {3},
  number = {4},
  pages = {203--402},
  year = {2011},
  timestamp = {Sat, 25 Apr 2020 01:00:00 +0200},
  biburl = {https://dblp.org/rec/journals/ftdb/Graefe11.bib},
  bibsource = {dblp computer science bibliography, https://dblp.org},
  doi = {10.1561/1900000028},
  _bib2doi_selected = {dblp:/rec/journals/ftdb/Graefe11.bib},
  _bib2doi_confirmed = {true},
}

@inproceedings{ferragina2020learned,
  title = {Why Are Learned Indexes So Effective?},
  author = {Ferragina, Paolo and Lillo, Fabrizio and Vinciguerra, Giorgio},
  booktitle = {Proceedings of the 37th International Conference on Machine Learning},
  pages = {3123--3132},
  year = {2020},
  editor = {III, Hal Daumé and Singh, Aarti},
  volume = {119},
  series = {Proceedings of Machine Learning Research},
  month = {13--18 Jul},
  publisher = {PMLR},
  pdf = {http://proceedings.mlr.press/v119/ferragina20a/ferragina20a.pdf},
  url = {http://proceedings.mlr.press/v119/ferragina20a.html},
  address = {Virtual},
  timestamp = {Tue, 15 Dec 2020 00:00:00 +0100},
  biburl = {https://dblp.org/rec/conf/icml/FerraginaLV20.bib},
  bibsource = {dblp computer science bibliography, https://dblp.org},
  _bib2doi_selected = {dblp:/rec/conf/icml/FerraginaLV20.bib},
  _bib2doi_confirmed = {true},
}

@article{ferragina2020pgm,
  author = {Ferragina, Paolo and Vinciguerra, Giorgio},
  title = {The PGM-Index: A Fully-Dynamic Compressed Learned Index with Provable Worst-Case Bounds},
  year = {2020},
  issue_date = {April 2020},
  publisher = {VLDB Endowment},
  volume = {13},
  number = {8},
  issn = {2150-8097},
  url = {https://doi.org/10.14778/3389133.3389135},
  doi = {10.14778/3389133.3389135},
  journal = {Proc. VLDB Endow.},
  month = {apr},
  pages = {1162–1175},
  numpages = {14},
}

@inproceedings{ding2020alex,
  author = {Ding, Jialin and Minhas, Umar Farooq and Yu, Jia and Wang, Chi and Do, Jaeyoung and Li, Yinan and Zhang, Hantian and Chandramouli, Badrish and Gehrke, Johannes and Kossmann, Donald and Lomet, David and Kraska, Tim},
  title = {ALEX: An Updatable Adaptive Learned Index},
  year = {2020},
  isbn = {9781450367356},
  publisher = {Association for Computing Machinery},
  address = {New York, NY, USA},
  url = {https://doi.org/10.1145/3318464.3389711},
  doi = {10.1145/3318464.3389711},
  booktitle = {Proceedings of the 2020 ACM SIGMOD International Conference on Management of Data},
  pages = {969–984},
  numpages = {16},
  keywords = {B+ tree, learned indexes, learned data structures, access methods},
  location = {Portland, OR, USA},
  series = {SIGMOD '20},
}

@article{wu2021updatable,
author = {Wu, Jiacheng and Zhang, Yong and Chen, Shimin and Wang, Jin and Chen, Yu and Xing, Chunxiao},
title = {Updatable learned index with precise positions},
year = {2021},
issue_date = {April 2021},
publisher = {VLDB Endowment},
volume = {14},
number = {8},
issn = {2150-8097},
url = {https://doi.org/10.14778/3457390.3457393},
doi = {10.14778/3457390.3457393},
abstract = {},
journal = {Proc. VLDB Endow.},
month = apr,
pages = {1276–1288},
numpages = {13}
}

@inproceedings{nathan2020learning,
author = {Nathan, Vikram and Ding, Jialin and Alizadeh, Mohammad and Kraska, Tim},
title = {Learning Multi-Dimensional Indexes},
year = {2020},
isbn = {9781450367356},
publisher = {Association for Computing Machinery},
address = {New York, NY, USA},
url = {https://doi.org/10.1145/3318464.3380579},
doi = {10.1145/3318464.3380579},
abstract = {},
booktitle = {Proceedings of the 2020 ACM SIGMOD International Conference on Management of Data},
pages = {985–1000},
numpages = {16},
keywords = {databases, in-memory, indexing, multi-dimensional, primary index},
location = {Portland, OR, USA},
series = {SIGMOD '20}
}

@article{ding2020tsunami,
author = {Ding, Jialin and Nathan, Vikram and Alizadeh, Mohammad and Kraska, Tim},
title = {Tsunami: a learned multi-dimensional index for correlated data and skewed workloads},
year = {2020},
issue_date = {October 2020},
publisher = {VLDB Endowment},
volume = {14},
number = {2},
issn = {2150-8097},
url = {https://doi.org/10.14778/3425879.3425880},
doi = {10.14778/3425879.3425880},
abstract = {Filtering data based on predicates is one of the most fundamental operations for any modern data warehouse. Techniques to accelerate the execution of filter expressions include clustered indexes, specialized sort orders (e.g., Z-order), multi-dimensional indexes, and, for high selectivity queries, secondary indexes. However, these schemes are hard to tune and their performance is inconsistent. Recent work on learned multi-dimensional indexes has introduced the idea of automatically optimizing an index for a particular dataset and workload. However, the performance of that work suffers in the presence of correlated data and skewed query workloads, both of which are common in real applications. In this paper, we introduce Tsunami, which addresses these limitations to achieve up to 6X faster query performance and up to 8X smaller index size than existing learned multi-dimensional indexes, in addition to up to 11X faster query performance and 170X smaller index size than optimally-tuned traditional indexes.},
journal = {Proc. VLDB Endow.},
month = oct,
pages = {74–86},
numpages = {13}
}

@inproceedings{li2020lisa,
author = {Li, Pengfei and Lu, Hua and Zheng, Qian and Yang, Long and Pan, Gang},
title = {LISA: A Learned Index Structure for Spatial Data},
year = {2020},
isbn = {9781450367356},
publisher = {Association for Computing Machinery},
address = {New York, NY, USA},
url = {https://doi.org/10.1145/3318464.3389703},
doi = {10.1145/3318464.3389703},
abstract = {In spatial query processing, the popular index R-tree may incur large storage consumption and high IO cost. Inspired by the recent learned index [17] that replaces B-tree with machine learning models, we study an analogy problem for spatial data. We propose a novel Learned Index structure for Spatial dAta (LISA for short). Its core idea is to use machine learning models, through several steps, to generate searchable data layout in disk pages for an arbitrary spatial dataset. In particular, LISA consists of a mapping function that maps spatial keys (points) into 1-dimensional mapped values, a learned shard prediction function that partitions the mapped space into shards, and a series of local models that organize shards into pages. Based on LISA, a range query algorithm is designed, followed by a lattice regression model that enables us to convert a KNN query to range queries. Algorithms are also designed for LISA to handle data updates. Extensive experiments demonstrate that LISA clearly outperforms R-tree and other alternatives in terms of storage consumption and IO cost for queries. Moreover, LISA can handle data insertions and deletions efficiently.},
booktitle = {Proceedings of the 2020 ACM SIGMOD International Conference on Management of Data},
pages = {2119–2133},
numpages = {15},
keywords = {spatial index, machine learning, learned index, database},
location = {Portland, OR, USA},
series = {SIGMOD '20}
}

@inproceedings{yang2023flirt,
  author = {Guang Yang and Liang Liang and Ali Hadian and Thomas Heinis},
  editor = {Julia Stoyanovich and Jens Teubner and Nikos Mamoulis and Evaggelia Pitoura and Jan M{\"{u}}hlig and Katja Hose and Sourav S. Bhowmick and Matteo Lissandrini},
  title = {{FLIRT:} {A} Fast Learned Index for Rolling Time frames},
  booktitle = {Proceedings 26th International Conference on Extending Database Technology, {EDBT} 2023, Ioannina, Greece, March 28-31, 2023},
  pages = {234--246},
  publisher = {OpenProceedings.org},
  year = {2023},
  url = {https://doi.org/10.48786/edbt.2023.19},
  doi = {10.48786/edbt.2023.19},
  timestamp = {Fri, 14 Apr 2023 01:00:00 +0200},
  biburl = {https://dblp.org/rec/conf/edbt/YangL0H23.bib},
  bibsource = {dblp computer science bibliography, https://dblp.org},
  address = {Ioannina, Greece},
  _bib2doi_selected = {dblp:/rec/conf/edbt/YangL0H23.bib},
  _bib2doi_confirmed = {true},
}

@article{liang2024swix,
author = {Liang, Liang and Yang, Guang and Hadian, Ali and Croquevielle, Luis Alberto and Heinis, Thomas},
title = {SWIX: A Memory-efficient Sliding Window Learned Index},
year = {2024},
issue_date = {February 2024},
publisher = {Association for Computing Machinery},
address = {New York, NY, USA},
volume = {2},
number = {1},
url = {https://doi.org/10.1145/3639296},
doi = {10.1145/3639296},
abstract = {Data stream processing systems enable querying over sliding windows of streams of data. Efficient index structures for the streaming window are a crucial building block to enable querying the sliding window for operations such as aggregation and joins. This paper proposes SWIX, a novel memory-efficient learned index for sliding windows. Unlike conventional learned indexes that rely on tree structures to achieve logarithmic query cost, SWIX has a flat structure that uses substantially less memory and enables efficient query execution while having a low cost for index maintenance when inserting (and retraining). SWIX dynamically adapts itself to the real-time distribution shifts of data streams.SWIX outperforms existing indexes in terms of query execution time and memory footprint for workloads characterized by very frequent updates. Our results show that SWIX has a significantly smaller memory footprint than conventional, streaming, and learned indexes, using only 22\% to 42\% of the size compared to state-of-the-art approaches, yet outperforming them by up 1.2\texttimes{} to 1.6\texttimes{} on average (and up to 52\texttimes{}) in terms of query time, making it a space- and time-efficient method for indexing data streams. For concurrent learned indexes, Parallel SWIX can achieve up to 3.45\texttimes{} throughput with only 34\% of memory consumption.},
journal = {Proc. ACM Manag. Data},
month = mar,
articleno = {41},
numpages = {26},
keywords = {automatic tuning, flat structure, learned index, lightweight structure, parallel structure, streaming processing, update heavy, window-based query}
}

@article{marcus2020benchmarking,
  author = {Marcus, Ryan and Kipf, Andreas and van Renen, Alexander and Stoian, Mihail and Misra, Sanchit and Kemper, Alfons and Neumann, Thomas and Kraska, Tim},
  title = {Benchmarking Learned Indexes},
  year = {2020},
  issue_date = {September 2020},
  publisher = {VLDB Endowment},
  volume = {14},
  number = {1},
  issn = {2150-8097},
  url = {https://doi.org/10.14778/3421424.3421425},
  doi = {10.14778/3421424.3421425},
  journal = {Proc. VLDB Endow.},
  month = {sep},
  pages = {1–13},
  numpages = {13},
}

@article{sun2023learned,
author = {Sun, Zhaoyan and Zhou, Xuanhe and Li, Guoliang},
title = {Learned Index: A Comprehensive Experimental Evaluation},
year = {2023},
issue_date = {April 2023},
publisher = {VLDB Endowment},
volume = {16},
number = {8},
issn = {2150-8097},
url = {https://doi.org/10.14778/3594512.3594528},
doi = {10.14778/3594512.3594528},
abstract = {Indexes can improve query-processing performance by avoiding full table scans. Although traditional indexes (e.g., B+-tree) have been widely used, learned indexes are proposed to adopt machine learning models to reduce the query latency and index size. However, existing learned indexes are (1) not thoroughly evaluated under the same experimental framework and are (2) not comprehensively compared with different settings (e.g., key lookup, key insert, concurrent operations, bulk loading). Moreover, it is hard to select appropriate learned indexes for practitioners in different settings. To address those problems, this paper detailedly reviews existing learned indexes and discusses the design choices of key components in learned indexes, including key lookup (position inference which predicts the position of a key, and position refinement which re-searches the position if the predicted position is incorrect), key insert, concurrency, and bulk loading. Moreover, we provide a testbed to facilitate the design and test of new learned indexes for researchers. We compare state-of-the-art learned indexes in the same experimental framework, and provide findings to select suitable learned indexes under various practical scenarios.},
journal = {Proc. VLDB Endow.},
month = apr,
pages = {1992–2004},
numpages = {13}
}

@inproceedings{galakatos2019fiting,
  author = {Galakatos, Alex and Markovitch, Michael and Binnig, Carsten and Fonseca, Rodrigo and Kraska, Tim},
  title = {FITing-Tree: A Data-Aware Index Structure},
  year = {2019},
  isbn = {9781450356435},
  publisher = {Association for Computing Machinery},
  address = {New York, NY, USA},
  url = {https://doi.org/10.1145/3299869.3319860},
  doi = {10.1145/3299869.3319860},
  booktitle = {Proceedings of the 2019 International Conference on Management of Data},
  pages = {1189–1206},
  numpages = {18},
  location = {Amsterdam, Netherlands},
  series = {SIGMOD '19},
}

@inproceedings{hadian2021shift,
  author = {Ali Hadian and Thomas Heinis},
  editor = {Yannis Velegrakis and Demetris Zeinalipour{-}Yazti and Panos K. Chrysanthis and Francesco Guerra},
  title = {Shift-Table: {A} Low-latency Learned Index for Range Queries using Model Correction},
  booktitle = {Proceedings of the 24th International Conference on Extending Database Technology, {EDBT} 2021, Nicosia, Cyprus, March 23 - 26, 2021},
  pages = {253--264},
  publisher = {OpenProceedings.org},
  year = {2021},
  url = {https://doi.org/10.5441/002/edbt.2021.23},
  doi = {10.5441/002/edbt.2021.23},
  timestamp = {Fri, 19 Mar 2021 00:00:00 +0100},
  biburl = {https://dblp.org/rec/conf/edbt/0001H21.bib},
  bibsource = {dblp computer science bibliography, https://dblp.org},
  _bib2doi_selected = {dblp:/rec/conf/edbt/0001H21.bib},
  _bib2doi_confirmed = {true},
}

@inproceedings{kipf2020radixspline,
  author = {Kipf, Andreas and Marcus, Ryan and van Renen, Alexander and Stoian, Mihail and Kemper, Alfons and Kraska, Tim and Neumann, Thomas},
  title = {RadixSpline: A Single-Pass Learned Index},
  year = {2020},
  isbn = {9781450380294},
  publisher = {Association for Computing Machinery},
  address = {New York, NY, USA},
  url = {https://doi.org/10.1145/3401071.3401659},
  doi = {10.1145/3401071.3401659},
  booktitle = {Proceedings of the Third International Workshop on Exploiting Artificial Intelligence Techniques for Data Management},
  articleno = {5},
  numpages = {5},
  location = {Portland, Oregon},
  series = {aiDM '20},
  timestamp = {Mon, 08 Jun 2020 01:00:00 +0200},
  biburl = {https://dblp.org/rec/conf/sigmod/KipfMRSKK020.bib},
  bibsource = {dblp computer science bibliography, https://dblp.org},
  _bib2doi_selected = {dblp:/rec/conf/sigmod/KipfMRSKK020.bib},
  _bib2doi_confirmed = {true},
}

@inproceedings{zeighami2023distribution,
author = {Zeighami, Sepanta and Shahabi, Cyrus},
title = {On distribution dependent sub-logarithmic query time of learned indexing},
year = {2023},
publisher = {JMLR.org},
abstract = {A fundamental problem in data management is to find the elements in an array that match a query. Recently, learned indexes are being extensively used to solve this problem, where they learn a model to predict the location of the items in the array. They are empirically shown to outperform non-learned methods (e.g., B-trees or binary search that answer queries in O(log n) time) by orders of magnitude. However, success of learned indexes has not been theoretically justified. Only existing attempt shows the same query time of O(log n), but with a constant factor improvement in space complexity over non-learned methods, under some assumptions on data distribution. In this paper, we significantly strengthen this result, showing that under mild assumptions on data distribution, and the same space complexity as non-learned methods, learned indexes can answer queries in O(log log n) expected query time. We also show that allowing for slightly larger but still near-linear space overhead, a learned index can achieve O(1) expected query time. Our results theoretically prove learned indexes are orders of magnitude faster than non-learned methods, theoretically grounding their empirical success.},
booktitle = {Proceedings of the 40th International Conference on Machine Learning},
articleno = {1704},
numpages = {12},
location = {Honolulu, Hawaii, USA},
series = {ICML'23}
}

@InProceedings{croquevielle_et_al:LIPIcs.ICDT.2025.19,
  author =	{Croquevielle, Luis Alberto and Yang, Guang and Liang, Liang and Hadian, Ali and Heinis, Thomas},
  title =	{{Beyond Logarithmic Bounds: Querying in Constant Expected Time with Learned Indexes}},
  booktitle =	{28th International Conference on Database Theory (ICDT 2025)},
  pages =	{19:1--19:21},
  series =	{Leibniz International Proceedings in Informatics (LIPIcs)},
  ISBN =	{978-3-95977-364-5},
  ISSN =	{1868-8969},
  year =	{2025},
  volume =	{328},
  editor =	{Roy, Sudeepa and Kara, Ahmet},
  publisher =	{Schloss Dagstuhl -- Leibniz-Zentrum f{\"u}r Informatik},
  address =	{Dagstuhl, Germany},
  URL =		{https://drops.dagstuhl.de/entities/document/10.4230/LIPIcs.ICDT.2025.19},
  URN =		{urn:nbn:de:0030-drops-229603},
  doi =		{10.4230/LIPIcs.ICDT.2025.19},
  annote =	{Keywords: Learned Indexes, Expected Time, Stochastic Processes, R\'{e}nyi Entropy}
}

@inproceedings{zeighami2024theoretical,
author = {Zeighami, Sepanta and Shahabi, Cyrus},
title = {Theoretical analysis of learned database operations under distribution shift through distribution learnability},
year = {2024},
publisher = {JMLR.org},
abstract = {Use of machine learning to perform database operations, such as indexing, cardinality estimation, and sorting, is shown to provide substantial performance benefits. However, when datasets change and data distribution shifts, empirical results also show performance degradation for learned models, possibly to worse than non-learned alternatives. This, together with a lack of theoretical understanding of learned methods undermines their practical applicability, since there are no guarantees on how well the models will perform after deployment. In this paper, we present the first known theoretical characterization of the performance of learned models in dynamic datasets, for the aforementioned operations. Our results show novel theoretical characteristics achievable by learned models and provide bounds on the performance of the models that characterize their advantages over non-learned methods, showing why and when learned models can outperform the alternatives. Our analysis develops the distribution learnability framework and novel theoretical tools which build the foundation for the analysis of learned database operations in the future.},
booktitle = {Proceedings of the 41st International Conference on Machine Learning},
articleno = {2404},
numpages = {23},
location = {Vienna, Austria},
series = {ICML'24}
}

@inproceedings{zeighami2024towards,
  author       = {Sepanta Zeighami and
                  Cyrus Shahabi},
  title        = {Towards Establishing Guaranteed Error for Learned Database Operations},
  booktitle    = {The Twelfth International Conference on Learning Representations,
                  {ICLR} 2024, Vienna, Austria, May 7-11, 2024},
  publisher    = {OpenReview.net},
  year         = {2024},
  numpages     = {28},
  url          = {https://openreview.net/forum?id=6tqgL8VluV},
  timestamp    = {Wed, 07 Aug 2024 17:11:53 +0200},
  biburl       = {https://dblp.org/rec/conf/iclr/ZeighamiS24.bib},
  bibsource    = {dblp computer science bibliography, https://dblp.org}
}

@book{aho1974design,
  author = {Aho, Alfred V. and Hopcroft, John E.},
  title = {The Design and Analysis of Computer Algorithms},
  year = {1974},
  isbn = {0201000296},
  publisher = {Addison-Wesley Longman Publishing Co., Inc.},
  address = {USA},
  edition = {1st},
}

@book{graf2000foundations,
  title={Foundations of quantization for probability distributions},
  author={Graf, Siegfried and Luschgy, Harald},
  year={2000},
  publisher={Springer Science \& Business Media}
}

@article{massart1990tight,
    author = {P. Massart},
    title = {{The Tight Constant in the Dvoretzky-Kiefer-Wolfowitz Inequality}},
    volume = {18},
    journal = {The Annals of Probability},
    number = {3},
    publisher = {Institute of Mathematical Statistics},
    pages = {1269 -- 1283},
    keywords = {Brownian bridge, empirical process, Kolmogorov-Smirnov statistics},
    year = {1990},
    doi = {10.1214/aop/1176990746},
    URL = {https://doi.org/10.1214/aop/1176990746}
}

@article{csorgHo1996exact,
    author = {CSöRgő, Sándor and Faraway, Julian J.},
    title = {The Exact and Asymptotic Distributions of Cramér-Von Mises Statistics},
    journal = {Journal of the Royal Statistical Society: Series B (Methodological)},
    volume = {58},
    number = {1},
    pages = {221-234},
    year = {2018},
    month = {12},
    abstract = {It is shown that an asymptotically precise one-term correction to the asymptotic distribution function of the classical Cramér-von Mises statistic approximates the exact distribution function remarkably closely for sample sizes as small as 7 or even smaller. This correction can be quickly evaluated, and hence it is suitable for the computation of practically exact p-values when testing simple goodness of fit. Similar findings hold for Watson's rotationally invariant modification, where a sample size of 4 appears to suffice.},
    issn = {0035-9246},
    doi = {10.1111/j.2517-6161.1996.tb02077.x},
    url = {https://doi.org/10.1111/j.2517-6161.1996.tb02077.x},
    eprint = {https://academic.oup.com/jrsssb/article-pdf/58/1/221/49098785/jrsssb\_58\_1\_221.pdf},
}

@book{lorentz1996constructive,
	author = { Lorentz, George, G.  and  Golitschek, Manfred V. },
	title = {Constructive approximation : advanced problems},
	publisher = {Springer-Verlag},
	year = {1996},
	address = {Berlin}
}

@book{evans10,
  abstract = {"This is the second edition of the now definitive text on partial differential equations (PDE). It offers a comprehensive survey of modern techniques in the theoretical study of PDE with particular emphasis on nonlinear equations. Its wide scope and clear exposition make it a great text for a graduate course in PDE. For this edition, the author has made numerous changes, including: a new chapter on nonlinear wave equations, more than 80 new exercises, several new sections, and a significantly expanded bibliography."--Publisher's description.},
  added-at = {2022-10-12T17:19:51.000+0200},
  address = {Providence, R.I.},
  author = {Evans, Lawrence C.},
  biburl = {https://www.bibsonomy.org/bibtex/2f5b120723ea78913e7e700ddd1a99301/annakrause},
  interhash = {59982ce44cc43813ccb14c0d647a59ee},
  intrahash = {f5b120723ea78913e7e700ddd1a99301},
  isbn = {9780821849743 0821849743},
  keywords = {ai4science neuralpdes pdes theory},
  publisher = {American Mathematical Society},
  refid = {465190110},
  timestamp = {2022-10-12T17:19:51.000+0200},
  title = {Partial differential equations},
  year = 2010
}

@book{rivlin1981introduction,
    author = {Rivlin, Theodore J},
    title = {An introduction to the approximation of functions},
    publisher = {Dover},
    year = {1981},
    address = {New York}
}

\appendix

\section{Proof of Scaling Lemma and Proposition \ref{prop:basic-error-lower-bound}}
\label{appendix:A}
\begin{lemma}[Scaling of Approximation Error]
\label{lem:scaling}
    Let $\lambda > 0$, and suppose the model class $\cL$ is invariant under scalar multiplication. Then for any function $G \in L^1(\mu)$ and any $K \in \mathbb{N}$,
    \begin{equation*}
        R_{K,\cL}(\lambda G) = \lambda \cdot R_{K,\cL}(G)\,,
    \end{equation*}
    where $R_{K,\cL}(G) = \inf_{h \in \lidx_{K,\cL}} \norm{G - h}_\mu$.
\end{lemma}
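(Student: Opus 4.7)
The plan is to exploit the fact that the map $h \mapsto \lambda h$ is a bijection on $\lidx_{K,\cL}$ whenever $\cL$ is closed under scalar multiplication, and then pull the factor $\lambda$ out of the $L^1(\mu)$ norm.

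First I would unpack the definition: every $h \in \lidx_{K,\cL}$ has the form $h = \sum_{v=1}^K h_v \indf_{I_v}$ with $h_v \in \cL$ and $\{I_v\}$ a partition of $\suppq$ into at most $K$ pieces. Multiplying by $\lambda > 0$ preserves the partition and replaces each local model $h_v$ by $\lambda h_v$, which lies in $\cL$ by the scalar-multiplication assumption. Hence $\lambda h \in \lidx_{K,\cL}$. Applying the same reasoning to $\lambda^{-1}$ shows the map $h \mapsto \lambda h$ is a bijection from $\lidx_{K,\cL}$ to itself.

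Second, I would use homogeneity of the norm: for any $h \in \lidx_{K,\cL}$,
\begin{equation*}
    \norm{\lambda G - \lambda h}_\mu = \int_\suppq |\lambda G(q) - \lambda h(q)|\,d\mu = \lambda \int_\suppq |G(q) - h(q)|\,d\mu = \lambda \norm{G - h}_\mu \, ,
\end{equation*}
where positivity of $\lambda$ is used to pull the constant out of the absolute value. Combining this with the bijection established in the first step, and reindexing the infimum through $h' = \lambda h$, gives
\begin{equation*}
    R_{K,\cL}(\lambda G) = \inf_{h' \in \lidx_{K,\cL}} \norm{\lambda G - h'}_\mu = \inf_{h \in \lidx_{K,\cL}} \lambda \norm{G - h}_\mu = \lambda \, R_{K,\cL}(G) \, ,
\end{equation*}
which is the claimed identity. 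There is no genuine obstacle here; the only subtlety is verifying that the reindexing is legitimate, i.e., that $h \mapsto \lambda h$ actually maps $\lidx_{K,\cL}$ onto itself, which is exactly where the scalar-multiplication invariance of $\cL$ is used (and where the proof would fail for, say, affine or otherwise non-homogeneous model classes).
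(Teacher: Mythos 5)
Your proof is correct and follows essentially the same route as the paper's: establish that $h \mapsto \lambda h$ (equivalently $h \mapsto \lambda^{-1}h$) is a bijection on $\lidx_{K,\cL}$ via the scalar-multiplication invariance of $\cL$, factor $\lambda$ out of the $L^1(\mu)$ norm by homogeneity, and reindex the infimum. You spell out slightly more explicitly why the piecewise structure is preserved (the partition is unchanged and each local model stays in $\cL$), which is a minor but welcome addition of detail.
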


\begin{proof}
    Let $\mathcal{H} = \lidx_{K,\cL}$ denote the class of piecewise models defined over $K$ intervals and using functions from $\cL$. Since $\cL$ is invariant under scalar multiplication, so is $\mathcal{H}$. That is, for any $h \in \mathcal{H}$ and any $\lambda > 0$, the function $\lambda h$ also belongs to $\mathcal{H}$. Now observe:
    \begin{align*}
        R_{K,\cL}(\lambda G)
        &= \inf_{h \in \mathcal{H}} \norm{\lambda G - h}_\mu
        = \inf_{h \in \mathcal{H}} \norm{\lambda (G - \lambda^{-1} h)}_\mu
        = \inf_{h \in \mathcal{H}} \lambda \cdot \norm{(G - \lambda^{-1} h)}_\mu\,.
    \intertext{Using the positive homogeneity of the infimum:}
        R_{K,\cL}(\lambda G)
        &= \lambda \cdot \inf_{h \in \mathcal{H}} \norm{G - \lambda^{-1} h}_\mu\,.
    \intertext{Since $\mathcal{H}$ is invariant under scalar multiplication, it follows that for any $h \in \mathcal{H}$, we have $\lambda^{-1} h \in \mathcal{H}$. Moreover, any $h \in \mathcal{H}$ can be written as $\lambda^{-1} \tilde{h}$ where $\tilde{h}\in\mathcal{H}$. This implies that the map $h \mapsto \lambda^{-1} h$ is a bijection on $\mathcal{H}$, and thus}
        R_{K,\cL}(\lambda G)
        &= \lambda \cdot \inf_{\tilde{h} \in \mathcal{H}} \norm{G - \tilde{h}}_\mu
        = \lambda \cdot R_{K,\cL}(G)\,.
    \end{align*}
\end{proof}

\begin{proof}[Proof of Proposition \ref{prop:basic-error-lower-bound}]
Applying the triangle inequality in $L^1(\mu)$, we have for any $h \in \lidx_{K,\cL}$:
\begin{equation*}
    \norm{F - h}_\mu 
    \leq
    \norm{F - F_n}_\mu + \norm{F_n - h}_\mu
    \quad \implies \quad
    \norm{F_n - h}_\mu 
    \geq
    \norm{F - h}_\mu - \norm{\delta_n}_\mu\,.
\end{equation*}
Taking the infimum over $h \in \lidx_{K,\cL}$ at the last inequality gives:
\begin{equation*}
    R_{K,\cL}(F_n) \geq R_{K,\cL}(F) - \norm{\delta_n}_\mu\,.
\end{equation*}
Then, by the scaling identity $R_{K,\cL}(nF_n) = n R_{K,\cL}(F_n)$ (Lemma~\ref{lem:scaling}) and the definition of expected prediction error $\expecbs{q}{\eps(q)} = \norm{nF_n - h}_\mu \geq R_{K,\cL}(nF_n)$, the result follows.
\end{proof}

\section{Proof of Proposition \ref{prop:log-error-lower-bound}}
\label{appendix:prop-proof}
The focus of this section is to prove Proposition \ref{prop:log-error-lower-bound}. For clarity, some auxiliary results used in this section are proven in Appendix \ref{appendix:prop-aux}. We start by recalling the definition of $\tT(q)$:
\begin{equation*}
    \tT(q)
    = \log_2 \left(\max\{ 2, \eps(q) \}\right)
    = \log_2 \left(\max\{ 2, \abs{nF_n(q) - h(q)} \}\right)\,.
\end{equation*}

Ultimately, we are interested in lower bounding $\expecs{q}{\tT(q)}$. To this end, recall first that (by hypothesis) $F$ and $\mu$ have respective densities $f$ and $g$ which satisfy
\begin{equation}
\label{eq:bounds-densities}
    0<c_F\leq f(q),g(q) \leq C_F < \infty
\end{equation}
for constants $c_F, C_F$, and for all $q\in\suppq$. Proposition \ref{prop:log-error-lower-bound} (and this proof) can be generalized straightforwardly to the case where this property is only satisfied in a continuous interval $\suppq'\subseteq\suppq$ with positive measure. Nonetheless, for simplicity of exposition, we assume \eqref{eq:bounds-densities} holds for all $q\in\suppq$.

Note that without loss of generality, we may assume $d\mu = dF=f\,dq$. This follows from:
\begin{equation*}
    \expecs{q}{\tT(q)}
    =
    \int_\suppq \tT(q) d\mu
    =
    \int_\suppq \tT(q) \, g(q) dq
    \geq 
    \frac{c_F}{C_F} \int_\suppq \tT(q) \, f(q) dq \, ,
\end{equation*}
where we have used $d\mu=g\,dq$ and the inequalities expressed in \eqref{eq:bounds-densities}. Hence, if we can lower bound the last integral $\int_\suppq \tT(q)\,f(q)dq$ as stated in Proposition \ref{prop:log-error-lower-bound}, the bound will also apply to $\expecs{q}{\tT(q)}$ with an extra (constant) factor of $\frac{c_F}{C_F}$. Hence, throughout this section we write $dF$ in place of $d\mu$.

Finally, we assume $nR_{K,P_0}\geq 1$. Otherwise, Proposition \ref{prop:log-error-lower-bound} holds trivially since its right hand side is negative. With these assumptions in place, we can proceed to the proof.

\begin{proof}[Proof of Proposition \ref{prop:log-error-lower-bound}]
We divide the proof into four logical steps.

\subparagraph{Step 1: Piecewise expansion.} Any $h\in \lidx_{K,P_0}$ can be expanded as
\begin{equation*}
    h(q)
    = 
    n\sum_{k=1}^K c_k\indf_{I_k}(q),
\end{equation*}
where the $\{c_k\}$ are constants and the $\{I_k\}$ are intervals which partition $\suppq$. Now, the quantity of interest can be lower bounded as:
\begin{align}
    \expecbs{q}{\tT(q)}
    &=
    \int_\suppq
    \log_2\left(\max\{2, \eps(q)\}\right) dF \nonumber \\
    &=
    \int_\suppq
    \log_2\left(\max\big\{
        2, \abs{nF_n(q) - h(q)}
    \big\}\right) dF \nonumber \\
    &\geq
    \int_\suppq
    \log_2\left(\max\big\{
        2, n\abs{F(q) - n^{-1}h(q)} - n\abs{\delta_n(q)}
    \big\}\right) dF \nonumber \\
    &=
    \sum_{k=1}^K
    \underbrace{
        \int_{I_k}
        \log_2\left(\max\big\{
            2, n\abs{F(q) - c_k} - n\abs{\delta_n(q)}
        \big\}\right) dF
    }_{E_k} \label{eq:prop-log-lower-bound-1} \, ,
\end{align}
where we have used the triangle inequality to lower bound $\eps(q)$:
\begin{align*}
    \eps(q)
    &=
    \abs{nF_n(q) - h(q)} \\
    &=
    \abs{nF_n(q) - nF(q) + nF(q) - h(q)} \\
    &\geq
    n\abs{F(q) - n^{-1}h(q)}
    -
    n\abs{F_n(q) - F(q)} \\
    &=
    n\abs{F(q) - n^{-1}h(q)}
    -
    n\abs{\delta_n(q)} \, .
\end{align*}
Now, without loss of generality, we can assume that $h$ intersects with $F$ in all intervals, that is, for each $k$ there exists $q_k\in I_k$ such that $F(q_k)=c_k$. Otherwise, there would exist a different $h'\in \lidx_{K,P_0}$ such that $\abs{F(q) - n^{-1}h'(q)} \leq \abs{F(q) - n^{-1}h(q)}$ for all $q\in\suppq$, meaning
\begin{align*}
    \expecbs{q}{\tT(q)}
    &\geq
    \int_\suppq
    \log_2\left(\max\big\{
        2, n\abs{F(q) - n^{-1}h(q)} - n\abs{\delta_n(q)}
    \big\}\right) dF \\
    &\geq
    \int_\suppq
    \log_2\left(\max\big\{
        2, n\abs{F(q) - n^{-1}h'(q)} - n\abs{\delta_n(q)}
    \big\}\right) dF \, ,
\end{align*}
so that any lower bound we derive for the case of $h'$ would also apply to $h$. Hence, we assume the existence of $\{q_k\}_{k=1}^K$ with the above property.

From these derivations, it is clear that it is sufficient for our proof to lower bound the expression $\sum_{k=1}^K E_k$ as defined in \eqref{eq:prop-log-lower-bound-1}, where each $c_k$ is assumed to intersect $F$ in at least one point $q_k\in I_k$. In the next part of the proof, we focus on lower bounding each individual term $E_k$, given by
\begin{equation}
\label{eq:ek-lower-bound-1}
    E_k
    =
    \int_{I_k}
    \log_2\left(\max\big\{
        2, n\abs{F(q) - c_k} - n\abs{\delta_n(q)}
    \big\}\right) dF \, .
\end{equation}
To do this, we first define a setting in which the term $n\abs{\delta_n(q)}$ is negligible, and then derive some concentration results for the remaining term $n\abs{F(q)-c_k}$. Finally, in the last part of the proof, we turn to the aggregate error $\sum_{k=1}^K E_k$.

\subparagraph{Step 2: A large-measure set where $\delta_n$ is small.}
Now, define the set
\begin{equation*}
    A
    =
    \Big\{
        q\in\suppq \>\text{ such that }\>
        \abs{\delta_n(q)} \leq \frac{1}{n}
    \Big\} \, .
\end{equation*}
Since $d\mu=dF$, by Lemma \ref{lem:existential-bound}\eqref{lem:existential-bound-b} there exists a realization of $X_1,\ldots,X_n$ such that $\norm{\delta_n}_\mu \leq \frac{1}{\sqrt{6} \, n}$. We work under this realization for the rest of the proof. By Markov's inequality:
\begin{equation*}
    \mu\left(\abs{\delta_n} > \frac{1}{n}\right)
    \leq 
    \frac{\norm{\delta_n}_\mu}{\frac{1}{n}}
    \leq
    \frac{n}{\sqrt{6} \, n}
    \leq
    \frac{1}{2} \, ,
\end{equation*}
meaning that $\mu(A)\geq 1/2$. Now, for each $k=1,\ldots,K$ define the set $J_k=I_k \cap A$, that is, the subset of interval $I_k$ where $\abs{\delta_n} \leq \frac{1}{n}$ holds. Starting from \eqref{eq:ek-lower-bound-1}, we have
\begin{align}
    E_k
    &=
    \int_{I_k}
    \log_2\left(\max\big\{
        2, n\abs{F(q) - c_k} - n\abs{\delta_n(q)}
    \big\}\right) dF \nonumber \\
    &\geq
    \int_{J_k}
    \log_2\left(\max\big\{
        2, n\abs{F(q) - c_k} - n\abs{\delta_n(q)}
    \big\}\right) dF \nonumber \\
    &\geq
    \frac{1}{2}
    \int_{J_k}
    \log_2\left(
        2 + n\abs{F(q) - c_k} - n\abs{\delta_n(q)}
    \right) dF \nonumber \\
    &\geq
    \frac{1}{2}
    \int_{J_k}
    \log_2\left(
        1 + n\abs{F(q) - c_k}
    \right) dF \label{eq:prop-log-lower-bound-2}\, ,
\end{align}
where we have used the fact that $\abs{\delta_n} \leq \frac{1}{n}$ over each set $J_k$.

\subparagraph{Step 3: A concentration result for the approximation error.}

We now focus on lower bounding expression \eqref{eq:prop-log-lower-bound-2}. For this, we define the first and second moments of $\abs{F(q) - c_k}$ over the set $J_k$:
\begin{equation*}
    m_k = \int_{J_k} \abs{F(q) - c_k}\,dF \, ,
    \quad\quad
    s_k = \int_{J_k} \abs{F(q) - c_k}^2 dF \, .
\end{equation*}
Next, denote by $J_k'$ the set $J_k'=\big\{q\in J_k \>\text{such that}\> \abs{F(q) - c_k} \geq \frac{m_k}{2\mu(J_k)}\big\}$, that is, the subset of $J_k$ where the quantity $\abs{F(q) - c_k}$ is not much smaller than its normalized average over $J_k$. Since $\log_2\left(1 + n\abs{F(q) - c_k}\right)$ is never negative, it holds that
\begin{align}
    E_k
    &\geq
    \frac{1}{2}\int_{J_k}
    \log_2\left(
        1 + n\abs{F(q) - c_k}
    \right) dF \nonumber \\
    &\geq
    \frac{1}{2}\int_{J_k'}
    \log_2\left(
        1 + n\abs{F(q) - c_k}
    \right) dF \nonumber \\
    &\geq
    \frac{\mu(J_k')}{2}
    \log_2\left(
        1 + \frac{n m_k}{2\mu(J_k)}
    \right) \nonumber \\
    &\geq
    \frac{m_k^2}{8 \, s_k}
    \cdot
    \log_2\left(
        1 + \frac{n m_k}{2\mu(J_k)}
    \right) \label{eq:prop-log-lower-bound-3}\, ,
\end{align}
where the last step comes from the Paley-Zygmund inequality (see Lemma \ref{lem:paley-zygmund-general}). Therefore, the only remaining challenge is to lower bound the quantity $m_k$ and upper bound $s_k$. To that end, we leverage the bi-Lipschitz nature of $F$: since $f$ is bounded as $c_F\leq f \leq C_F$, this directly implies
\begin{equation}
\label{eq:prop-log-lower-bound-4}
    c_F\abs{x - y}
    \leq \abs{F(x) - F(y)}
    \leq C_F\abs{x - y}
\end{equation}
for all $x,y \in \suppq$. Now, recall our assumption that for each $k$, there exists $q_k \in I_k$ such that $F(q_k) = c_k$. Hence, we may turn to inequality \eqref{eq:prop-log-lower-bound-4} with $y = q_k$ to get, for all $q \in I_k$:
\begin{align*}
    \abs{F(q) - c_k}
    &=
    \abs{F(q) - F(q_k)}
    \geq
    c_F\abs{q - q_k} \, , \\
    \abs{F(q) - c_k}
    &=
    \abs{F(q) - F(q_k)}
    \leq
    C_F\abs{q - q_k} \, .
\end{align*}
Now, under these last conditions, we may invoke Lemma \ref{lem:bounds-first-second-moment} to obtain bounds on $m_k$ and $s_k$:
\begin{equation*}
    m_k \geq \frac{c_F^2}{4C_F^2} \, \mu(J_k)^2 \, ,
    \quad\quad
    s_k \leq \frac{C_F^3}{3c_F^3} \, \mu(I_k)^3 \, .
\end{equation*}
We now substitute these values into the Paley-Zygmund bound from \eqref{eq:prop-log-lower-bound-3}, yielding:
\begin{equation}
\label{eq:ek-lower-bound-2}
    E_k
    \geq
    3
    \left(\frac{c_F}{2C_F}\right)^7
    \frac{\mu(J_k)^4}{\mu(I_k)^3}
    \log_2\left(
        1 + \frac{1}{2}\left(\frac{c_F}{2C_F}\right)^2 n\,\mu(J_k)
    \right) \, .
\end{equation}
This provides an explicit, fully non-asymptotic lower bound for each $E_k$ in terms of $c_F, C_F, n$, and the measures $\mu(J_k)$ and $\mu(I_k)$. For the last part of the proof, we incorporate further information about these measures and aggregate over all $k=1,\ldots,K$.

\subparagraph{Step 4: Summing over $k$ and final aggregation.} We first partition the sets $\{J_k\}$ into two distinct groups, based on the extent to which their measures have decreased relative to the corresponding original $I_k$. Specifically, we define the index set $\mathcal{K}$ as
\begin{equation*}
    \mathcal{K}
    =
    \{
        k\in\{1,\ldots,K\}
        \>\text{ such that }\>
        \mu(J_k) \geq \mu(I_k) / 3
    \} \, .
\end{equation*}
A key observation is that, when summing over all $k$ outside of $\mathcal{K}$, we get
\begin{equation*}
    \sum_{k\notin \mathcal{K}}\mu( J_k )
    <
    \frac{1}{3}
    \sum_{k\notin \mathcal{K}}\mu( I_k )
    \leq
    \frac{1}{3} \mu(\suppq)
    =\frac{1}{3} \, .
\end{equation*}
From this, we derive a key condition:
\begin{equation}
\label{eq:prop-log-lower-bound-5}
    \sum_{k\in \mathcal{K}}\mu( J_k )
    =
    \sum_{k=1}^K\mu( J_k )
    -
    \sum_{k\notin \mathcal{K}}\mu( J_k )
    =
    \mu(A)
    -
    \sum_{k\notin \mathcal{K}}\mu( J_k )
    \geq
    \frac{1}{2} - \frac{1}{3}
    = \frac{1}{6} \, ,
\end{equation}
where we have used the fact that $\mu(A) \geq 1/2$, as derived earlier via Markov's inequality. Notice that this necessarily means $\abs{\mathcal{K}}$ is not empty; hence, we can assume $\abs{\mathcal{K}}\geq 1$. Now, by definition, for any $k\in\mathcal{K}$ it holds that $3\mu(J_k)\geq \mu(I_k)$. Combining with \eqref{eq:ek-lower-bound-2} and denoting $\gamma=\frac{c_F}{2C_F}$:
\begin{align*}
    \sum_{k=1}^K E_k
    &\geq
    \sum_{k\in \mathcal{K}} E_k \\
    &\geq
    \frac{\gamma^7}{9}
    \sum_{k\in \mathcal{K}} 
    \mu(J_k)
    \log_2\left(
        1 + \frac{\gamma^2 n}{2}\,\mu(J_k)
    \right) \\
    &\geq
    \frac{\gamma^7}{9}
    \sum_{k\in \mathcal{K}} 
    \mu(J_k)
    \log_2\left(
        \frac{\gamma^2 n}{2}\,\mu(J_k)
    \right) \\
    &=
    \frac{\gamma^7}{9}
    \left[
        \log_2\left(\frac{\gamma^2 n}{2}\right)
        \sum_{k\in \mathcal{K}} \mu(J_k)
        +
        \sum_{k\in \mathcal{K}}
        \mu(J_k)\log_2\left(\mu(J_k)\right)
    \right]\, .
\intertext{
    Denote by $S$ the sum $S=\sum_{k\in \mathcal{K}} \mu(J_k)$. Then, by Lemma \ref{lem:partial-entropy-bound}, we get:
}
    \sum_{k=1}^K E_k
    &\geq
    \frac{\gamma^7}{9}
    \left[
        S\log_2\left(\frac{\gamma^2 n}{2}\right)
        +
        S\log_2\left(\frac{S}{\abs{\mathcal{K}}}\right)
    \right] \\
    &=
    \frac{\gamma^7}{9}
    \left[
        S\log_2\left(
            \frac{n}{4\abs{\mathcal{K}}}
        \right)
        +
        S\log_2\left(2 S\right)
        +
        S\log_2\left(\gamma^2\right)
    \right] \, .
\intertext{Now, the function $S\mapsto S\log_2\left(2 S\right)$ is never smaller than $-0.5=\log_2(1/\sqrt{2})$ for $S>0$. Hence:}
    &\geq
    \frac{\gamma^7}{9}
    \left[
        S\log_2\left(
            \frac{n}{4\abs{\mathcal{K}}}
        \right)
        +
        \log_2\left(\frac{1}{\sqrt{2}}\right)
        +
        S\log_2\left(\gamma^2\right)
    \right] \, .
\intertext{On the other hand, we know $\gamma^2<1$, so that $\log_2\left(\gamma^2\right)<0$. Since $S\leq \mu(\suppq)=1$:}
    &\geq
    \frac{\gamma^7}{9}
    \left[
        S\log_2\left(
            \frac{n}{4\abs{\mathcal{K}}}
        \right)
        +
        \log_2\left(\frac{\gamma^2}{\sqrt{2}}\right)
    \right] \, .
\intertext{Finally, consider the fact that $\abs{\mathcal{K}}\leq K$. Moreover, from Section \ref{sse:constant_same_distribution}, we know that when $d\mu=dF$ it holds that $R_{K,P_0}=\frac{1}{4K}$ (see equation \eqref{eq:const_abs_error}). Replacing above:}
    \sum_{k=1}^K E_k
    &\geq
    \frac{\gamma^7}{9}
    \left[
        S\log_2\left(n R_{K,P_0}\right)
        +
        \log_2\left(\frac{\gamma^2}{\sqrt{2}}\right)
    \right] \, .
\intertext{Now, by assumption, we have $n R_{K,P_0}\geq 1$. Hence $\log_2\left(n R_{K,P_0}\right)\geq 0$, and by \eqref{eq:prop-log-lower-bound-5} we conclude:}
    \sum_{k=1}^K E_k
    &\geq
    \frac{\gamma^7}{9}
    \left[
        \frac{1}{6}\log_2\left(n R_{K,P_0}\right)
        +
        \log_2\left(\frac{\gamma^2}{\sqrt{2}}\right)
    \right] \\
    &\geq
    \frac{\gamma^7}{54}
    \left[
        \log_2\left(n R_{K,P_0}\right)
        -
        \log_2\left(\frac{8}{\gamma^{12}}\right)
    \right] \, .
\end{align*}
The proof concludes by setting $C_1=\frac{\gamma^7}{54}$ and $C_2=\log_2\left(\frac{8}{\gamma^{12}}\right)$.
\end{proof}

\section{Auxiliary Results from Appendix \ref{appendix:prop-proof}}
\label{appendix:prop-aux}
\begin{lemma}[Paley–Zygmund inequality]
\label{lem:paley-zygmund-general}
Consider any measure $\mu$ over $\mathbb{R}$. Let $J \subseteq \mathbb{R}$ be a measurable set with $\mu(J) > 0$, and $G: J \to [0,\infty)$ be a measurable, non-negative function. Define its first and second moments over $J$ as
\begin{equation*}
    m_J = \int_J G(q) \, d\mu \, ,
    \quad\quad
    s_J = \int_J G(q)^2 d\mu \, .
\end{equation*}
Then, for any $\theta \in (0,1)$, the subset
\begin{equation*}
    J_\theta'
    =
    \left\{
        q \in J \>\text{ such that }\>
        G(q) \geq \theta \frac{m_J}{\mu(J)}
    \right\}
\end{equation*}
satisfies the lower bound
\begin{equation*}
    \mu(J_\theta')
    \geq
    (1 - \theta)^2 \, \frac{m_J^2}{s_J} \, .
\end{equation*}
\end{lemma}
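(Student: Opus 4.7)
The plan is to adapt the standard probabilistic proof of the Paley--Zygmund inequality to this measure-theoretic setting. If $m_J = 0$ the right-hand side is zero and the claim is trivial, so I may assume $m_J > 0$ (which forces $s_J > 0$ as well). The core idea is to split the integral defining $m_J$ according to whether $G$ is below or above the threshold $\theta\, m_J / \mu(J)$, bound the ``small'' part crudely and the ``large'' part via Cauchy--Schwarz, and then solve the resulting inequality for $\mu(J_\theta')$.

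Concretely, I would write
\begin{equation*}
    m_J
    =
    \int_{J \setminus J_\theta'} G\, d\mu + \int_{J_\theta'} G\, d\mu.
\end{equation*}
On $J \setminus J_\theta'$ the integrand satisfies $G(q) < \theta\, m_J / \mu(J)$ by definition of $J_\theta'$, so the first integral is at most
\begin{equation*}
    \frac{\theta\, m_J}{\mu(J)} \cdot \mu(J \setminus J_\theta')
    \leq
    \theta\, m_J.
\end{equation*}
For the second integral I would apply the Cauchy--Schwarz inequality (equivalently, H\"older with conjugate exponents $2,2$) against the constant $1$ to obtain
\begin{equation*}
    \int_{J_\theta'} G\, d\mu
    \leq
    \left(\int_{J_\theta'} G^2\, d\mu\right)^{1/2} \mu(J_\theta')^{1/2}
    \leq
    s_J^{1/2}\, \mu(J_\theta')^{1/2}.
\end{equation*}

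Combining these two bounds gives $m_J \leq \theta\, m_J + s_J^{1/2}\, \mu(J_\theta')^{1/2}$, which rearranges to $(1-\theta)\, m_J \leq s_J^{1/2}\, \mu(J_\theta')^{1/2}$. Squaring and dividing by $s_J$ then yields the claimed lower bound $\mu(J_\theta') \geq (1-\theta)^2\, m_J^2 / s_J$. There is no real obstacle here: the argument is entirely self-contained and follows the classical Paley--Zygmund template, with Cauchy--Schwarz supplying the crucial quadratic factor $(1-\theta)^2$. The only case requiring separate comment is the degenerate one $s_J = 0$, which forces $G = 0$ almost everywhere on $J$ and hence $m_J = 0$, making the conclusion vacuous.
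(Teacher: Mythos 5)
Your proposal is correct and is essentially identical to the paper's proof: both decompose $m_J$ into the contributions from $J_\theta'$ and $J \setminus J_\theta'$, bound the latter crudely by $\theta\, m_J$, apply Cauchy--Schwarz on the former to get $s_J^{1/2}\mu(J_\theta')^{1/2}$, and then rearrange and square. The only (minor) difference is that you explicitly dispose of the degenerate case $m_J = 0$, which the paper leaves implicit.
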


\begin{proof}
Notice that the first moment can be written as
\begin{equation}
\label{eq:paley-zygmund-split}
    m_J
    =
    \int_J G(q) \, d\mu
    =
    \int_J G(q) \, \indf_{J_\theta'}(q) \, d\mu
    +
    \int_J G(q) \, \indf_{J\setminus\, J_\theta'}(q) \, d\mu \, ,
\end{equation}
where $\indf_U$ represents the indicator function of a set $U$. The first term on the right hand side can be bounded via Cauchy–Schwarz inequality:
\begin{align*}
    \int_J G(q) \, \indf_{J_\theta'}(q) \, d\mu
    &\leq
    \left(
        \int_J \indf_{J_\theta'}(q) \, d\mu
    \right)^{\frac{1}{2}}
    \left(
        \int_J G^2(q) d\mu
    \right)^{\frac{1}{2}}
    =
    \sqrt{\mu(J_\theta') \cdot s_J} \, .
\intertext{
    On the other hand, the last term in \eqref{eq:paley-zygmund-split} can be upper bounded as
}
    \int_J G(q) \, \indf_{J\setminus\, J_\theta'}(q) \, d\mu
    &=
    \int_{J\setminus\, J_\theta'} G(q) \, d\mu
    \leq
    \int_{J\setminus\, J_\theta'} \theta \frac{m_J}{\mu(J)} \, d\mu
    \leq
    \theta \frac{m_J}{\mu(J)} \mu(J)
    =
    \theta \cdot m_J \, .
\end{align*}
Replacing these bounds in equation \eqref{eq:paley-zygmund-split}, we get
\begin{equation*}
    m_J
    \leq
    \sqrt{\mu(J_\theta') \cdot s_J}
    +
    \theta \cdot m_J \, .
\end{equation*}
The result follows after subtracting $\theta \cdot m_J$ from both sides, squaring, and dividing by $s_J$.
\end{proof}

\begin{lemma}
\label{lem:bounds-first-second-moment}
Let $I \subseteq \mathbb{R}$ be a finite interval and $J \subseteq I$ a measurable subset of $I$ with positive measure $\mu(J) > 0$, where $\mu$ is defined by $d\mu = f\,dq$ for a density $f$ satisfying $0 < c_F \leq f(q) \leq C_F < \infty$, for all $q\in I$. Suppose $F: J \rightarrow \mathbb{R}$ is a measurable function such that there exists a point $q_0 \in I$ satisfying the bi-Lipschitz condition
\begin{equation}
\label{eq:lemma-bi-lipschitz-density}
    c_F\abs{q - q_0}
    \leq \abs{F(q) - F(q_0)}
    \leq C_F\abs{q - q_0}
\end{equation}
for all $q \in I$, where $c_F, C_F > 0$ are the same constants as above. Then, defining the first and second moments over $J$ with respect to $\mu$:
\begin{equation*}
    m_J = \int_J \abs{F(q) - F(q_0)} \, d\mu \,,
    \quad\quad
    s_J = \int_J \abs{F(q) - F(q_0)}^2 \, d\mu \,,
\end{equation*}
we have the following bounds:
\begin{equation*}
    m_J \geq \frac{c_F^2}{4C_F^2} \mu(J)^2 \,, \quad\quad
    s_J \leq \frac{C_F^3\,\mu(I)^3}{3 c_F^3} \,.
\end{equation*}
\end{lemma}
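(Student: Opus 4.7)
\medskip

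\noindent
\textbf{Plan for the proof of Lemma \ref{lem:bounds-first-second-moment}.}
My plan is to prove the two bounds separately by first using the bi-Lipschitz condition \eqref{eq:lemma-bi-lipschitz-density} to pass from an integral of $|F(q)-F(q_0)|^r$ to an integral of $|q-q_0|^r$, then using the bounds $c_F \le f \le C_F$ to convert between $\mu$-measure and Lebesgue measure, and finally resolving a purely geometric integral of $|q-q_0|^r$ over a subset of an interval.

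For the upper bound on $s_J$, I would apply the right inequality in \eqref{eq:lemma-bi-lipschitz-density} pointwise, yielding $|F(q)-F(q_0)|^2 \le C_F^2 (q-q_0)^2$, and then use $f \le C_F$ together with $J \subseteq I$ to get
\begin{equation*}
    s_J \;\le\; C_F^2 \int_J (q-q_0)^2 f(q)\,dq \;\le\; C_F^3 \int_I (q-q_0)^2 \,dq.
\end{equation*}
The remaining integral is a deterministic quantity that is maximized (over the position of $q_0$ in $I$) when $q_0$ lies at an endpoint of $I$, giving $\int_I (q-q_0)^2 dq \le |I|^3/3$, where $|I|$ denotes the Lebesgue length of $I$. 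Finally, from $c_F \le f$ I would deduce $\mu(I) \ge c_F |I|$, so $|I|^3 \le \mu(I)^3/c_F^3$, yielding the claimed bound.

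For the lower bound on $m_J$, I would apply the left inequality in \eqref{eq:lemma-bi-lipschitz-density} together with $f \ge c_F$ to get $m_J \ge c_F^2 \int_J |q-q_0|\,dq$. The main (though still routine) step is a rearrangement argument to lower bound $\int_J |q-q_0|\,dq$ by $|J|^2/4$: among all measurable sets of a fixed Lebesgue measure $|J|$, the integral $\int_J |q-q_0|\,dq$ is minimized by the symmetric interval $[q_0 - |J|/2,\,q_0 + |J|/2]$, on which the integral equals $|J|^2/4$. Combined with the upper density bound $f \le C_F$, which gives $|J| \ge \mu(J)/C_F$, this yields $m_J \ge \frac{c_F^2}{4 C_F^2}\mu(J)^2$.

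The main obstacle, and really the only non-mechanical step, is justifying the rearrangement lower bound $\int_J |q-q_0|\,dq \ge |J|^2/4$ when $q_0 \in I$ but not necessarily in a canonical position relative to $J$. I would handle it by a standard layer-cake / level-set argument: writing $\int_J |q-q_0|\,dq = \int_0^\infty |\{q \in J : |q-q_0| > t\}|\,dt$ and noting that $|\{q \in J : |q-q_0| > t\}| \ge \max\{0, |J| - 2t\}$, since the set $\{|q-q_0| \le t\}$ is an interval of length $2t$. Integrating this lower bound over $t$ gives $|J|^2/4$ as desired. Once this is in place, both bounds follow by assembling the pieces.
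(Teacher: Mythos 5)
Your proof is correct and follows essentially the same route as the paper: pass from $|F(q)-F(q_0)|^r$ to $|q-q_0|^r$ via the bi-Lipschitz condition, trade $d\mu$ for Lebesgue measure using the density bounds, solve the geometric extremal problem for $\int |q-q_0|^r$, and convert lengths back to $\mu$-measures. Your layer-cake argument for $\int_J |q-q_0|\,dq \ge |J|^2/4$ is a rigorous justification of the rearrangement claim that the paper merely asserts (``minimized when $J$ is a symmetric interval centered at $q_0$''), which is a worthwhile addition.
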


\begin{proof}
From the bi-Lipschitz condition \eqref{eq:lemma-bi-lipschitz-density}, we obtain the following bound on the first moment:
\begin{align*}
    m_J
    &=
    \int_J \abs{F(q) - F(q_0)}\,d\mu
    \geq
    c_F \int_J \abs{q - q_0} \, d\mu \, .
\intertext{Since $d\mu = f\,dq$ and $f \geq c_F$, we have:}
    m_J
    &\geq
    c_F \int_J \abs{q - q_0} \, f(q) dq
    \geq
    c_F^2 \int_J \abs{q - q_0} \, dq \,.
\intertext{
    The right-hand side is minimized when $J$ is a symmetric interval of length $\lambda(J)$ centered at $q_0$, where $\lambda$ denotes the Lebesgue measure. In this case, the integral becomes
}
    m_J
    &\geq
    c_F^2\int_{-\lambda(J)/2}^{\lambda(J)/2} \abs{q} \, dq
    =
    2\,c_F^2\int_0^{\lambda(J)/2} q \, dq
    =
    \frac{c_F^2}{4}\lambda(J)^2 \, .
\intertext{Finally, notice that $\lambda(J)=\int_J 1 dq \geq \frac{1}{C_F}\int_J f(q) dq=\frac{\mu(J)}{C_F}$, which gives the desired bound:}
    m_J
    &\geq
    \frac{c_F^2}{4 C_F^2}\mu(J)^2 \, .
\end{align*}

For the second moment $s_J$, we first use the upper bound in \eqref{eq:lemma-bi-lipschitz-density}, which implies
\begin{align*}
    s_J
    &=
    \int_J \abs{F(q) - F(q_0)}^2 d\mu
    \leq
    C_F^2 \int_J \abs{q - q_0}^2 \, d\mu \, .
\intertext{Since $d\mu = f\,dq$ and $f \leq C_F$, we have:}
    s_J
    &\leq
    C_F^2 \int_J \abs{q - q_0}^2 \, f(q) dq
    \leq
    C_F^3 \int_J \abs{q - q_0}^2 \, dq \,.
\intertext{
    This last quantity is maximized when $J = I$ and $q_0$ lies at one of its endpoints. This corresponds to the following integral:
}
    s_J
    &\leq
    C_F^3 \int_0^{\lambda(I)} q^2 dq
    =
    \frac{C_F^3}{3}\lambda(I)^3 \, ,
\intertext{where $\lambda$ denotes the Lebesgue measure. Finally, notice that $\lambda(I)=\int_I 1 dq \leq \frac{1}{c_F}\int_I f(q) dq=\frac{\mu(I)}{c_F}$, which gives the desired bound:}
    s_J
    &\leq
    \frac{C_F^3}{3 c_F^3}\mu(I)^3 \, .
\end{align*}
This completes the proof.
\end{proof}

\begin{lemma}[Entropy bound under partial mass]
\label{lem:partial-entropy-bound}
Let $K$ be a positive integer and $p_1, \dots, p_K$ be non-negative real numbers such that $\sum_{k=1}^K p_k = S > 0$. Then, it holds that
\begin{equation*}
    \sum_{k=1}^K p_k \log_2 (p_k)
    \geq
    S \log_2 \left( \frac{S}{K} \right) \, .
\end{equation*}
\end{lemma}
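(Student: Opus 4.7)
The plan is to recognize the statement as a direct consequence of Jensen's inequality applied to the convex function $\phi(x) = x \log_2(x)$ on $[0,\infty)$, with the standard convention $0 \log_2 0 = 0$. This function is convex because its second derivative $\phi''(x) = 1/(x \ln 2)$ is positive on $(0,\infty)$, and continuity at $0$ under the convention makes it convex on $[0,\infty)$.

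First I would verify convexity of $\phi$ and note that the inequality is trivial if any $p_k$ equals $0$, since the $0 \log_2 0$ terms vanish and contribute nothing, so without loss of generality one could restrict to the positive $p_k$'s and run the argument on those (replacing $K$ by the smaller count, which only makes the right-hand side larger; so the bound for the full sum follows a fortiori). Then I would apply Jensen directly:
\begin{equation*}
\frac{1}{K}\sum_{k=1}^K \phi(p_k) \;\geq\; \phi\!\left(\frac{1}{K}\sum_{k=1}^K p_k\right) \;=\; \phi(S/K) \;=\; \frac{S}{K}\log_2(S/K).
\end{equation*}
Multiplying through by $K$ yields exactly $\sum_{k=1}^K p_k \log_2(p_k) \geq S \log_2(S/K)$, which is the claim.

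Equivalently, and perhaps more transparently, one can phrase the argument via the log-sum inequality: for non-negative reals $a_k, b_k$ with $b_k > 0$ one has $\sum_k a_k \log(a_k / b_k) \geq \bigl(\sum_k a_k\bigr) \log\bigl(\sum_k a_k / \sum_k b_k\bigr)$, which specializes to the desired bound upon taking $a_k = p_k$ and $b_k = 1$, converting between $\log$ and $\log_2$ by a constant that cancels on both sides. Either route works; I would choose whichever is cleaner for the exposition.

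There is essentially no hard step; the only minor care needed is the handling of vanishing $p_k$'s and, if one wants to be pedantic, a brief justification that the convention $0 \log_2 0 = 0$ is consistent with the limit $\lim_{x \downarrow 0} x \log_2 x = 0$, so that Jensen applies uniformly on $[0,\infty)$. The proof should fit in a handful of lines.
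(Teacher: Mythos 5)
Your proof is correct, and it follows a genuinely different (and slightly more direct) route than the paper's. The paper first normalizes to a probability vector $q_k = p_k / S$, splits the sum as $S\log_2 S + S\sum_k q_k\log_2 q_k$, and then invokes the fact that Shannon entropy is maximized by the uniform distribution, i.e.\ $-\sum_k q_k\log_2 q_k \le \log_2 K$. You instead apply Jensen's inequality directly to the convex map $\phi(x) = x\log_2 x$ with uniform weights $1/K$, giving $\tfrac1K\sum_k\phi(p_k) \ge \phi(S/K)$ in one line, and multiply through by $K$. Both arguments rest on convexity, but yours skips the normalization and does not lean on the entropy bound as a separately quoted fact, which makes it more self-contained; the paper's version has the modest advantage of making the connection to entropy explicit, which may help the reader see where the lemma is coming from in the broader argument. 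One small remark: the preliminary reduction to positive $p_k$'s is not needed---with the convention $0\log_2 0 = 0$, the function $\phi$ is convex on all of $[0,\infty)$, so Jensen applies directly to the full list including any zeros (though your ``a fortiori'' reasoning for the reduction is itself correct).
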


\begin{proof}
Let us define the normalized values $q_k = \frac{p_k}{S}$, for all $k = 1, \dots, K$. Then, it holds that $q_k \geq 0$ and $\sum_{k=1}^K q_k = 1$, i.e., $\{q_k\}_{k=1}^K$ constitutes a discrete probability distribution. We compute:
\begin{align}
    \sum_{k=1}^K p_k \log_2 (p_k)
    &=
    \sum_{k=1}^K S \, q_k \log_2(S \, q_k) \nonumber \\
    &=
    S \sum_{k=1}^K q_k \log_2 (S)
    + S \sum_{k=1}^K q_k \log_2 (q_k) \nonumber \\
    &=
    S \log_2 (S) + S \sum_{k=1}^K q_k \log_2 (q_k) \, . \label{eq:entropy-decomp}
\end{align}
The term $\sum_{k=1}^K q_k \log_2 (q_k)$ can be identified as the (negative) Shannon entropy, which is maximized for the uniform distribution (i.e., when $q_k=1/K$). This means that:
\begin{equation*}
    -\sum_{k=1}^K q_k \log_2 (q_k)
    \leq \log_2(K)
    \implies
    \sum_{k=1}^K q_k \log_2 (q_k)
    \geq \log_2\left(\frac{1}{K}\right) \, .
\end{equation*}
Applying this last inequality to equation \eqref{eq:entropy-decomp}, we get:
\begin{align*}
    \sum_{k=1}^K p_k \log_2 (p_k)
    &\geq
    S \log_2 (S) + S \log_2\left(\frac{1}{K}\right) \\
    &=
    S \log_2 \left( \frac{S}{K} \right) \, ,
\end{align*}
as claimed.
\end{proof}

\section{Proofs from Section \ref{sec:4-controlling-delta}}
\label{appendix:B}

An important result in the study of empirical processes is the Dvoretzky–Kiefer–Wolfowitz (DKW) inequality \cite{massart1990tight}, which we recap here. If $X_1,\ldots,X_n$ are i.i.d. and $\delta_n=F-F_n$ is the difference between the true CDF $F$ and the empirical CDF $F_n$, then for any $t>0$ it holds
\begin{equation*}
    \prob{\norm{\delta_n}_\infty > t} \leq 2e^{-2n t^2} \, .
\end{equation*}
We use this result in the proof of Lemma \ref{lem:expected-norms}.

\begin{proof}[Proof of Lemma \ref{lem:expected-norms}]
    Part \eqref{lem:expected-norms-a} follows from the tail formula for expected value and the DKW inequality \cite{massart1990tight}:
    \begin{align*}
        \expecs{X}{\norm{\delta_n}_\infty}
        &=
        \int_0^\infty \pr\left( \|\delta_n\|_\infty > t \right) dt
        \leq
        \int_0^\infty 2e^{-2nt^2} dt\,.
    \intertext{Substituting $u=2t\sqrt{n},\, du=2\sqrt{n}\,dt$ we get:}
        \expecs{X}{\norm{\delta_n}_\infty}
        &\leq
        \frac{1}{\sqrt{n}} \int_0^\infty e^{-\frac{u^2}{2}} du
        =
        \sqrt{\frac{2\pi}{n}} \underbrace{
            \int_0^\infty \frac{1}{\sqrt{2\pi}} e^{-\frac{u^2}{2}} du
        }_{=\, \frac{1}{2}}
        = \sqrt{\frac{\pi}{2n}}\,,
    \end{align*}
    where the last integral equals $1/2$ because the integrand is a standard Gaussian density. For Part \eqref{lem:expected-norms-b}, we use the fact that $\mu$ is a probability measure (i.e., $\mu(\suppq)=1$):
    \begin{equation*}
        \norm{\delta_n}_\mu
        =
        \int_\suppq \abs{\delta_n(q)} d\mu
        \leq
        \sup_{q\in\suppq} \, \abs{\delta_n(q)} \int_\suppq 1 \, d\mu
        \leq
        \norm{\delta_n}_\infty \cdot \mu(\suppq)
        =
        \norm{\delta_n}_\infty\,,
    \end{equation*}
    so that $\norm{\delta_n}_\mu \leq \norm{\delta_n}_\infty$ for any realization of $X_1, \ldots, X_n$\,. Applying monotonicity of expectation:
    \begin{equation*}
        \expecs{X}{\norm{\delta_n}_\mu}
        \leq
        \expecs{X}{\norm{\delta_n}_\infty}
        \leq
        \sqrt{\dfrac{\pi}{2n}}\,.
    \end{equation*}
\end{proof}

\begin{proof}[Proof of Lemma \ref{lem:existential-bound}]
    For Part \eqref{lem:existential-bound-a} we use Lemma~\ref{lem:expected-norms}, which shows that $\expecs{X}{\norm{\delta_n}_\mu} \leq \sqrt{\frac{\pi}{2n}}$\,. The expectation of a random variable cannot be strictly smaller than every individual value. Therefore, there must exist at least one realization of $\{X_i\}_{i=1}^n$ such that $\norm{\delta_n}_\mu \leq \sqrt{\frac{\pi}{2n}}$\,. For Part \eqref{lem:existential-bound-b}, we make use of a result by Csörgő and Faraway \cite{csorgHo1996exact}. Define the Cramér–von Mises statistic as
    \begin{equation*}
        \omega_n^2
        =
        n \int (F(q) - F_n(q))^2dF \, .
    \end{equation*}
    Then, for any $t\in \left[\frac{1}{12n}, \frac{n+3}{12n^2}\right]$ it holds (\cite[Equation 2.4]{csorgHo1996exact}):
    \begin{equation}
    \label{eq:cramer-von-mises-exact}
        \prob{\omega_n^2 \leq t}
        =
        \frac{n!\pi^{n/2}}{\Gamma(n/2+1)}
        \left(t-\frac{1}{12n}\right)^{n/2}
        > 0\,.
    \end{equation}
    Now, choose $t=\frac{1}{12n} + \frac{n}{12n^2}=\frac{1}{6n}$\,. Notice that:
    \begin{equation*}
        \omega_n^2
        =
        n \int (F(q) - F_n(q))^2dF
        =
        n \norm{\delta_n^2}_\mu
        \geq
        n \norm{\delta_n}_\mu^2\,,
    \end{equation*}
    where the last step is due to Jensen's inequality. Hence, we can write:
    \begin{equation*}
        \prob{\norm{\delta_n}_\mu \leq \frac{1}{\sqrt{6} \, n}}
        =
        \prob{n \norm{\delta_n}_\mu^2 \leq \frac{1}{6n}}
        \geq
        \prob{\omega_n^2 \leq \frac{1}{6n}}
        =
        \frac{n!\pi^{n/2}}{\Gamma(n/2+1)}
        \left(\frac{1}{12n}\right)^{n/2}
        >
        0\,,
    \end{equation*}
    where the last two steps are an application of equation \eqref{eq:cramer-von-mises-exact}. Since $\big\{ \norm{\delta_n}_\mu \leq \frac{1}{\sqrt{6} \, n} \big\}$ has probability greater than $0$, there must be a realization of $\{X_i\}_{i=1}^n$ where it occurs. This concludes the proof.
\end{proof}

\section{Lower Bounds on Query Time: Formal Results and Proofs}
\label{appendix:C}

We now derive concrete lower bounds on the query time $T(q)$ under the three search strategies described in Section~\ref{sec:4-query-time-reduction}. We do this by combining the approximation-theoretic results from Propositions~\ref{prop:basic-error-lower-bound} and~\ref{prop:log-error-lower-bound} with the deviation bounds from Lemmas~\ref{lem:expected-norms} and \ref{lem:existential-bound}. We begin with the case of linear search, for which $T(q) \geq \eps(q)$. From this inequality, we obtain:

\begin{corollary}[Lower bound for linear search]
\label{thm:linear-search}
    Let $X_1, \ldots, X_n$ be i.i.d.~samples from a distribution with CDF $F$, and let $q \sim \mu$ independently from the $\{X_i\}$. Let $I(A_n)$ be a learned index from the class $\lidx_{K,\cL}$\,, then the expected query time under linear search satisfies:
    \begin{equation*}
        \expecs{X,q}{T(q)}
        \geq
        n \Big[
            R_{K,\cL}(F) - {\textstyle \sqrt{\frac{\pi}{2n}}}
        \Big]\,.
    \end{equation*}
    Moreover, if $d\mu=dF$, then there exists a realization $X_1,\ldots,X_n$ such that
    \begin{equation*}
        \hspace{5pt}\expecs{q}{T(q)}
        \geq
        n \Big[
            R_{K,\cL}(F) - {\textstyle \frac{1}{\sqrt{6} \, n}}
        \Big]\,.
    \end{equation*}
\end{corollary}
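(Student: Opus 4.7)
The plan is to assemble three ingredients already developed in Section~\ref{sec:4-new}: the pointwise inequality $T(q) \geq \eps(q)$ for linear search from Proposition~\ref{prop:search-error-translation}(a); the prediction-error lower bound of Proposition~\ref{prop:basic-error-lower-bound}, which controls $\expecs{q}{\eps(q)}$ in terms of the deterministic quantity $R_{K,\cL}(F)$ and the sampling deviation $\norm{\delta_n}_\mu$; and a control on $\norm{\delta_n}_\mu$ chosen according to the type of guarantee we want.

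For the first (full-expectation) bound, I would first take expectation over $q$ in the pointwise inequality $T(q) \geq \eps(q)$ to obtain $\expecs{q}{T(q)} \geq \expecs{q}{\eps(q)}$, and then apply Proposition~\ref{prop:basic-error-lower-bound} to get the pointwise-in-$X$ inequality
\begin{equation*}
    \expecs{q}{T(q)} \geq n\bigl[R_{K,\cL}(F) - \norm{\delta_n}_\mu\bigr].
\end{equation*}
Taking a further expectation over $X$ and invoking Lemma~\ref{lem:expected-norms}\eqref{lem:expected-norms-b}, which provides $\expecs{X}{\norm{\delta_n}_\mu} \leq \sqrt{\pi/(2n)}$, together with linearity of expectation, yields the first claim.

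For the second (mixed) bound, I would stop at the same pointwise-in-$X$ inequality and instead use Lemma~\ref{lem:existential-bound}\eqref{lem:existential-bound-b}, which is the sharper $d\mu = dF$ version and guarantees the existence of a realization of $X_1,\ldots,X_n$ with $\norm{\delta_n}_\mu \leq 1/(\sqrt{6}\, n)$. Substituting this realization into the displayed inequality above gives the sharper $1/(\sqrt{6}\, n)$ term. I do not foresee any real obstacle: the corollary is essentially a packaging of earlier results. The only subtlety worth confirming is that Proposition~\ref{prop:basic-error-lower-bound} holds uniformly over $h \in \lidx_{K,\cL}$, so its conclusion transfers without loss to the predictor induced by the specific index $I(A_n)$ appearing in the statement, and the two cases of the corollary correspond precisely to the two flavors of deviation control supplied by Lemmas~\ref{lem:expected-norms} and~\ref{lem:existential-bound}.
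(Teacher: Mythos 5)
Your proposal is correct and follows essentially the same route as the paper: both derive the pointwise-in-$X$ inequality $\expecs{q}{T(q)} \geq n[R_{K,\cL}(F) - \norm{\delta_n}_\mu]$ by combining Proposition~\ref{prop:search-error-translation}(a) with Proposition~\ref{prop:basic-error-lower-bound}, then conclude via Lemma~\ref{lem:expected-norms}\eqref{lem:expected-norms-b} for the full-expectation bound and Lemma~\ref{lem:existential-bound}\eqref{lem:existential-bound-b} for the existential $d\mu = dF$ bound. No gap or divergence from the paper's argument.
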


\begin{proof}
    From Proposition \ref{prop:search-error-translation} we know $T(q)\geq \eps(q)$. Since $q$ is independent of the $\{X_i\}$:
    \begin{align*}
        \expecs{X,q}{T(q)}
        &\geq
        \expecs{X,q}{\eps(q)}
        =
        \expecs{X}{\expecs{q}{\eps(q)}} \, .
    \intertext{Now, we lower bound $\expecs{q}{\eps(q)}$ via Proposition~\ref{prop:basic-error-lower-bound}:}
        \expecs{X,q}{T(q)}
        &\geq
        \expecbs{X}{
            n \left[R_{K,\cL}(F) - \norm{\delta_n}_\mu\right]
        } \\
        &=
        n \Big[
            R_{K,\cL}(F) - \expecbs{X}{\norm{\delta_n}_\mu}
        \Big] \\
        &\geq
        n \Big[
            R_{K,\cL}(F) - {\textstyle \sqrt{\frac{\pi}{2n}}}
        \Big] \,,
    \end{align*}
    where the last step comes from Lemma~\ref{lem:expected-norms}\eqref{lem:expected-norms-b}. This proves the first part of the corollary. The second part follows directly from Proposition~\ref{prop:basic-error-lower-bound} and Lemma~\ref{lem:existential-bound}\eqref{lem:existential-bound-b}.
\end{proof}
    
Notice that the first part of the corollary provides a stronger type of guarantee, as it holds in full expectation over both the data and query distributions ($\expecs{X,q}{\cdot}$), and applies to any query distribution. In contrast, the second part assumes $d\mu = dF$ and yields only a worst-case guarantee over the data sample. However, this stronger assumption enables a sharper convergence rate: the deviation term scales as $1/n$ rather than $1/\sqrt{n}$\,. Similar trade-offs hold for other search algorithms.

In the case of exponential search, Proposition \ref{prop:search-error-translation} gives $T(q) \geq \tT(q) = \log_2\left(\max\{2, \eps(q)\}\right)$. For notational simplicity, we will write $T(q) \geq \log_2 \eps(q)$, where $\log_2(x)$ is assumed to output $1$ for $x\leq 2$. In this regime, we can derive the following lower bounds.

\begin{corollary}[Lower bound for exponential search]
\label{thm:exponential-search}
    Let $I(A_n)$ be a learned index from the class $\lidx_{K,\cL}$. Then, there exists a realization of $\{X_i\}_{i=1}^n$ and $q\in\suppq$ such that
    \begin{equation*}
        T(q)
        \geq
        \begin{cases}
            \log_2 \left(
                n \Big[R_{K,\cL}(F) - \sqrt{\frac{\pi}{2n}}\Big]
            \right)
            & \text{for arbitrary } \mu\,, \\ \\
            \log_2 \left(
                n \Big[R_{K,\cL}(F) - \frac{1}{\sqrt{6} \, n}\Big]
            \right)
            & \text{if } d\mu = dF\,.
        \end{cases}
    \end{equation*}
    Moreover, if the assumptions from Proposition \ref{prop:log-error-lower-bound} are satisfied, then there exists a realization of $\{X_i\}_{i=1}^n$ such that the expected query time under exponential search satisfies:
    \begin{equation*}
        \expecbs{q}{T(q)}
        \geq
        C_1 \left[
            \log_2 \left(n R_{K,\cL}(F)\right) - C_2
        \right] \, ,
    \end{equation*}
    where $C_1,C_2>0$ are constants independent of $n$ and $K$.
\end{corollary}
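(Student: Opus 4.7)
The plan is to combine Proposition~\ref{prop:search-error-translation}(b), which reduces query time to prediction error via $T(q) \geq \tT(q) \geq \log_2 \eps(q)$, with the approximation-theoretic bounds of Propositions~\ref{prop:basic-error-lower-bound} and~\ref{prop:log-error-lower-bound}, together with the concentration estimates of Lemma~\ref{lem:existential-bound}. The two worst-case statements and the expected-case statement are handled separately.

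For the pointwise statements, I would start from Proposition~\ref{prop:basic-error-lower-bound}, which gives $\expecs{q}{\eps(q)} \geq n[R_{K,\cL}(F) - \norm{\delta_n}_\mu]$ for every realization of $\{X_i\}_{i=1}^n$. Invoking Lemma~\ref{lem:existential-bound}(a) in the general case and Lemma~\ref{lem:existential-bound}(b) when $d\mu = dF$ furnishes a realization of $\{X_i\}_{i=1}^n$ for which $\norm{\delta_n}_\mu$ is bounded by $\sqrt{\pi/(2n)}$ or $1/(\sqrt{6}\,n)$, respectively. Fixing such a realization, the bound $\sup_{q\in\suppq}\eps(q) \geq \expecs{q}{\eps(q)}$ (an expectation is at most a supremum) yields a query value $q\in\suppq$ with
\begin{equation*}
    \eps(q) \geq n\bigl[R_{K,\cL}(F) - \norm{\delta_n}_\mu\bigr].
\end{equation*}
Since $\log_2$ is monotone, applying Proposition~\ref{prop:search-error-translation}(b) at this $q$ gives $T(q) \geq \log_2 \eps(q)$, which delivers both displayed bounds.

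For the expected-case statement, I would proceed more directly. Under the assumptions of Proposition~\ref{prop:log-error-lower-bound}, that proposition already provides a realization of $\{X_i\}_{i=1}^n$ with
\begin{equation*}
    \expecs{q}{\tT(q)} \geq C_1\bigl[\log_2(nR_{K,\cL}(F)) - C_2\bigr].
\end{equation*}
Combining with $T(q) \geq \tT(q)$ from Proposition~\ref{prop:search-error-translation}(b) and taking expectation in $q$ finishes this part.

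The only subtlety, and the thing to be careful about, is the quantifier structure in the pointwise case: Proposition~\ref{prop:basic-error-lower-bound} controls only $\expecs{q}{\eps(q)}$, not $\eps(q)$ itself, so the existential over $q$ must be extracted via the ``expectation $\leq$ supremum'' step rather than by pointwise reasoning. One should also verify that the right-hand side of each displayed bound is nonnegative (otherwise the logarithm is vacuous), which simply amounts to assuming $nR_{K,\cL}(F)$ dominates the corresponding deviation term; in the regime where the bound is informative this is automatic. No separate argument is needed for the ``$\log_2 = 1$ for $x \leq 2$'' convention introduced before the statement, since it only strengthens the inequality $T(q) \geq \log_2 \eps(q)$.
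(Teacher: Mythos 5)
Your proposal is correct and follows essentially the same route as the paper: Proposition~\ref{prop:basic-error-lower-bound} plus Lemma~\ref{lem:existential-bound}(a)/(b) to fix a favorable realization, extraction of a single witnessing $q$ from the expectation bound, and then $T(q)\geq\log_2\eps(q)$ from Proposition~\ref{prop:search-error-translation}(b); the expected-case part is, as you say, an immediate application of Proposition~\ref{prop:log-error-lower-bound} via $T(q)\geq\tT(q)$. Your explicit remarks on the quantifier structure and on the $\max\{2,\cdot\}$ convention are sound clarifications that the paper leaves implicit.
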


\begin{proof}
    We know $T(q)\geq \tT(q)$ (Proposition \ref{prop:search-error-translation}). Hence, the second part of the corollary is a direct application of Proposition \ref{prop:log-error-lower-bound}. For the first part, consider first an arbitrary $\mu$. Then by Lemma \ref{lem:existential-bound}\eqref{lem:existential-bound-a}, there is a realization of $\{X_i\}_{i=1}^n$ such that $\norm{\delta_n}_\mu \leq\sqrt{\frac{\pi}{2n}}$. At the same time, by Proposition \ref{prop:basic-error-lower-bound}:
    \begin{equation*}
        \expecbs{q}{\eps(q)} 
        \geq
        n \Big[R_{K,\cL}(F) - \norm{\delta_n}_\mu\Big]
        \geq
        n \Big[R_{K,\cL}(F) - {\textstyle\sqrt{\frac{\pi}{2n}}}\Big]
        \,,
    \end{equation*}
    so there exists at least one $q$ for which $\eps(q) \geq n \left[R_{K,\cL}(F) - \sqrt{\frac{\pi}{2n}}\right]$. Combining this with $T(q) \geq \log_2 \eps(q)$ proves the result. The case $d\mu=dF$ is proven in the same way, appealing to case \eqref{lem:existential-bound-b} of Lemma \ref{lem:existential-bound} to argue that there exists a realization of $\{X_i\}_{i=1}^n$ such that $\norm{\delta_n}_\mu \leq\frac{1}{\sqrt{6} \, n}$\,.
\end{proof}

Finally, for binary search, we consider the worst-case query time over both the data and the query value $q \in \suppq$. The cost is determined by the worst-case prediction error, as stated by Proposition \ref{prop:search-error-translation}.

\begin{corollary}[Lower bound for binary search]
    \label{thm:binary-search}
    Let $I(A_n)$ be a learned index from the class $\lidx_{K,\cL}$. Then, there exists a realization of $\{X_i\}_{i=1}^n$ and $q\in\suppq$ such that
    \begin{equation*}
        {\textstyle \sup_q T(q)}
        \geq
        \begin{cases}
            \log_2 \left(
                n \Big[R_{K,\cL}(F) - \sqrt{\frac{\pi}{2n}}\Big]
            \right)
            & \text{for arbitrary } \mu\,, \\ \\
            \log_2 \left(
                n \Big[R_{K,\cL}(F) - \frac{1}{\sqrt{6} \, n}\Big]
            \right)
            & \text{if } d\mu = dF\,.
        \end{cases}
    \end{equation*}
\end{corollary}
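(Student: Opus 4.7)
The plan is to mirror the first part of the proof of Corollary~\ref{thm:exponential-search}, with Proposition~\ref{prop:search-error-translation}(b) replaced by Proposition~\ref{prop:search-error-translation}(c). That part gives $\sup_q T(q) \geq \log_2(\sup_q \eps(q))$, so since $\log_2$ is monotone it suffices to produce, for some realization of $\{X_i\}_{i=1}^n$, a lower bound on $\sup_{q\in\suppq} \eps(q)$ matching the right-hand side of the claim without the outer logarithm.

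The workhorse reduction is the trivial inequality $\sup_{q \in \suppq} \eps(q) \geq \expecs{q}{\eps(q)}$, valid for any probability measure $\mu$ on $\suppq$ and any realization of the keys. Chaining this with Proposition~\ref{prop:basic-error-lower-bound} yields
\begin{equation*}
    \sup_{q\in\suppq} \eps(q) \;\geq\; \expecs{q}{\eps(q)} \;\geq\; n\Big[R_{K,\cL}(F) - \norm{\delta_n}_\mu\Big]
\end{equation*}
for every realization of $\{X_i\}_{i=1}^n$, provided $\cL$ is invariant under scalar multiplication (which is a standing assumption on our model classes $P_m$). This reduces the corollary to producing a single realization of the sample that makes $\norm{\delta_n}_\mu$ sufficiently small for each of the two regimes.

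For the arbitrary-$\mu$ case, I would invoke Lemma~\ref{lem:existential-bound}\eqref{lem:existential-bound-a} to select a realization with $\norm{\delta_n}_\mu \leq \sqrt{\pi/(2n)}$; for the matched case $d\mu = dF$, Lemma~\ref{lem:existential-bound}\eqref{lem:existential-bound-b} supplies a realization with $\norm{\delta_n}_\mu \leq 1/(\sqrt{6}\,n)$. Plugging each bound into the displayed inequality and applying $\log_2$ to both sides delivers the two stated lower bounds, noting that the existence of some $q\in\suppq$ witnessing the $\sup$ (up to $\eps$ if necessary) follows from the definition of supremum.

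I do not expect any real obstacle: the argument is a mechanical chaining of Proposition~\ref{prop:search-error-translation}(c), the $\sup \geq \text{average}$ inequality, Proposition~\ref{prop:basic-error-lower-bound}, and Lemma~\ref{lem:existential-bound}, exactly parallel to the first half of the exponential-search corollary. The only conceptual point worth flagging is that the binary-search bound is intrinsically a worst-case-over-$q$ statement, which is why $\sup_q T(q)$ appears on the left instead of $\expecs{q}{T(q)}$; this matches the fact that the window size of binary search is calibrated to $\sup_q \eps(q)$ rather than to any average.
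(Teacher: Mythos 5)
Your proposal is correct and takes essentially the same route as the paper's proof: chain Proposition~\ref{prop:search-error-translation}(c) with the reduction $\sup_q \eps(q) \geq \expecs{q}{\eps(q)}$, then apply Proposition~\ref{prop:basic-error-lower-bound} and Lemma~\ref{lem:existential-bound}\eqref{lem:existential-bound-a}/\eqref{lem:existential-bound-b} to pick a favorable realization. The paper phrases the $\sup \geq \text{average}$ step as ``there exists at least one $q$ with $\eps(q)$ at least the average,'' but that is the same observation.
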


\begin{proof}
    Consider first the case of arbitrary $\mu$. By Lemma \ref{lem:existential-bound}\eqref{lem:existential-bound-a}, there exists a realization of $\{X_i\}_{i=1}^n$ such that $\norm{\delta_n}_\mu \leq\sqrt{\frac{\pi}{2n}}$\,. At the same time, by Proposition \ref{prop:basic-error-lower-bound}, we know
    \begin{equation*}
        \expecbs{q}{\eps(q)} 
        \geq
        n \Big[R_{K,\cL}(F) - \norm{\delta_n}_\mu\Big]
        \geq
        n \Big[R_{K,\cL}(F) - {\textstyle\sqrt{\frac{\pi}{2n}}}\Big]\,.
    \end{equation*}
    Hence, there exists at least one $q$ for which $\eps(q) \geq n \left[R_{K,\cL}(F) - \sqrt{\frac{\pi}{2n}}\right]$, which implies
    \begin{equation*}
        {\textstyle \sup_q \eps(q)}
        \geq
        n \Big[R_{K,\cL}(F) - {\textstyle\sqrt{\frac{\pi}{2n}}}\Big]\,.
    \end{equation*}
    Combining this with $\sup_q T(q) \geq \log_2 (\sup_q\eps(q))$ (Proposition \ref{prop:search-error-translation}) proves the result. The case $d\mu=dF$ is proven in the same way, appealing to case \eqref{lem:existential-bound-b} of Lemma \ref{lem:existential-bound} to argue that there exists a realization of $\{X_i\}_{i=1}^n$ such that $\norm{\delta_n}_\mu \leq\frac{1}{\sqrt{6} \, n}$\,.
\end{proof}

These results flesh out the general framework developed in Section~\ref{sec:4-new}, and express concrete lower bounds on query time in terms of $n$, the number of keys, and $R_{K,\cL}(F)$, the best approximation error achievable with a given model class.

\section{Piecewise Constant Approximation: Mismatched Dataset and Query Distributions}

\label{sse:constant_different_distribution}

In contrast to matching dataset and query distributions $f=g$ considered in Section~\ref{sse:constant_same_distribution}, the case $f \neq g$ is relatively more involved; here too, however, the problem of approximating $F$ can be reduced to optimal quantization. The only difference is that the quantized random variable is no longer uniformly distributed, and our results hold asymptotically for $K\to\infty$ as opposed to any given $K$ as in Section~\ref{sse:constant_same_distribution}.

The results in this section assume that $F$ is invertible. This is a relatively mild condition, which holds for most standard examples of continuous distributions. Moreover, this assumption can be relaxed: if $F$ is only invertible on a subinterval $[a, b] \subseteq \suppq$, the analysis still applies by conditioning the query distribution to $[a,b]$. This affects only the scaling constants, through changes in the conditional query distribution, but preserves the asymptotic behavior in $K$. For clarity of exposition, we focus on the case where $F$ is invertible over all of $\suppq$.

\begin{proposition}[Reduction to Quantization]
\label{prop:quantization_equivalence}
    Let $h\in \lidx_{K, P_0}$ and $Y = F(Q)$ with $Q \sim \mu$ and $d\mu = g \, dq$. Then
    \begin{equation*}
        \norm{F - h}_{\mu,r} \defeq \int_\suppq
    \lvert F(q) - h(q) \rvert^r g(q) dq = \mathbb{E}[|Y - Q_K(Y)|^r]  \,.
    \end{equation*}
\end{proposition}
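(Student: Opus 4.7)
The plan is to mirror the reduction to quantization used in Section~\ref{sse:constant_same_distribution}, but to carry the query density $g$ through the change of variables rather than have it cancel against $f$. Since $h \in \lidx_{K,P_0}$, I would first write $h(q) = \sum_{k=1}^K c_k \indf_{I_k}(q)$ for some partition $\{I_k\}_{k=1}^K$ of $\suppq$, and define $J_k = F(I_k)$ together with $Q_K(u) = \sum_{k=1}^K c_k \indf_{J_k}(u)$, exactly as in the $f=g$ case. Because $F$ is assumed invertible on $\suppq$ (and, as a CDF with density, continuous and strictly increasing there), each $J_k$ is a genuine subinterval and $\{J_k\}_{k=1}^K$ partitions $F(\suppq)$, so $Q_K$ is a well-defined piecewise constant function with $K$ levels.

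Next I would split the integral along this partition and apply the substitution $u = F(q)$, i.e.\ $q = F^{-1}(u)$ with $dq = (F^{-1})'(u)\,du = du / f(F^{-1}(u))$, on each piece. This yields
\begin{equation*}
    \int_\suppq |F(q) - h(q)|^r g(q)\,dq \;=\; \sum_{k=1}^K \int_{J_k} |u - c_k|^r \, g(F^{-1}(u))\,(F^{-1})'(u)\,du.
\end{equation*}
The key observation is that the function $h_Y(u) = g(F^{-1}(u))\,(F^{-1})'(u)$ is precisely the density of $Y = F(Q)$ when $Q \sim \mu$, by the standard change-of-variables formula for densities on the real line. Consequently, the right-hand side collapses to $\int |u - Q_K(u)|^r h_Y(u)\,du = \mathbb{E}[|Y - Q_K(Y)|^r]$, which is the claim. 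As a sanity check, specialising to $g = f$ gives $h_Y \equiv 1$ and so $Y$ is uniform on $[0,1]$, recovering equations~(\ref{eq:norm_constant_uniform})--(\ref{eq:norm_constant_uniform_expect}).

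The only real subtlety is the regularity needed to legitimise the substitution on each $I_k$: we need $F$ absolutely continuous and strictly increasing on $\suppq$ so that $F^{-1}$ is differentiable almost everywhere, and $f > 0$ there so that $(F^{-1})'$ is well-defined. Both follow from the standing assumptions that $F$ admits a density and is invertible on $\suppq$. Once the change of variables is justified, the argument is essentially a one-line calculation, and I do not expect any further obstacles.
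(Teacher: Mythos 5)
Your proof is correct and takes essentially the same approach as the paper: you expand $h$ piecewise, substitute $u = F(q)$, and recognise $g(F^{-1}(u))\,(F^{-1})'(u) = g(F^{-1}(u))/f(F^{-1}(u))$ as the density of $Y = F(Q)$. The only cosmetic difference is that you express the Jacobian as $(F^{-1})'(u)$ rather than $1/f(q)$ directly, and you make the invertibility assumption explicit up front, which the paper also does (just after the proposition statement).
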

\begin{proof}
    As in~(\ref{eq:norm_constant_uniform}), we expand $\norm{F - h}_{\mu,r}$ as
    \begin{equation*}
\label{eq:norm_constant_uniform_mismatched}
    \norm{F - h}_{\mu,r}
    =
    \sum_{k=1}^K
    \int_{I_k} \lvert F(q) - c_k \rvert^r g(q) dq\,.
\end{equation*}
Introducing $y = F(q),~dy = f(q)dq,~q=F^{-1}(y),$ and $dq = 1/f(q) \, dy$, we rewrite this as
\begin{equation*}
    \sum_{k=1}^K
    \int_{J_k} \lvert y - c_k \rvert^r \cdot \frac{g\left(F^{-1}(y)\right)}{f\left(F^{-1}(y)\right)} dy = \mathbb{E}[|Y - Q_K(Y)|^r] \,,
\end{equation*}
 where the last equality follows because we recognize that the second term in the integral is the probability density function (PDF) of $Y = F(Q)$ when $Q$ distributes according to the PDF $g$:
\begin{equation}
\label{eq:pdf_of_a_function}
    f_Y(y) = g\left(F^{-1}(y)\right) \cdot \left| \frac{d}{dy} F^{-1}(y)\right| = g\left(F^{-1}(y)\right) \cdot \left| \frac{1}{F'\left( F^{-1} (y) \right)} \right| = \frac{g\left(F^{-1}(y)\right)}{f\left(F^{-1}(y)\right)} \,;
\end{equation}
the first equality expands the density of a function of a random variable, and the second the derivative of an inverse function.
We remark that this requires the CDF $F$ to be invertible.
\end{proof}

Proposition~\ref{prop:quantization_equivalence} implies that approximating $F$ reduces to quantizing a random variable $Y=F(Q)$ distributed according to~(\ref{eq:pdf_of_a_function}) over $[0,1]$. When $f \neq g$, this distribution is no longer uniform, and uniform quantization is no longer optimal. To carry on with our visual analogy in Figure ~\ref{fig:cdf_figure}, one can imagine the density of the green dashed lines below $F$ no longer proportional to $f$, and therefore the ones above $F$ no longer uniform, in which case the approximating values $\{ c_k \}$ would have to be shifted accordingly.

We are now in a position to leverage an important result in quantization theory. The asymptotic quantization error is known to satisfy~\cite[Theorem 6.2]{graf2000foundations}
\begin{equation*}
    \lim_{K \to \infty} K^r ~ T_{K,r}(Y) = \frac{1}{(1+r)2^r} \left( \int f_Y(y)^{1 / (1+r)} dy \right)^{1+r}\,.
\end{equation*}
The right-hand side of the equality is known as the $r$-th quantization coefficient and is a fundamental property of a probability distribution, similar to its moments; quantization coefficients play an important role in quantization theory. Therefore, for the absolute error
\begin{equation*}
    \lim_{K \to \infty} K ~ R_{K,P_0}(F) = \lim_{K \to \infty} K ~ T_{K,1}(Y) = \frac{1}{4} \left( \int \sqrt{f_Y(y)} dy \right)^2\,.
\end{equation*}
In other words, asymptotically, the absolute error decreases linearly with $K$, just like in the case of the matching distribution considered in Section~\ref{sse:constant_same_distribution}, yielding $R_{K,\cL}(F) = \Omega\left(1/K\right)$ when $\cL = P_0$\,.


\section{Proofs from Section \ref{sec:7}}
\label{appendix:D}
\begin{proof}[Proof of Lemma \ref{lemma:width-inequality}]
\label{appendix:D-proof-fact}
    $R_{K, \cL}(F)$ can be written as
    \begin{align*}
        R_{K, \cL}(F)
        &=
        \inf_{h \in \lidx_{K,\cL}} \norm{F - h}_\mu
        =
        \inf_{\cP\in\mathbf{P}_K(\suppq)}
        \inf_{h \in \lidx_{\cP,\cL}} \norm{F - h}_\mu\,,
    \intertext{
        where $\inf_{\cP\in\mathbf{P}_K(\suppq)}$ is taken over the set $\mathbf{P}_K(\suppq)$ of all partitions of $\suppq$ into $K$ intervals. By definition of infimum, for any $t>0$ there exists $\cP_t \in \mathbf{P}_K(\suppq)$ such that
    }
        R_{K, \cL}(F)
        &\geq
        \inf_{h \in \lidx_{\cP_t,\cL}} \norm{F - h}_\mu - t \, .
    \intertext{Taking supremum with respect to $F$:}
        \sup_{F\in\cF} R_{K, \cL}(F)
        &\geq
        \sup_{F\in\cF} \Inf_{h \in \lidx_{\cP_t,\cL}} \norm{F - h}_\mu - t \, .
    \intertext{
        Now, $\lidx_{\cP_t,\cL}$ is a linear space of dimension at most $\kappa = K\cdot\dim(\cL)$. Hence, the last expression can be lower bounded by taking infimum over all such spaces:
    }
        \sup_{F\in\cF} R_{K, \cL}(F)
        &\geq
        \Inf_{H \in \cH_\kappa} \sup_{F\in\cF} \Inf_{h \in H}
        \norm{F - h}_\mu - t
        = d_\kappa(\cF) - t \,.
    \end{align*}
    Notice that the $\inf$-$\sup$-$\inf$ sequence now has no dependence on $t$. Therefore, since the inequality holds for all $t > 0$, it holds for $t=0$ too, by standard properties of the real numbers.
\end{proof}

\subsection{Upper Bounds for Piecewise Linear Approximation}
\label{appendix:D-upper-bounds}

We briefly sketch standard upper bounds for the approximation error $R_{K,\cL}(F)$ when $\cL = P_1$, the class of degree-one polynomials, under two common regularity assumptions on the target function $F$. Throughout, we assume the domain is $[0,1]$, the measure $\mu$ is the Lebesgue measure, and the interval is divided into $K$ uniform subintervals $[a_i, a_{i+1}]$, where $a_i = \frac{i-1}{K}$ and $a_{i+1} = \frac{i}{K}$\,. We denote the length of each subinterval by $\ell = 1/K$.

\subparagraph{Case 1: $F$ is twice continuously differentiable.}
Let us approximate $F$ over each interval $[a_i, a_{i+1}]$ using a linear function based on a first-order Taylor expansion:
\begin{equation*}
    L_i(x) = F(a_i) + (x - a_i) F'(a_i)\,.
\end{equation*}
Since $F$ is $C^2([0,1])$, we can bound the pointwise error using Taylor's theorem:
\begin{equation*}
    \abs{F(x) - L_i(x)} \leq \frac{1}{2} M \cdot (x - a_i)^2\,,
\end{equation*}
where $M=\max_{x \in [0, 1]} \abs{F''(x)}$. We know that $M$ exists and is bounded because $F''$ is continuous and $[0,1]$ is a compact interval. Integrating this over each interval yields:
\begin{equation*}
    \int_{a_i}^{a_{i+1}} \abs{F(x) - L_i(x)} dx
    \leq
    \frac{1}{2} M \int_{a_i}^{a_{i+1}} (x - a_i)^2 dx
    =
    \frac{1}{6} M \ell^3\,.
\end{equation*}
Finally, summing over all $K$ intervals gives:
\begin{equation*}
    \norm{F - h}_{\mu} \leq K \cdot \frac{1}{6} M \ell^3 = O(1/K^2)\,.
\end{equation*}

\subparagraph{Case 2: $F$ is $M$-Lipschitz.}
Let us now approximate $F$ using linear interpolation between endpoints:
\begin{equation*}
    L_i(x) = F(a_i) + \frac{F(a_{i+1}) - F(a_i)}{a_{i+1} - a_i} \cdot (x - a_i)\,.
\end{equation*}
By the Lipschitz property, for any $x \in [a_i, a_{i+1}]$ we have:
\begin{align*}
    \abs{F(x) - L_i(x)}
    &=
    \Big\lvert
        F(x) - F(a_i)
        - \frac{F(a_{i+1}) - F(a_i)}{a_{i+1} - a_i} (x - a_i)
    \Big\rvert \\
    &=
    \Big\lvert
        \frac{F(x) - F(a_i)}{x - a_i}
        - \frac{F(a_{i+1}) - F(a_i)}{a_{i+1} - a_i}
    \Big\rvert \cdot \abs{x - a_i} \\
    &\leq
    \left(
        \Big\lvert \frac{F(x) - F(a_i)}{x - a_i} \Big\rvert
        + \Big\lvert \frac{F(a_{i+1}) - F(a_i)}{a_{i+1} - a_i} \Big\rvert
    \right) \cdot \abs{x - a_i} \\
    &\leq (M + M) \abs{x - a_i}\,.
\end{align*}
Therefore, the absolute error over each interval is at most $2M\abs{x - a_i}$. Integrating over each interval:
\begin{equation*}
    \int_{a_i}^{a_{i+1}} \abs{F(x) - L_i(x)} dx
    \leq
    2M \int_{a_i}^{a_{i+1}} \abs{x - a_i} dx
    =
    M \ell^2\,.
\end{equation*}
Finally, summing over all $K$ intervals gives:
\begin{equation*}
    \norm{F - h}_{\mu} \leq K \cdot M\ell^2 = O(1/K)\,.
\end{equation*}

These asymptotic rates for $R_{K,P_1}(F)$---i.e., $O(1/K)$ under Lipschitz continuity and $O(1/K^2)$ under $C^2$ assumption---hold more generally for functions defined on any compact interval, and under any measure $\mu$ that has a bounded density. The specific constants may depend on the domain and the density of $\mu$, but the convergence rates remain unaffected. Thus, piecewise linear functions can provide different approximation guarantees under specific regularity assumptions.

\subsection{Construction of Adversarial CDF with Respect to \texorpdfstring{$K$}{K}}
\label{appendix:D-adversarial-cdf}

We will illustrate the case $\suppq = [0,1]$, with $\mu$ being the uniform distribution. This will allows us to perform some exact calculations, but the asymptotic properties we derive in this section, and the argument we use to prove them, apply for very general choices of $\suppq$ and $\mu$. Consider the function $F(x)=x^2$ and the optimal approximation error $R_{1,P_1}(F)$ by a single linear function. Notice that $x^2$ over $[0,1]$ is a valid CDF. From \cite[Corollary 3.4.1]{rivlin1981introduction}, it is easy to deduce that
\begin{equation}
\label{eq:optimal-error-x-squared}
    R_{1,P_1}(F) = 1/16\,,
\end{equation}
which is attained by $h(x)=x-3/16$.
Now, take a positive integer $M$. The intuition is that we can construct a staircase-type function $F_M$, with $M$ total steps and each step $F_M^i$ being equal to $F(x)=x^2$ after appropriate translation and scaling, such that the full $F_M$ constitutes a valid CDF and is hard to approximate. For any $i=0,\ldots,M-1$ define $F_M^i$ as:
\begin{align*}
    F_M^i(x)
    &=
    \frac{1}{M}\left[
        M\left(x - \frac{i}{M}\right)
    \right]^2 + \frac{i}{M}
    \quad\text{for}\quad
    x\in I_i=\left[\frac{i}{M},\frac{i+1}{M}\right].
\intertext{Now, the full $F_M$ is defined as}
    F_M(x)
    &=
    \sum_{i=0}^{M-1} F_M^i(x)\indf_{I_i}(x)
    \quad\text{for}\quad
    x\in [0,1]\,.
\end{align*}
It is easy to check that $F_M$ is continuous, increasing, and satisfies $F(0)=0$, $F(1)=1$, thus being a valid CDF on $[0,1]$. Consider now any linear approximation $L_i\in P_1$ for the function $F_M^i$ over the interval $I_i$. The approximation error is then given by:
\begin{align}
    E_i
    &=
    \int_{\frac{i}{M}}^{\frac{i+1}{M}}
    \abs{F_M^i(x) - L_i(x)}dx \nonumber \\
    &= 
    \int_{\frac{i}{M}}^{\frac{i+1}{M}}\left\lvert
        \frac{1}{M}\left[M\left(x - \frac{i}{M}\right)\right]^2
        + \frac{i}{M} - L_i(x)
    \right\rvert dx\,. \nonumber
\intertext{Substituting $u=M(x-i/M),~du=Mdx$:}
    &= 
    \frac{1}{M}\int_0^1\left\lvert
        \frac{u^2}{M} + \frac{i}{M} - L_i\left(\frac{u+i}{M}\right)
    \right\rvert du \nonumber \\
    &= 
    \frac{1}{M^2}\int_0^1\left\lvert
        u^2 + i - M L_i\left(\frac{u+i}{M}\right)
    \right\rvert du\,. \nonumber \\
\intertext{We focus now on the integral term. Since $i-M L_i((u+1)/M)\in P_1$, it cannot induce lower error when approximating $u^2$ than the optimal approximator $h(u)=u-3/16$, hence by equation \eqref{eq:optimal-error-x-squared}:}
    E_i
    &\geq 
    \frac{1}{16M^2}\,. \label{eq:error-one-step}
\end{align}
We now use this to derive a lower bound on $R_{K,P_1}(F_M)$. Since $F_M$ consists of $M$ disjoint quadratic steps, each over an interval of length $1/M$, a $K$-piecewise linear function can assign at most $K$ linear segments across all $M$ sub-parabolas. Therefore, at least $M - K + 1$ sub-parabolas must be covered using only one linear piece each, and on each corresponding interval, bound \eqref{eq:error-one-step} applies. Summing over these $M - K + 1$ intervals, we obtain
\begin{align*}
    R_{K,P_1}(F_M)
    &\geq
    \frac{M-K+1}{16M^2} \, .
\intertext{This expression is maximized when $M = 2(K-1)$, yielding}
    R_{K,P_1}(F_{2(K-1)})
    &\geq
    \frac{1}{64(K-1)}
    =
    \Omega\left( \frac{1}{K} \right).
\end{align*}

This construction shows that even when $F$ is a valid CDF, it is possible to exhibit $\Omega(1/K)$ lower bounds for approximation using $K$-piecewise linear functions. In particular, the worst-case behavior of $R_{K,P_1}(F)$ is no better than that of piecewise constant models, despite the greater expressive power of linear segments. This highlights the critical role of regularity assumptions on $F$: without sufficient smoothness, more expressive models do not necessarily yield improved approximation rates.

\subsection{Kolmogorov Widths for Lipschitz CDFs}
\label{appendix:D-lipschitz-cdf}

Without loss of generality, we consider $[a,b]=[0,1]$ and restrict our attention to the class $\mathcal{F}$ of Lipschitz continuous functions with Lipschitz constant $1$, that is:
\begin{equation*}
    \cF
    =
    \left\{ F \in L^1(\mu) : \abs{F(x)-F(y)} \leq \abs{x-y} \right\}.
\end{equation*}
Let $d_\kappa(\mathcal{F})$ denote the Kolmogorov width of $\mathcal{F}$ in $L^1(\mu)$, then:
\begin{equation}
\label{eq:kolmogorov-width-def}
    d_\kappa(\cF)
    =
    \Inf_{H\in\cH_\kappa}
    \sup_{F\in\cF} \Inf_{h \in H} \norm{F - h}_\mu
    = \Omega(1/\kappa)
    \, ,
\end{equation}
where $d_\kappa(\mathcal{F}) = \Omega(1/\kappa)$ comes from classic results referenced in Section \ref{sec:7-n-widths}. However, not all functions in $\mathcal{F}$ are valid cumulative distribution functions (CDFs), as they need not be nonnegative or monotonically increasing. We now show that from any $F \in \mathcal{F}$, we can construct a valid CDF $\widehat{F}$ that remains equally hard to approximate up to constants.

For notational simplicity, denote the distance from $F$ to the set $H$ as $\norm{F-H}_\mu$, that is, $\norm{F-H}_\mu = \inf_{h \in H} \norm{F - h}_\mu$. Notice that the outermost infimum in equation \eqref{eq:kolmogorov-width-def} is taken over all linear subspaces $H\subseteq L^1(\mu)$ of dimension at most $\kappa$. Hence, the bound applies to any such space. Consider then any $H^\ast$ which includes at least all constant and linear functions (this can always be done as long as $\kappa\geq 2$, since $\spn\{1,x\}$ is an example of such a space). Now, fix $t>0$. By equation \eqref{eq:kolmogorov-width-def} and the definition of supremum, there exists $F$ such that
\begin{equation*}
    \norm{F-H^\ast}_\mu
    \geq
    \frac{C}{\kappa} - t\,,
\end{equation*}
where $C>0$ is some positive constant independent of $t$.

\subparagraph{Step 1: Constructing a strictly increasing function.}
\label{appendix:D-step-1}
It is well-known that $\cF$ coincides with the Sobolev ball $W^{1,\infty}([0,1])$ of weakly-differentiable functions, such that their weak derivatives are essentially bounded by $1$ \cite[Chapter 5.8, Theorem 4]{evans10}. This implies that $F'\geq -1$ up to a set of measure $0$. Now, define
\begin{align*}
    \widetilde{F}(x)
    &= F(x) + 2x\,.
\intertext{
    Then, $\widetilde{F}$ is monotonic on $[0,1]$, because its weak derivative $\widetilde{F}'$ is positive:
}
    \widetilde{F}'(x)
    &= F'(x) + 2 \geq -1 + 2 = 1\,.
\end{align*}
Therefore, $\widetilde{F}$ is strictly increasing, and it is continuous since both $F$ and $2x$ are Lipschitz continuous.

\subparagraph{Step 2: Rescaling to a valid CDF.}
\label{appendix:D-step-2}
We now normalize $\widetilde{F}$ to obtain a function $\widehat{F}$ satisfying both $\widehat{F}(0)=0$ and $\widehat{F}(1)=1$:
\begin{equation*}
    \widehat{F}(x)
    =
    \frac{ \widetilde{F}(x) - \widetilde{F}(0) }
         { \widetilde{F}(1) - \widetilde{F}(0) }
    = \frac{ F(x) + 2x - F(0) }{ F(1) + 2 - F(0) }\,.
\end{equation*}
$\widehat{F}$ is strictly increasing and continuous in $[0,1]$, because $ \widetilde{F}$ is. Hence, it is a valid CDF. Moreover, we can upper and lower bound the scaling constant:
\begin{equation}
\label{eq:normalizing-constant-bounds}
    (\widetilde{F}(1) - \widetilde{F}(0))
    =
    (F(1) + 2 - F(0))
    \in
    [1, 3]\,,
\end{equation}
since  $(F(1) - F(0)) \in [-1, 1]$ by the $1$-Lipschitz property.

\subparagraph{Step 3: Preserving approximation difficulty.} Recall that
\begin{align}
    \norm{F-H^\ast}_\mu
    &\geq
    \frac{C}{\kappa} - t\,. \nonumber
\intertext{Now, denote $M=F(1) + 2 - F(0)$ and consider}
    \norm{ \widehat{F}-H^\ast }_\mu
    &=
    \left\lVert
        \frac{ F + 2x - F(0) }{M} - H^\ast
    \right\rVert_\mu
    =
    \frac{1}{M}
    \norm{F + 2x - F(0) - M\cdot H^\ast}_\mu \, . \nonumber
\intertext{Since $H^\ast$ is a linear subspace, it is invariant under scalar multiplication, so that $M\cdot H^\ast = H^\ast$ provided $M\neq 0$. Furthermore, it also holds that $2x - F(0) - M\cdot H^\ast = H^\ast$, because we have assumed that $H^\ast$ spans the set of constant and linear functions. By equation \eqref{eq:normalizing-constant-bounds}, we know that $M\leq 3$, so all in all we can conclude that:}
    \norm{ \widehat{F}-H^\ast }_\mu
    &=
    \frac{1}{M}
    \norm{F - H^\ast}_\mu
    \geq
    \frac{1}{3}
    \norm{F - H^\ast}_\mu
    \geq
    \frac{C}{3\kappa} - \frac{t}{3}\,. \label{eq:kolmogorov-width-for-cdf}
\intertext{
    Let $\widehat{\cF} = \{ \widehat{F} \text{ constructed from } F\in \mathcal{F} \text{ as described in Steps \hyperref[appendix:D-step-1]{1} and \hyperref[appendix:D-step-2]{2}} \}$. Then, taking supremum over that class in equation \eqref{eq:kolmogorov-width-for-cdf} we get:
}
    \sup_{\widehat{F}\in \widehat{\cF}} \norm{ \widehat{F}-H^\ast }_\mu
    &\geq
    \frac{C}{3\kappa} - \frac{t}{3}\,, \nonumber
\intertext{
    where $C$ does not depend on $t$. Since this holds for any $t > 0$, it holds for $t=0$ as well. Now, taking infimum with respect to all linear spaces of dimension $\kappa$, we can conclude:
}
    d_\kappa(\widehat{\cF})
    &\geq
    \frac{C}{3\kappa}
    =
    \Omega\left(\frac{1}{\kappa}\right). \nonumber
\end{align}
This construction shows that any lower bound on approximation for Lipschitz continuous functions extends to a lower bound for valid CDFs, up to a constant multiplicative factor. Therefore, all conclusions from Section \ref{sec:7-n-widths} remain valid even when restricted specifically to CDFs.

\end{document}